\newtheorem{definition}{Definition}
\newtheorem{lemma}{Lemma}
\newtheorem{proposition}{Proposition}
\newtheorem*{proposition*}{Proposition}
\newcommand{\ie}{\textit{i.e.}\ }
\newcommand{\eg}{\textit{e.g.}\ }
\newcommand{\bbb}[1]{\textbf{#1}}
\newcommand{\tmom}{t}
\newcommand{\fu}{Dahlem Center for Complex Quantum Systems, Freie Universit\"{a}t Berlin, 14195 Berlin, Germany}
\newcommand{\hzb}{Helmholtz-Zentrum Berlin f{\"u}r Materialien und Energie, 14109 Berlin, Germany}
\newcommand{\hhi}{Fraunhofer Heinrich Hertz Institute, 10587 Berlin, Germany}
\begin{document}

\title{Learning fermionic correlations by evolving with random\\ translationally invariant Hamiltonians}
\date{\today}

\author{Janek Denzler}
\email[]{denzlerj@pm.me}
\affiliation{\fu}

\author{Antonio Anna Mele}
\affiliation{\fu}

\author{Ellen Derbyshire}
\affiliation{\fu}

\author{Tommaso Guaita}
\email[]{tommaso.guaita@fu-berlin.de}
\affiliation{\fu}

\author{Jens Eisert}
\affiliation{\fu}
\affiliation{\hzb}
\affiliation{\hhi}

\begin{abstract}
Schemes of classical shadows have been developed to facilitate the read-out of digital quantum devices, but similar tools for analog quantum simulators are scarce and experimentally impractical. In this work, 
we provide a measurement scheme for fermionic quantum devices that estimates
second and fourth order correlation functions by means of free fermionic, translationally invariant evolutions --- or quenches --- and measurements in the mode occupation number basis. We precisely characterize what correlation functions can be recovered and equip the estimates with rigorous bounds on sample complexities, a particularly important feature in light of the difficulty of getting good statistics in reasonable experimental platforms, with measurements being slow. Finally, we demonstrate how our procedure can be approximately implemented with just nearest-neighbour, translationally invariant hopping quenches, a very plausible procedure under current experimental requirements, and requiring only random time-evolution with respect to a single native Hamiltonian. On a conceptual level, this work brings 
the idea of classical shadows to the realm of large scale analog quantum simulators.
\end{abstract}

\maketitle
The last decade has seen tremendous theoretical and experimental development of quantum devices, with hopes of using such machines to infer properties of physical systems beyond the limits of classical computation.
Of particular interest is understanding properties of fermionic systems, given their applications in condensed matter physics, quantum chemistry and high-energy physics. Specifically the field of fermionic \emph{analog quantum simulation} has received considerable attention~\cite{CiracZollerSimulation,Trotzky}. Platforms based on ultracold atoms in optical lattices have advanced significantly, reaching a high degree of control and single site resolution in occupation number measurements, making it relevant to determine whether fermionic quantum devices can be expected to outperfom qubit-based approaches in simulating interacting fermions. \cite{RoschTransport,GrossFermions,Mazurenko,EsslingerReview,AspuruGuzikFermions}.

That said, any mature experimental effort in this direction that hopes to have strong predictive power will have to adhere to rigorous certification demands~\cite{eisert_quantum_2020,kliesch_theory_2021}, as well as operate with resource-efficient \emph{read-out methods}, which are unfortunately lacking.  
While in the realm of digital quantum computing methods of randomized measurements
\cite{Toolbox}, of benchmarking \cite{MagGamEmer,KnillBenchmarking,eisert_quantum_2020}, and of classical shadows \cite{huang_predicting_2020,huang_quantum_2021,chen_robust_2021,ShallowShadows,ohliger_efficient_2013} have risen to prominence due to their versatility and practicability, the same cannot be said for 
\emph{analog quantum simulators}, where the need for good read-out methods is 
equally pressing. Although classical shadow approaches have begun to be adapted to fermionic systems~\cite{zhao_fermionic_2021,Wan_2022,GuangHao+22}, current protocols require the implementation of mode transformations that are out of scope for state-of-the-art optical lattice experiments. These experiments, for instance, natively implement only nearest-neighbour hopping evolutions with hopping terms that cannot easily be tuned on a single site level but are essentially translationally invariant~\cite{Kuhr}. Complex-valued hopping terms are also challenging to implement, although progress has been made in some two-dimensional settings~\cite{aidelsburger_artificial_2016}.
What, then, is the most reasonable read-out protocol for fermionic systems with \emph{translationally invariant, free evolution} as their natural mode of operation?
\begin{figure}[ht]
\centering
\includegraphics[width=0.48\textwidth]{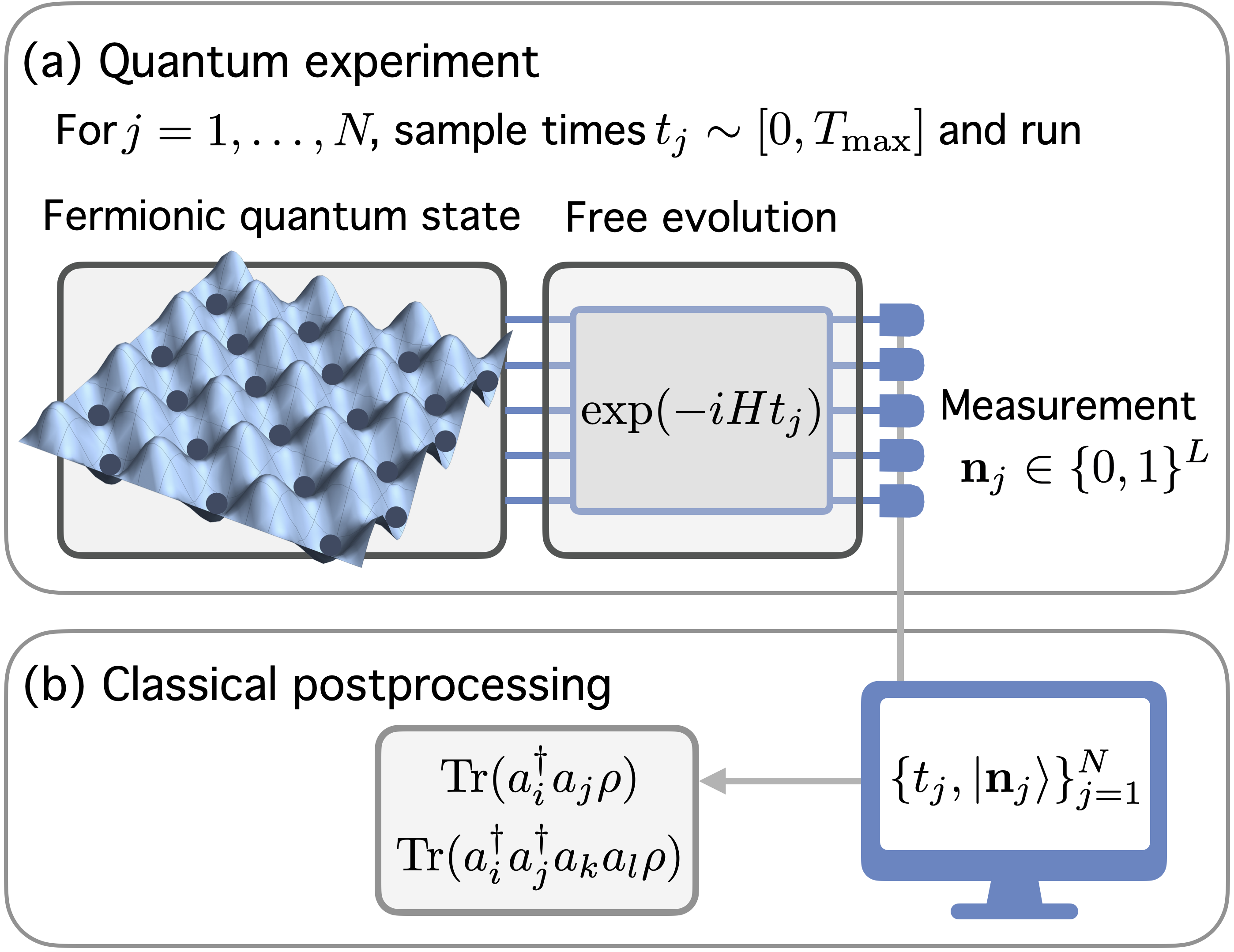}
\caption{Scheme of our protocol, where $H$ is a translationally invariant, non-interacting fermionic Hamiltonian.}
\label{fig:fermiqp_scheme}
\vspace{-0.6cm}
\end{figure}

\vspace{-0.1cm}
In this work, we present a compelling answer to this question. We do so by developing a machinery of randomized measurements for analog quantum simulators, building on classical shadows~\cite{huang_predicting_2020}, that requires only free fermionic (particle number preserving) evolutions and mode occupation number measurements. Crucially, unlike previous analyses~\cite{zhao_fermionic_2021,Wan_2022,GuangHao+22}, we \emph{do} restrict the allowed evolutions to be \emph{translationally invariant}, while the measurements 
are site-resolved. Both these elements are readily available experimentally, the latter famously in quantum gas microscopes~\cite{GreinerMicroscope,Microscope,Kuhr}. The core quantities that we estimate are arbitrary $2$-point and $4$-point correlation functions, which are essential for determining the underlying properties of strongly correlated fermionic systems, as they fully characterise most meaningful fermionic many-body Hamiltonians. Notice that the protocol as a whole is not translationally invariant (due to the site-resolved measurements), so the recovery procedure does not need to be fundamentally limited to translationally invariant correlation functions or states. 

We analyse this protocol and its the experimental feasibility on several levels. Firstly, we characterise entirely the subset of observables that can be recovered if we simply impose translational invariance of the unitary evolutions. Next, we show that in this setting it is actually possible to estimate all 2- and 4-point correlation functions, provided one adds an $\mathcal{O}(1)$ number of auxiliary lattice sites, and we further discuss sample complexity guarantees. This is of particular importance in a setting in which single shots can take seconds \cite{PhysRevLett.115.263001}, rendering them a precious resource. Finally, we provide analytical and numerical evidence that nearest-neighbour interactions with real hopping amplitudes are enough to approximate the previously characterised protocol, essentially only requiring quenches with respect to a fixed, simple nearest-neighbour Hamiltonian. This is in stark contrast to existing shadow estimation schemes, where the system Hamiltonian must be constantly reprogrammed.

Our novel adaptation of shadow estimation to the realm of experimentally feasible analog quantum simulation builds upon a body of prior work, much of it very recent. The presumably first work offering a method for efficient read-out of
quantum many-body systems 
resorts to random measurements and 
tensor network recovery methods \cite{ohliger_efficient_2013}.
Refs.~\cite{QuantumReadout,gluza_recovering_2021} provide a practical
compressed sensing inspired procedure to 
estimate fermionic 2-point correlation functions 
by means of free 
fermionic evolutions and mode occupation number 
measurements,
while Ref.~\cite{NaldesiZoller} 
builds on this, introducing randomized evolution for more stringent guarantees, as well as allowing for the estimation of 4-point correlation functions.
Only beginning with Ref.~\cite{zhao_fermionic_2021} has the power of the full classical shadows framework \cite{huang_predicting_2020} been 
employed in the context of fermions, 
allowing for the estimation of fermionic correlation functions of arbitrary order. Ref.~\cite{GuangHao+22} builds on this work, improving the sample complexity and restricting to more feasible particle number preserving operations. 

However, all these methods rely on the application of complex random mode transformations to the system, which are generated by complex, long-range, 
non-translationally invariant Hamiltonians. In a digital setting, such evolution can in principle be decomposed into local gates, but this is currently hard for analog quantum simulators, for reasons stated above.

Other works \cite{Tran+22} consider settings where a system, possibly extended by a number of auxiliary modes, undergoes a single quench by a Hamiltonian chosen among those that can be naturally implemented in the experimental platform. These studies, however, do not characterise in detail what can be recovered if the quench Hamiltonian is restricted to be translationally invariant or local.
Ref. \cite{vankirk2022hardwareefficient} also considers settings with limited experimental control. They study in detail two unitary ensembles, one of which is translationally invariant. However, in the context of optical lattice experiments involving fermions, these quenches are not readily available. We, on the other hand, introduce a scheme
of \emph{classical shadows suitable
for analog quantum simulation}. 
We achieve this substantial step towards experimental feasibility while not giving up
any mathematical rigour.

\textit{Preliminaries}.
We consider a fermionic one-dimensional
lattice system with periodic boundary conditions, composed of $L$ sites (modes) with creation and annihilation operators $a_j^\dagger$  and $a_j$ for $j \in \{0,\dots, L-1\}$, respectively, subject to the canonical anti-commutation relations. The fundamental elements of our randomized measurements protocol will be particle number preserving, free unitary evolutions, that is unitaries of the form $U={\rm exp}(it\sum^{L-1}_{i,j=0} h_{i,j} a^\dagger_i a_j)$ for some Hermitian matrix $h$. These quenches correspond to $L \times L$ unitary matrices by their action on the mode operators. Indeed we have that $U a_i U^\dagger=\sum_{j=0}^{L-1} u_{i,j}a_j$ for some matrix $u\in \mathrm{U}(L)$, which in turn uniquely determines $U$, up to a phase.

We will consider only translationally invariant evolutions, that is $U$ such that $[U,T]=0$, where $T$ is the shift operator that translates all the modes by one site along the one-dimensional chain. We further impose that the coefficients $h_{i,j}$ of the Hamiltonian generating this evolution are real. It follows that the relevant unitary transformations in mode space constitute the subgroup of symmetric circulant unitaries $CU_{\text{Sym}}(L)\subset U(L)$, \ie unitaries whose matrix elements satisfy $u_{i,j}=u_{i+k,j+k}$ for all $k$, and $u_{i,j}=u_{j,i}$, where all indices are to be understood modulo $L$. We will see later that unitaries in this subgroup can be suitably approximated by evolving with respect to a fixed nearest-neighbour hopping Hamiltonians with real coefficients. 

Let us now introduce the classical shadows formalism, tailored to fermionic systems~\cite{zhao_fermionic_2021,Wan_2022,GuangHao+22,NaldesiZoller}. It relies on first defining the \textit{measurement channel} $\mathcal{M}$, which acts on fermionic operators as
\begin{align}
    \mathcal{M}(\cdot):=\sum_{\boldsymbol{n}\in \{0,1\}^L}\mathbb{E}_U\,\bra{\boldsymbol{n}}U (\cdot) U^\dagger\ket{\boldsymbol{n}} U^\dagger\ketbra{\boldsymbol{n}}{\boldsymbol{n}}U \,,
    \label{measurementop}
\end{align}
where the sum is over all Fock basis vectors $\ket{\boldsymbol{n}}$  
of the fermionic Hilbert space and
$\mathbb{E}_{U  }$ is the expectation value with respect to sampling $U$ from some unitary ensemble $\mathcal{U}$.
The channel $\mathcal{M}$ is self-adjoint but in general not invertible. We can, however, always define its pseudo-inverse $\mathcal{M}^+$, which annihilates the channel's kernel and acts as an inverse on its image.

The randomized measurement protocol is as follows: First, we evolve our state $\rho$ according to a unitary $U$ drawn uniformly from $\mathcal{U}$, resulting in $U\rho U^\dagger$. We then measure the evolved state in the occupation number basis, giving some outcome $\mathbf{n}$. We repeat this $N$ times, storing the outcomes $(U, \boldsymbol{n})$ efficiently in a classical memory. In post-processing, to estimate an operator $O$, we compute the single-shot estimates $X_{U,\boldsymbol{n}} := \braket{\boldsymbol{n}| U \mathcal{M}^+(O) U^\dag |\boldsymbol{n}}$. Averaging over repeated measurements converges to
\begin{equation} 
        \underset{U,\boldsymbol{n}}{\mathbb{E}} X_{U,\boldsymbol{n}} = \Tr\left[\mathcal{P}_{\Im(\mathcal{M})}(O) \rho \right] \,,
\end{equation}
where $\mathcal{P}_{\Im(\mathcal{M})}$ is the orthogonal projector onto the image of $\mathcal{M}$. This can be seen by using $\mathcal{M}\circ \mathcal{M^+}=\mathcal{P}_{\Im(\mathcal{M})}$. Thus, it follows that $X_{U,\boldsymbol{n}}$ is an unbiased estimator for the expectation value $\Tr(O\rho )$, 
if $O$ is in the image of $\mathcal {M}$.

\textit{Measurement channel for translationally invariant unitaries}. 
This sketch of the protocol highlights how, to successfully perform the post-processing step, we need a complete characterisation of the \emph{channel} $\mathcal{M}$, its \emph{pseudo-inverse}, and its \emph{image}.
Previous works \cite{GuangHao+22, NaldesiZoller} rely on sampling the unitaries $U$ uniformly from the group of free fermionic evolution, with at most the constraint of particle number preservation. This means $\mathcal{U}\equiv U(L)$, for which $\mathcal{M}$ has a particularly simple form. For our purposes, we instead consider the unitary mode transformations $u$ being sampled uniformly (according to the Haar measure) from $CU_{\rm Sym}(L)$.

Our first observation is that for any ensemble of free fermionic, particle number preserving unitaries, the channel is block diagonal with respect to the subspaces of particle number preserving $2k$-point operators, that is operators that can be written as linear combinations of terms of the form $a^\dag_{i_1}\cdots a^\dag_{i_k} a_{j_1}\cdots a_{j_k}$. The action of $\mathcal{M}$ on a $2k$-point operator gives a linear combination
\begin{align}
    \mathcal{M}&(a^\dag_{i_1}\cdots a^\dag_{i_k} a_{j_1}\cdots a_{j_k})= \nonumber\\
    &\hspace{15mm}\sum_{\boldsymbol{l},\boldsymbol{m}} M^{(2k)}_{\boldsymbol{i} \boldsymbol{j},\boldsymbol{l} \boldsymbol{m}} a^\dag_{l_1}\cdots \: a^\dag_{l_k} a_{m_1}\cdots a_{m_k}
\end{align}
of other $2k$-point operators of the same order, where bold letters indicate a collection of $k$ indices, \eg, $\boldsymbol{i}\equiv(i_1,\dots ,i_k)$.
We can, therefore, focus on each $k$ individually. This is a fundamental step in reducing the complexity of the problem, as we move from a channel acting on all Fock space operators to maps acting on spaces of dimension of order $L^{2k}$. We will focus in particular on the 2- and 4-point sectors, as these include essentially all observables relevant for fermionic quantum simulation tasks.

\begin{proposition}[Expression for channel]
For $L$ odd, the tensors $M^{(2k)}_{\boldsymbol{i} \boldsymbol{j},\boldsymbol{l} \boldsymbol{m}}$ 
specifying
each of the blocks of $\mathcal{M}$ are given by 
\begin{equation}
    M^{(2k)}_{\boldsymbol{i} \boldsymbol{j},\boldsymbol{l} \boldsymbol{m}}= \!\!\!\sum_{\substack{\boldsymbol{p}\\ \pi\in\mathcal{S}_k}} \!\mathrm{sgn}(\pi) \: \Phi_{2k}(\boldsymbol{i},\boldsymbol{m},\pi\boldsymbol{j},\boldsymbol{l};\boldsymbol{p},\boldsymbol{p},\boldsymbol{p},\boldsymbol{p}), \label{eq:Mchannel}
\end{equation}
where $\mathcal{S}_k$ is the permutation group of $k$ elements and
\begin{align}
    \Phi_{2k}&(\boldsymbol{i},\boldsymbol{j},\boldsymbol{l},\boldsymbol{m};\boldsymbol{p},\boldsymbol{q},\boldsymbol{r},\boldsymbol{s})=\nonumber\\
    &\hspace{10mm}\mathbb{E}_u  \prod_{a=1}^k\overline{u}_{i_a,p_a} \, \overline{u}_{j_a,q_a} \: u_{l_a, r_a} \, u_{m_a, s_a}
\end{align}
are matrix elements of the $2k$-th moment operator of the unitary ensemble $\mathcal{U}$, that is in our case $CU_{\rm{Sym}}(L)$. 
\end{proposition}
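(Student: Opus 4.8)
The plan is to compute $\mathcal{M}$ on a single monomial $O = a^{\dagger}_{i_1}\cdots a^{\dagger}_{i_k} a_{j_1}\cdots a_{j_k}$ directly from the definition \eqref{measurementop}, taking the ensemble average only at the very end; this way the parity assumption on $L$ plays no role in the algebra itself and merely fixes the global setting in which the moment operator $\Phi_{2k}$ is later evaluated. First I would conjugate $O$ using the defining relation $U a_i U^{\dagger} = \sum_j u_{i,j} a_j$ together with its adjoint $U a_i^{\dagger} U^{\dagger} = \sum_j \overline{u}_{i,j} a_j^{\dagger}$, obtaining
\[
U O U^{\dagger} = \sum_{\boldsymbol{p},\boldsymbol{q}} \Big(\prod_{a=1}^k \overline{u}_{i_a,p_a}\, u_{j_a,q_a}\Big)\, a^{\dagger}_{p_1}\cdots a^{\dagger}_{p_k} a_{q_1}\cdots a_{q_k}.
\]

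The core step is then to evaluate the Fock-diagonal elements $\bra{\boldsymbol{n}} a^{\dagger}_{p_1}\cdots a^{\dagger}_{p_k} a_{q_1}\cdots a_{q_k}\ket{\boldsymbol{n}}$. Using the canonical anticommutation relations, such an element is nonzero only when $(q_1,\dots,q_k)$ is a permutation $\pi$ of the (necessarily distinct) indices $(p_1,\dots,p_k)$; reordering the annihilation operators contributes a factor $\mathrm{sgn}(\pi)$, and collapsing the resulting paired product $a^{\dagger}_{p_1}\cdots a^{\dagger}_{p_k} a_{p_1}\cdots a_{p_k} = (-1)^{k(k-1)/2}\,\hat n_{p_1}\cdots\hat n_{p_k}$ yields the value $\mathrm{sgn}(\pi)(-1)^{k(k-1)/2}\prod_{b} n_{p_b}$. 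I expect this fermionic sign bookkeeping to be the main obstacle, as the factor $(-1)^{k(k-1)/2}$ must be tracked consistently and will have to be shown to cancel against a twin factor arising below.

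Next I would resum over $\boldsymbol{n}$. Since $\prod_{b} n_{p_b}$ is exactly the eigenvalue of $\hat n_{p_1}\cdots\hat n_{p_k}$ on $\ket{\boldsymbol{n}}$, the weighted sum $\sum_{\boldsymbol{n}}\big(\prod_{b} n_{p_b}\big)\,U^{\dagger}\ketbra{\boldsymbol{n}}{\boldsymbol{n}}U$ collapses to $U^{\dagger}(\hat n_{p_1}\cdots\hat n_{p_k})U$; note that this alone already shows $\mathcal{M}(O)$ is purely $2k$-point, recovering the announced block structure without having to assume it. Re-expressing $\hat n_{p_1}\cdots\hat n_{p_k} = (-1)^{k(k-1)/2}\,a^{\dagger}_{p_1}\cdots a^{\dagger}_{p_k} a_{p_1}\cdots a_{p_k}$ and conjugating term by term via $U^{\dagger}a_{p}^{\dagger}U = \sum_l u_{l,p}a_l^{\dagger}$ and $U^{\dagger}a_pU=\sum_m\overline{u}_{m,p}a_m$ produces the second factor $(-1)^{k(k-1)/2}$, which cancels the first. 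Collecting everything gives
\[
\mathcal{M}(O) = \sum_{\boldsymbol{l},\boldsymbol{m}}\Big[\sum_{\substack{\text{$\boldsymbol{p}$ distinct}\\ \pi\in\mathcal{S}_k}}\mathrm{sgn}(\pi)\,\mathbb{E}_u\prod_{a=1}^k \overline{u}_{i_a,p_a}\,u_{j_a,p_{\pi(a)}}\,u_{l_a,p_a}\,\overline{u}_{m_a,p_a}\Big]\, a^{\dagger}_{\boldsymbol{l}}a_{\boldsymbol{m}}.
\]

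Finally I would match this to the claimed form. Relabelling the product index $a\mapsto\pi^{-1}(a)$ in the factor $u_{j_a,p_{\pi(a)}}$ moves the permutation onto $\boldsymbol{j}$, and since $\mathrm{sgn}(\pi)=\mathrm{sgn}(\pi^{-1})$ the substitution $\pi\mapsto\pi^{-1}$ is a harmless reparametrisation of the sum, so that the bracketed coefficient becomes $\sum_{\boldsymbol{p}\,\text{distinct},\,\pi}\mathrm{sgn}(\pi)\,\Phi_{2k}(\boldsymbol{i},\boldsymbol{m},\pi\boldsymbol{j},\boldsymbol{l};\boldsymbol{p},\boldsymbol{p},\boldsymbol{p},\boldsymbol{p})$, matching the definition of the moment operator (two barred and two unbarred columns in the order $\boldsymbol{i},\boldsymbol{m},\pi\boldsymbol{j},\boldsymbol{l}$). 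It then remains to drop the distinctness restriction on $\boldsymbol{p}$: if $p_a=p_b$ for some $a\neq b$, pairing each $\pi$ with $\pi\circ(a\,b)$ leaves $\Phi_{2k}$ invariant (the two affected $u_{j}$ factors carry the identical column index $p_a=p_b$) while flipping $\mathrm{sgn}(\pi)$, so every repeated-index contribution cancels in the signed sum. Hence the sum extends freely to all $\boldsymbol{p}$, yielding precisely \eqref{eq:Mchannel}.
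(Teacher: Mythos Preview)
Your argument is correct and follows essentially the same route as the paper's own proof (Proposition~\ref{propo:blocks_M} in the Supplemental Material): conjugate the monomial by $U$, use that Fock-diagonal expectation values force $\boldsymbol{q}$ to be a permutation of $\boldsymbol{p}$, reconstitute the resulting number operator as $U^\dagger a^\dag_{\boldsymbol{p}}a_{\boldsymbol{p}}U$, and expand again. Your version is, if anything, a bit more explicit than the paper in two places: you track the fermionic sign $(-1)^{k(k-1)/2}$ and show it cancels against its twin, and you give a clean pairing argument $\pi\leftrightarrow\pi\circ(a\,b)$ to justify extending the sum from distinct $\boldsymbol{p}$ to all $\boldsymbol{p}$, whereas the paper leaves this implicit (the repeated-$\boldsymbol{p}$ terms are absorbed by the antisymmetry of $a^\dag_{\boldsymbol{l}}a_{\boldsymbol{m}}$). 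Your observation that the parity hypothesis on $L$ is not used in this derivation, only later when $\Phi_{2k}$ is evaluated for $CU_{\rm Sym}(L)$, is also correct.
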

Our main results rely on representation theoretic concepts to compute expressions for these moment operators. Indeed, in general the moment operators can be expressed in terms of the elements of the commutant of $u^{\otimes 2k}$, which leads to the expression 
\begin{align}\label{eq:moment_op}
    \Phi_{2k}&(\boldsymbol{i},\boldsymbol{j},\boldsymbol{l},\boldsymbol{m};\boldsymbol{p},\boldsymbol{q},\boldsymbol{r},\boldsymbol{s})=\nonumber\\
    &\hspace{1mm} \sum_{\substack{\boldsymbol{x},\boldsymbol{y}:\\\boldsymbol{x}\sim\boldsymbol{y}}}\:\prod_{a=1}^k \overline{v}^{(x_a)}_{i_a} \,\overline{v}^{(x_{k+a})}_{j_a} v^{(y_a)}_{l_a} v^{(y_{k+a})}_{m_a} \cdot \\
    &\hspace{20mm} \cdot v^{(x_a)}_{p_a} v^{(x_{k+a})}_{q_a} \,\overline{v}^{(y_a)}_{r_a} \,\overline{v}^{(y_{k+a})}_{s_a}\,,\nonumber
\end{align}
where the sum runs over all pairs of $2k$-dimensional indices $\boldsymbol{x}\equiv(x_1,\dots ,x_{2k})$ and $\boldsymbol{y}\equiv(y_1,\dots ,y_{2k})$ equivalent under the relation $\sim $, defined such that $\boldsymbol{x}\sim \boldsymbol{y}$ if and only if there exist a permutation $\pi \in \mathcal{S}_{2k}$ and a vector of signs $s\in \{+1,-1\}^{2k}$ such that $\boldsymbol{y}=\pi (s_1 x_1, \dots, s_{2k} x_{2k})$. The vectors $v^{(a)}$ are elements of the Fourier basis $v^{(a)}_b=(1/\sqrt{L} ) \exp(2\pi i \,ab/L)$.

In the 2-point sector ($k=1$), combining relations~\eqref{eq:Mchannel} and~\eqref{eq:moment_op} leads to an analytical expression 
\begin{equation}
        M^{(2)}_{i j, lm} = \frac{1}{L}\left( \delta_{i,j}\delta_{l,m}+\delta_{i,l}\delta_{j,m}-\dfrac{1}{L}\delta_{i-j,l-m}\right)\,. \label{eq:M_2}
\end{equation}
A straightforward analysis of~\eqref{eq:M_2} shows that the image of this map is given by the span of all operators of the form 

\noindent
\begin{align}
    a_i^\dagger a_j - a_{L-1}^\dagger a_{j-i-1} &\text{ for all } i\neq j \label{eq:2-point-eigenstates},\\
    a_i^\dagger a_i \hspace{22mm}&\text{ for all } i \,.
\end{align}
These are all the 2-point correlation functions that are possible to recover using the randomized measurement scheme discussed above (and all other incoherent measurement protocols without ancillas that adhere to our experimental restrictions). In particular, the operators~\eqref{eq:2-point-eigenstates} are also eigenvectors of $\mathcal{M}$ with eigenvalue $1/L$.

In the 4-point sector ($k=2$), enumerating all the terms appearing in equation~\eqref{eq:moment_op} is a less trivial combinatorial task. The result can however be used to evaluate the tensor $M^{(4)}_{\boldsymbol{i} \boldsymbol{j},\boldsymbol{l} \boldsymbol{m}}$, which will then need to be pseudo-inverted numerically (see the Supplemental Material \cite{supp} for details).

\textit{Recovery of arbitrary $2$- and $4$-point correlation functions}.
This analysis of the measurement channel shows that there are fundamental constraints on the 2- and 4-point observables that we can recover using a randomized measurement scheme with only evolutions in $CU_{\rm{Sym}}(L)$. 
However, we are able to circumvent these limitations if we have the possibility to extend the measured system by adding a small number of auxiliary modes. 

In particular, we consider the case in which it is possible to prepare the system in the state $\rho'=\rho\otimes |\boldsymbol{0}\rangle\langle\boldsymbol{0}|$, where $\rho$ is the state that we would like to measure (defined on $L$ modes) and $|\boldsymbol{0}\rangle\langle\boldsymbol{0}|$ represents some further $L_{\rm anc}$ auxiliary modes, initially prepared in the unoccupied state (geometrically speaking, the auxiliary modes are arranged within the same one-dimensional ring as the system modes). This should be a relatively  natural procedure in a cold atom set-up, where one can introduce a large local chemical potential on a subset of sites to keep them unoccupied during the state preparation process. Subsequently, one evolves the system with the randomly drawn translationally invariant unitaries, letting it expand also into the auxiliary modes, and finally measures the occupation numbers of the whole system, including the auxiliary modes, with an approach
somewhat reminiscent of ``time of flight'' experiments.
We show that with the data collected in this way one can recover arbitrary 2-point correlations ${\Tr} (a^\dag_i a_j \rho)$, or 4-point correlations ${\Tr}(a^\dag_i a^\dag_j a_k a_l \rho)$. To do this, we construct an operator $O'$ that belongs to the image of the measurement channel on the extended system, according to our previous characterisation, but also satisfies $\Tr(O'\rho')=\Tr(O \rho)$, where $O$ is an arbitrary 2- or 4-point correlation operator. It follows that the single-shot estimator for $O'$ on the full system gives a suitable post-processing function for recovering $\Tr(O \rho)$.

For the 2-point sector, we see that it is in fact sufficient to add just a single auxiliary mode to recover arbitrary correlation functions, provided $L_{\rm tot} = L + L_{\rm anc}$ is odd: Consider our $L$ mode system to be embedded into a $L+1$ mode lattice, where the modes labelled by $0,\dots,L-1$ correspond to the system and the mode labelled by $L$ is an initially unoccupied auxiliary mode. It is easy to check that the observable $O'=a^\dag_i a_j-a^\dag_L a_{j-i-1}$ is in the image of $\mathcal{M}$ on the full $L+1$ mode lattice, and that ${\rm Tr}(O'\rho')={\rm Tr}(a^\dag_i a_j \rho)$. Therefore, $\braket{\boldsymbol{n}| U \mathcal{M}^+(O') U^\dag |\boldsymbol{n}}$ is an unbiased estimator for ${\rm Tr}(a^\dag_i a_j \rho)$, evaluating to
\begin{align}
\label{eq:2pointest}
    \!\!\!X^{(i,j)}_{u,\boldsymbol{n}} \!:=  \!(L+1)  \!\!\left[   \sum^{L}_{k=0} \left(\overline{u}_{i,k}u_{j,k}\! - \overline{u}_{L,k}u_{j-i-1,k}\right)\!n_k\!\right]\!\!.
\end{align}
For the 4-point sector, a suitable operator $O'$ can be constructed through a numerical procedure. We observe that this gives successful outcomes with a small number of auxiliary modes
(see Supplemental Material \cite{supp}). 

Let us now address the number of samples (or measurement shots) that are needed to estimate correlations functions in this way.
An application of Hoeffding's inequality shows that to estimate all 2-point correlation functions to within an additive error $\epsilon$ with success probability $1-\delta$, \begin{equation} N\ge ({16}/{\epsilon^{ 2}})\left(L+1\right)^2\log\left({4L^2}/{\delta}\right).\end{equation}
samples are sufficient.

\textit{Implementation with nearest-neighbour Hamiltonians.} 
Up until now, our recovery procedure has been for ensembles of free fermionic unitaries drawn from the Haar measure on
$CU_{\rm Sym}(L)$. However, our goal is to recover correlation functions with simple Hamiltonian quenches. 
Consider therefore an ensemble of unitaries generated by the Hamiltonian
\begin{align}
    H_{\text{n.n}} = -J \left(\sum^{L-1}_{i=0} a^{\dag}_{i}a_{i+1} + a_{i+1}^\dag a_i \enspace\right).
\end{align}
Each evolution lasts some time $t$, and is characterised by the real number $\alpha=Jt$. In the Supplemental Material \cite{supp}, we present analytical and numerical evidence, on top of the numerical experiments in the subsequent section, that for arbitrary $L$ odd, the Haar measure on $CU_{\rm Sym}(L)$ can be approximated by this simple ensemble of quenches, provided that $\alpha$ is sampled according to suitable distributions. We take into account that the hopping coupling $-J$ is negative in most architectures~\cite{Kuhr}. For example, to  approximate second order moments up to an error $\epsilon$ (in the $2$-norm), it suffices to sample $\alpha$ uniformly from $[0, \alpha_{\text{max}}]$, where $\alpha_{\text{max}} = \mathcal{O}(L^4/\epsilon^2)$.
Note that one can fix $J$ and vary the quench time $t$, or use a product of $J t$ to reach the desired $\alpha$ values.

\emph{Numerical simulations.}  
To demonstrate the viability of our estimation procedure, we present a motivating example using experimentally feasible values for $\alpha_{\rm max}$. We prepare the initial state vector $|\psi\rangle = \exp\left(-i H_{\text{hub}}t_0\right)|\psi_0\rangle$, where $|\psi_0\rangle = |0, 1, \dots , 1, 0\rangle$ is the charge-density wave state vector and $H_{\text{hub}}$ is the Fermi-Hubbard Hamiltonian 
\begin{eqnarray}
H_{\text{Hub}} \!= \!\sum\limits_{i=0}^{L-2} (a_i^\dagger a_{i+1} + h.c.) + \! \sum\limits_{i=0}^{L-2} \! n_i n_{i+1} + \! \sum\limits_{i=0}^{L-1} \!g_i n_i,  \enspace
\end{eqnarray}
 with a random-field $g_i \sim[0.2, 0.7]$ and $t_0 = 1.5$.

The protocol is illustrated in Figure~\ref{fig:fermiqp_scheme}. We embed the above state vector $|\psi\rangle$ into a system of size $L_{\text{tot}} = L + L_{\text{anc}}$, where $L_{\text{tot}}$ is odd. The full state vector is therefore $|\psi^\prime\rangle = |\psi\rangle \otimes |\boldsymbol{0}\rangle$. Then, we sample $\alpha \sim [0,\alpha_{\text{max}}]$, evolve the state vector $\ket{\psi'}$ with the corresponding unitary $\exp(-iH_{\text{n.n}}t)$, and measure in the occupation number basis, yielding $\boldsymbol{n}$. We repeat this $N$ times and post-process the data as detailed above.
Figure~\ref{fig:colourplots} provides illustrative examples of our estimation procedure for $2$- and $4$-point functions: A compelling visual agreement of our estimation procedure is evident from the colour plots.
\begin{figure}[ht]
\centering
\includegraphics[width=0.49\textwidth]{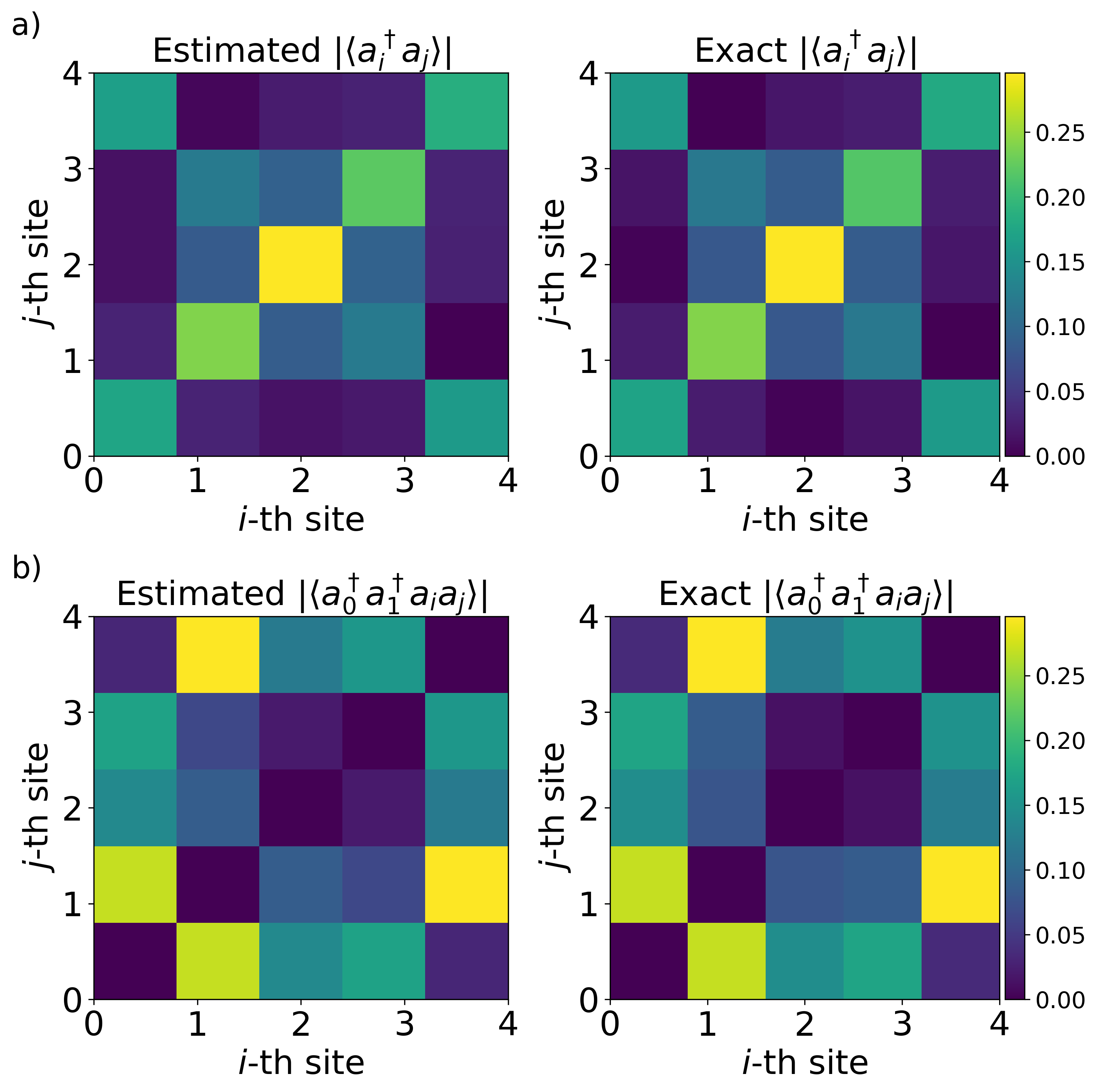}
\caption{Colour plots showing estimates and exact values for (a) $|\langle a_i^\dagger a_j\rangle|$, and (b) $|\langle a_0^\dagger a_1^\dagger a_i a_j\rangle|$. The data is derived from $5000$ quench samples, with $\alpha$ uniformly distributed in $[0,\alpha_{\text{max}}]$, where $\alpha_{\text{max}} = 120$. The system size is $L = 5$, with $L_{\rm tot}=7$. 
}
\label{fig:colourplots}
    \vspace{-0.3cm}
\end{figure}

To study the behaviour of our procedure in more detail, we implement it for various values of $\alpha_{\text{max}}$. The inset of Figure~\ref{fig:VarScaling} indicates that the average variance of our estimators (over all $2$-point functions) is independent of the value of $\alpha_{\text{max}}$ and appears to grows polynomially with $L$. This implies that the number of samples required for a given accuracy will grow correspondingly. More concretely, we find that, for $\alpha_{\text{max}} = 120$ and $L\leq 8$, the number of samples required to estimate 2-point functions within accuracy $\mathcal{O}(10^{-3})$ is $N \sim 70000$, or around $N \sim 20000$ for an accuracy $\mathcal{O}(10^{-2})$, see additional figures in the Supplemental Material \cite{supp}. 

Figure \ref{fig:VarScaling} also shows the average recovery error $\text{Ave} (\Delta O_{i,j}) = {L^{-2}}\sum_{i,j}|\langle a_i^\dag a_j\rangle - \langle a_i^\dag a_j\rangle_{\rm est}|$ with $\alpha_{\text{max}}$ and $L$, where $\langle a_i^\dag a_j\rangle_{\rm est}$ is the recovered expectation value. We see that the lowest $\alpha_{\text{max}} = 20$ is insufficient to produce estimates within an average error of $10^{-2}$, and this only gets worse with increasing $L$. We observe that the larger $\alpha_{\text{max}}$ gets, the less steeply the average error $\text{Ave} (\Delta O_{i,j})$ increases with $L$. For the analysed system sizes, our combined results indicate that the recovery of 2-point functions is possible with reasonable values of $\alpha_{\text{max}}\sim10^2$ and $N\sim10^4$. The analogous plots for 4-point correlation functions are discussed in the Supplemental Material \cite{supp}.
\begin{figure}[ht]
    \centering
    \includegraphics[width=0.45\textwidth]{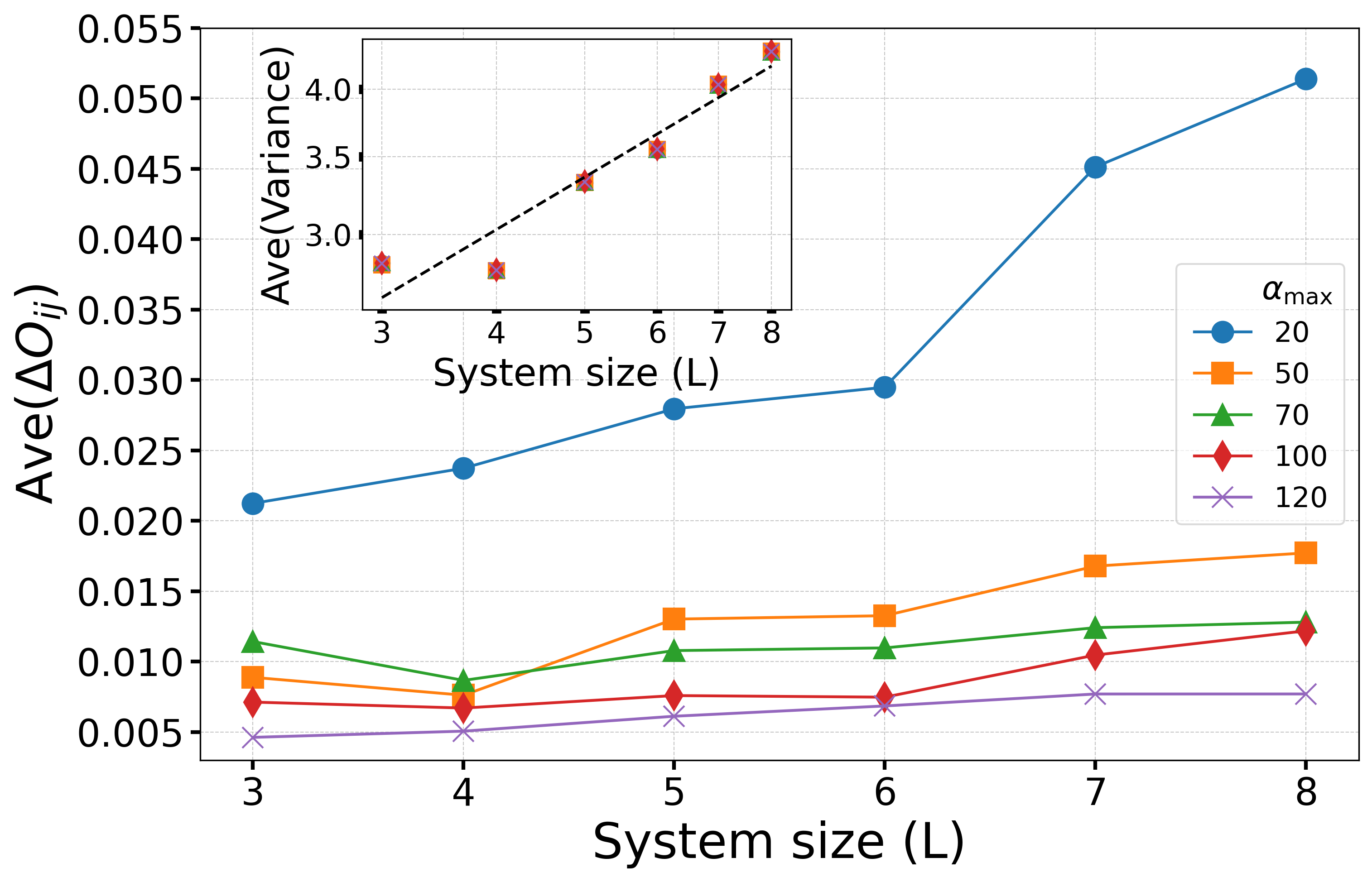}
    \caption{Average recovery error $\text{Ave}(\Delta O_{i,j})$ as a function of $L \leq 8$ for varying $\alpha_{\text{max}}$. Inset: Log-log plot illustrating the average variance of 2-point estimators for the same $L$, displaying a linear fit with a slope of $\sim 0.47$. Estimates are based on $N = 150000$ samples, chosen for high accuracy in estimating the true variance of the protocol. In practice, a significantly smaller sample size suffices for recovery, as we provide evidence for.}
    \label{fig:VarScaling}
    \vspace{-0.3cm}
\end{figure}

\emph{Conclusions.} In this work, we have presented an experimentally feasible method for measuring  $2$- and $4$-point fermionic correlation functions 
by innovating what could be called an
\emph{analog classical shadows protocol}. Unlike other protocols, ours requires only translationally invariant nearest-neighbour hopping quenches, with respect to a \emph{single} Hamiltonian, and is suited for fermionic analog quantum simulators. 
Such read-out methods are of great importance to \emph{variational quantum algorithms}~\cite{Variational}, for assessing \emph{quantum simulations}~\cite{Trotzky}, as well as for constructing improved \emph{entanglement witnesses}~\cite{quant-ph/0607167, Audenaert06}.
One of the main open directions left in our work is its generalisation to higher-dimensional systems. A straightforward generalisation is possible using Hamiltonians with complex-valued couplings, but more thought is required to determine whether this can also be achieved with only real couplings.
In conclusion, we believe that protocols of this type could become cornerstone methods in practical implementations of quantum simulation.

{\it Acknowlegdements.} We thank Steven Thompson for help with our numerical calculations and Lennart Bittel for useful discussions. We also thank the HPC services of ZEDAT, Freie Universität Berlin~\cite{bennett_curta_2020} and of the Physics Department, Freie Universität Berlin, for computing time. This work has been supported by the BMBF (primarily FermiQP, but also MuniQC-Atoms and Hybrid), the Munich Quantum Valley (K-8),
 the Quantum Flagship (PasQuans2), the Einstein Foundation (Einstein Research Unit on 
 Quantum Devices), and the DFG (CRC 183, FOR 2724).
 \newpage 
\bibliographystyle{apsrev4-2}

\bibliography{references}

\onecolumngrid
\clearpage
\appendix

\section{Preliminaries on fermions}

We consider a one-dimensional spinless fermionic system with periodic boundary conditions, composed of $L$ modes with creation and annihilation operators $a_{j}^{\dagger}$ and $a_{j}$, $j\in\{0,\dots , L-1\}$, satisfying the canonical anti-commutation relations \cite{tagliacozzo} 
\begin{align}
    \{a_{k}, a_{l}\}=0 \quad \{a_{k}, a_{l}^{\dagger}\}=\delta_{k,l} .
\end{align}
The Hilbert space $\mathcal{H}\cong{(\mathbb{C}^2)}^{\otimes L}$ is spanned by the orthonormal Fock basis vectors
\begin{equation}
    \ket{\boldsymbol{n}}:= 
    \ket{n_{0}, \ldots, n_{L-1}}:=\left(a_{0}^{\dagger}\right)^{n_{0}} \ldots\left(a_{L-1}^{\dagger}\right)^{n_{L-1}}\ket{\boldsymbol{0}} \,,
\end{equation}
where the vacuum state vector $\ket{\boldsymbol{0}}$ satisfies $a_{j}\ket{\boldsymbol{0}}=0$ for all $j$. The $n_{j} \in\{0,1\}$ are the eigenvalues of the respective occupation number operators $a_{j}^{\dagger} a_{j}$, which we collect into the vector $\boldsymbol{n} = (n_{0}, \ldots, n_{L-1})$.
The subspace spanned by all Fock basis vectors with $\sum^{L-1}_{i=0} n_{i}=l$ is called the $l$-particle subspace $\mathcal{H}_l$, and $\mathcal{H}=\bigoplus^{L-1}_{l=0} \mathcal{H}_l$. 
\begin{definition}[Free, particle number preserving Hamiltonians]\ 
A Hamiltonian $H$ acting on $L$ fermionic modes is a free, particle number preserving Hamiltonian if it is of the form
\begin{align}
    H=\sum^{L-1}_{i,j=0} h_{i,j} a^\dagger_i a_j,
    \label{freeHapp}
\end{align}
where $h\in \mathbb{C}^{L\times L}$ is Hermitian.

\end{definition}
Simulating a fermionic system that evolves according to such a Hamiltonian can be done efficiently in the number of modes \cite{Jozsa_2008}.
In particular, in the Heisenberg picture, the fermionic creation and annihilation operators transform according to an $L\times L$ unitary acting on the "mode space" $V:=\mathbb{C}^L$ (and not an exponentially large matrix). This insight is formalized in the following proposition:
\begin{proposition}[Fermionic operators in the Heisenberg picture] 
\label{Prop:freeFerm}
Let $H$ be a \emph{free, particle number preserving Hamiltonian} \eqref{freeHapp} and $U:=e^{iHt}$ the corresponding unitary evolution on $\mathcal{H}$, then
\begin{align}
    U a_i U^\dagger=\sum_{j=0}^{L-1} u_{i,j}a_j, \quad \text{or equivalently}\quad U^\dagger a_i U=\sum_{j=0}^{L-1} \overline{u}_{j,i}a_j,
    \label{evolutionfreeHapp}
\end{align}
where $u:=e^{-iht}$ is an $L\times L$ unitary acting on the mode space.
\end{proposition}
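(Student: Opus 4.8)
The plan is to derive the Heisenberg evolution of a single annihilation operator directly from the canonical anti-commutation relations and then integrate the resulting linear equation of motion. First I would compute the commutator $[H,a_i]$. Writing $H=\sum_{j,k}h_{j,k}a_j^\dagger a_k$ and using the operator identity $[AB,C]=A\{B,C\}-\{A,C\}B$ together with $\{a_k,a_i\}=0$ and $\{a_j^\dagger,a_i\}=\delta_{i,j}$, each term contributes $[a_j^\dagger a_k,a_i]=-\delta_{i,j}a_k$, so that
\begin{equation}
[H,a_i]=-\sum_{k=0}^{L-1}h_{i,k}\,a_k .
\end{equation}
This is the only genuinely fermionic step: one must use the anticommutator identity rather than its bosonic analogue, and it is where a sign error is most likely to creep in.

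Next I would pass to the Heisenberg picture. Setting $a_i(t):=U a_i U^\dagger=e^{iHt}a_i e^{-iHt}$ and differentiating gives the linear system
\begin{equation}
\ddt\,a_i(t)=i\,e^{iHt}[H,a_i]e^{-iHt}=-i\sum_{k=0}^{L-1}h_{i,k}\,a_k(t),
\end{equation}
with initial condition $a_i(0)=a_i$. Crucially, the right-hand side closes on the linear span of the $\{a_k\}$, so this is a finite-dimensional ODE on the mode space $V=\mathbb{C}^L$ rather than an equation in the exponentially large operator algebra. Collecting the operators into a vector $\vec a$, the system reads $\ddt\,\vec a(t)=-i h\,\vec a(t)$, whose solution is $\vec a(t)=e^{-iht}\vec a$, that is $U a_i U^\dagger=\sum_j u_{i,j}a_j$ with $u:=e^{-iht}$. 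Equivalently, one can expand $e^{iHt}a_i e^{-iHt}$ via the Hadamard lemma and prove by induction that the $n$-fold nested commutator equals $\sum_k\big((-h)^n\big)_{i,k}a_k$, which resums to the same $e^{-iht}$.

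Finally I would record the two auxiliary facts. Unitarity of $u$ follows immediately from Hermiticity of $h$: $u^\dagger=e^{iht}=(e^{-iht})^{-1}=u^{-1}$, so $u\in\mathrm{U}(L)$. The equivalent form $U^\dagger a_i U=\sum_j\overline{u}_{j,i}a_j$ then follows by replacing $t\to-t$ (which sends $u\to u^{-1}=u^\dagger$) and using $(u^\dagger)_{i,j}=\overline{u}_{j,i}$; alternatively it is simply the Hermitian adjoint of the relation obtained for $a_i^\dagger$. The main obstacle is entirely contained in the first step---getting the fermionic commutator and its sign correct---after which the statement is a routine integration of a linear ODE on mode space.
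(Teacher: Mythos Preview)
Your proof is correct and follows essentially the same route as the paper: compute $[H,a_i]$ via the identity $[AB,C]=A\{B,C\}-\{A,C\}B$, pass to the Heisenberg picture to obtain the linear ODE $\ddt\,\vec a(t)=-ih\,\vec a(t)$, and integrate. The paper's proof is slightly terser (it does not spell out the Hadamard-lemma alternative or the derivation of the second form), but the argument is the same.
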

\begin{proof}
Let $a_i(t):=e^{iHt}a_ie^{-iHt}$ and $\boldsymbol{a}(t):=\left(a_1(t),\dots,a_L(t)\right)$.
We have
\begin{align}
    \left(\frac{d}{dt}\boldsymbol{a}(t)\right)_i&=\frac{d}{dt}(a_i(t))=i\left[H,a_i(t)\right]=ie^{iHt}\left[H,a_i\right]e^{-iHt}\\&=i\sum^{L-1}_{l,j=0}h_{l,j}e^{iHt}\left[a^\dagger_l a_j,a_i\right]e^{-iHt}=-i\sum^{L-1}_{j=0}h_{i,j}a_j(t)=-i\left(h \boldsymbol{a}(t)\right)_i \nonumber
\end{align}
where in the last line we have used the identity $[AB,C]=A\{B,C\}-\{A,C\}B$.
We conclude the proof by noting that the first order linear (matrix) differential equation $\frac{d}{dt}\boldsymbol{a}(t)=-i h \boldsymbol{a}(t)$ admits the unique solution $\boldsymbol{a}(t)=\exp\left(-i h t\right)\boldsymbol{a}(0)$.
\end{proof}

 For experimental feasibility, we will only consider evolution according to free, particle number preserving Hamiltonians which are also \emph{translationally invariant}. We now make precise what is meant by translational invariance, and subsequently fully characterize the corresponding unitaries on the mode space.

\begin{definition}[Translationally invariant operators]
A Hilbert space operator $O$ is \emph{translationally invariant} if it commutes with the unitary shift operator $T$. The shift operator is defined by $T\ket{n_0,n_1,\dots,n_{L-1}}=(-1)^{n_{L-1}\sum_{j=0}^{L-2} n_j}\ket{n_{L-1},n_0,\dots,n_{L-2}}$ and satisfies  $Ta_jT^\dagger=a_{j+1}$, where indices are considered modulo $L$.
\end{definition}

Due to linear independence of the operators $a_i^{\dagger}a_j$, 
a translationally invariant, free Hamiltonian $H=\sum_{i,j} h_{i,j} a^\dagger_i a_j$ must satisfy $h_{i,j} = h_{i+k,j+k}$ for all $k\in\mathbb{Z}$. We call such a matrix $h$ circulant (see Definition \ref{def:circulant}).
Unitaries on the mode space corresponding to a free, translationally invariant Hamiltonian inherit the circulant property:
\begin{proposition}
    Let $h\in \mathbb{C}^{L\times L}$ be circulant and Hermitian. Then $e^{-iht}$ is a circulant unitary.
\end{proposition}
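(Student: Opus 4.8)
The plan is to use the algebraic fact that circulant matrices are exactly those commuting with the cyclic shift, together with the observation that the matrix exponential preserves this commutation.

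First I would recast the circulant condition operationally. Let $S\in\mathbb{C}^{L\times L}$ be the cyclic shift permutation matrix, $S_{i,j}=\delta_{i,j+1}$ with indices taken modulo $L$; it is unitary, $S^\dagger = S^{-1}$. A short index computation gives $(ShS^\dagger)_{i,j}=h_{i-1,j-1}$, so the defining relation $h_{i,j}=h_{i+1,j+1}$ of a circulant matrix is equivalent to the single operator identity $ShS^\dagger = h$, i.e.\ $[h,S]=0$. Establishing this equivalence cleanly (with the index arithmetic modulo $L$) is the only part that requires a little care.

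Next I would note that the commutant of $S$ is closed under analytic functions. Writing $e^{-iht}=\sum_{n\ge 0}(-it)^n h^n/n!$, each power $h^n$ commutes with $S$ whenever $h$ does, so every partial sum commutes with $S$; since the series converges absolutely in any submultiplicative norm, the limit satisfies $[e^{-iht},S]=0$. By the equivalence of the first step, $e^{-iht}$ is therefore circulant.

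Finally, unitarity is immediate from Hermiticity: for real $t$ and $h=h^\dagger$ one has $(e^{-iht})^\dagger = e^{iht}=(e^{-iht})^{-1}$. Combining the two conclusions shows $e^{-iht}$ is a circulant unitary. I do not expect a genuine obstacle: the real content is simply that the circulant matrices form the commutative subalgebra of $\mathbb{C}^{L\times L}$ generated by $S$ (equivalently, the commutant of $S$), which is closed under the adjoint and under the exponential map. An alternative that avoids even the index bookkeeping is to diagonalise $h=FDF^\dagger$ in the common Fourier eigenbasis $v^{(a)}$ already introduced, with $F$ the $L$-point Fourier matrix and $D$ diagonal; then $e^{-iht}=Fe^{-iDt}F^\dagger$ is diagonalised by the same basis, hence circulant, and unitary because $D$ is real.
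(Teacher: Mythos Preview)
Your proposal is correct and follows essentially the same route as the paper: expand $e^{-iht}$ as a power series, show each power $h^n$ remains circulant (the paper verifies $(h^l)_{i+k,j+k}=(h^l)_{i,j}$ by shifting all intermediate summation indices, which is exactly your statement $[h^n,S]=0$ written componentwise), and read off unitarity from Hermiticity of $h$. The commutant formulation and the Fourier-diagonalisation alternative you mention are both clean repackagings of the same content rather than genuinely different arguments.
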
 
\begin{proof}
Unitarity follows from Hermicity of $h$. Further,
\begin{equation}
    \left(e^{-iht}\right)_{i+k,j+k}=\sum^\infty_{l=0}\frac{(-it)^l \left(h^l\right)_{i+k,j+k}}{l!}=\sum^\infty_{l=0}\frac{(-it)^l \left(h^l\right)_{i,j}}{l!}=\left(e^{-iht}\right)_{i,j},
\end{equation}
since if $h$ is circulant, then so is $h^l$ (because $(h^l)_{i+k,j+k}=\sum_{s_1,\dots,s_l}h_{i+k,s_1+k} h_{s_1+k,s_2+k}\cdots h_{s_{l}+k,j+k} =(h^l)_{i,j}$). 
\end{proof}

We end this section with two short Definitions, hinting at the quantities we ultimately want to estimate.

\begin{definition}[Correlation operators]
Let $k\in \mathbb{N}$.
An operator $O$ is a $2k$-point correlation operator if it is of the form $O=a_{i_1}^\dagger\dots a_{i_k}^\dagger a_{j_1}\dots a_{j_k}$, where $\left(i_1,\dots,i_k, j_1, \dots, j_k \right)\in\{0,\dots, L-1\}^{2k}$.
\label{def_CorrOp}
\end{definition}

\begin{definition}[$2k$-point subspace]
Let $k\in \mathbb{N}$. We define $W_{2k}$ to be the subspace spanned by all the $2k$-point correlation operators.
\label{def_CorrOpSubspace}
\end{definition}
Note that a basis for $W_{2k}$ is given by 
\begin{equation}
    \Big\{a_{i_1}^\dagger\dots a_{i_k}^\dagger a_{j_1}\dots a_{j_k} \, : \, 0 \le i_1 < \dots < i_k\le L-1 \text{ and } 0\le j_1< \dots < j_k\le L-1 \Big\}.
\end{equation}
We have 
\begin{align}
    \mathrm{dim}\left(W_{2k}\right)=\binom{L}{k}^2\le \frac{L^{2k}}{(k!)^2}.
\end{align}

\section{Properties of circulant matrices}
In the previous section we introduced the concept of circulant matrices. In this section we discuss some of their useful properties. We denote with bold vectors the elements of the mode space $V=\mathbb{C}^L$ and in particular the canonical basis $\{\boldsymbol{e}^{(i)}\}^{L-1}_{i=0}$.
\begin{definition}[Circulant matrices]
    \label{def:circulant}
    A matrix $M\in \mathbb{C}^{L\times L}$ is circulant if it is shift-invariant, \ie if it satisfies
\begin{align}
         M_{i,j}=M_{i+k,j+k}, 
\end{align}
for all $i,j,k \in \{0,\dots,L-1\}$. Indices should be considered modulo $L$.
As such, a circulant matrix is specified by a single row (column), as it is of the form
 \begin{equation}
\left(\begin{array}{ccccc}
M_{0} & M_{1} & M_{2} & \ldots & M_{L-1} \\
M_{L-1} & M_{0} & M_{1} & \ldots & M_{L-2} \\
& & \ddots & & \\
M_{2} & \ldots & M_{L-1} & M_{0} & M_{1} \\
M_{1} & \ldots & M_{L-2} & M_{L-1} & M_{0}
\end{array}\right) ,
 \end{equation}
where we have defined $M_{i}:=M_{0,i}$\, for $i\in \{0,\dots,L-1\}$. 
\end{definition}

We denote the set of $L\times L$ circulant matrices by $C(L)$. Two subsets of $C(L)$ will be of particular importance to us. They are defined as follows:

\begin{definition}[Circulant and symmetric circulant unitary subgroups]\ 
The set of circulant unitary matrices is defined as \begin{equation}
CU(L)=C(L)\cap U(L),\end{equation} where $U(L)$ is the full unitary group.
The set of symmetric circulant unitary matrices is defined as \begin{equation}
CU_{\rm Sym}(L)=CU(L)\cap C_{\rm Sym},    
\end{equation} where $C_{\rm Sym} = \{h\in \mathbb{C}^{L\times L}\, \vert \, h_{i,j} = h_{j,i} \, \forall i,j\}$ is the group of symmetric matrices.
\end{definition}

\begin{proposition}[Properties of circulant matrices] \ 
    \label{circulant_prop}
    \begin{enumerate}
        \item A matrix $M\in \mathbb{C}^{L\times L}$ is circulant if and only if it is diagonal in the orthonormal Fourier basis $\boldsymbol{v}^{(k)}=1/\sqrt{L}\sum_j \omega^{kj} \boldsymbol{e}^{(j)}$, where $k\in\{0,\dots,L-1\}$, and $\omega=\exp\left(\frac{2\pi i }{L}\right)$. The corresponding eigenvalues are $\lambda_k=\sum_{j} M_j \omega^{kj}$. Therefore, a convenient way to represent $M$ is its spectral decomposition
        \begin{equation}
            M=\sum^{L-1}_{k=0} \lambda_k \boldsymbol{v}^{(k)}\boldsymbol{v}^{(k)\dagger}\,. \label{eq:spec_decomp_circ}
        \end{equation}
        
        \item The set of circulant matrices $C(L)$ forms an Abelian group.

        \item Matrices $U \in CU(L)$ are of the form \eqref{eq:spec_decomp_circ}, with $\lambda_k=e^{i\varphi_k}$ for $\varphi_k \in [0,2\pi)$. For matrices of $CU_{\rm Sym}(L)$ we further have $\varphi_{L-k}=\varphi_{k}$.
        
        \item The sets of circulant unitary matrices $CU(L)$ and symmetric circulant unitary matrices $CU_{\rm Sym}(L)$ are non-trivial subgroups of $U(L)$. They are compact with respect to the Hilbert-Schmidt metric.
    \end{enumerate}
\end{proposition}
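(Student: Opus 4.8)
The plan is to establish part (1) first, since the remaining three statements follow from it essentially as corollaries. The central object is the cyclic shift matrix $P$ defined by $P\boldsymbol{e}^{(j)} = \boldsymbol{e}^{(j+1)}$ (indices mod $L$), whose eigenvectors are exactly the Fourier vectors $\boldsymbol{v}^{(k)}$ with eigenvalues $\omega^{k}$, and one notes that every circulant $M$ is a polynomial $M = \sum_{r} M_r P^r$. Rather than invoke this, I would verify the diagonalization by direct computation: writing $M_{i,j} = M_{j-i}$ with $M_r := M_{0,r}$ and substituting $r = j-i$ gives $(M\boldsymbol{v}^{(k)})_i = \frac{1}{\sqrt{L}}\sum_r M_r\,\omega^{k(r+i)} = \lambda_k\, v^{(k)}_i$ with $\lambda_k = \sum_r M_r\,\omega^{kr}$. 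Since the $\boldsymbol{v}^{(k)}$ form an orthonormal basis, this yields the spectral decomposition \eqref{eq:spec_decomp_circ}. For the converse, from \eqref{eq:spec_decomp_circ} one reads off $M_{i,j} = \frac{1}{L}\sum_k \lambda_k\,\omega^{k(i-j)}$, which depends only on $i-j$ and is therefore circulant.

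Part (2) is then immediate: since all circulant matrices are diagonal in the single fixed Fourier basis, products of circulant matrices are again diagonal in that basis --- hence circulant --- and they commute; closure under addition and the existence of additive inverses are trivial, giving the abelian group structure.

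For part (3), combining unitarity $M^\dagger M = I$ with the fact that $M$ is normal and Fourier-diagonal shows $M$ is unitary iff every eigenvalue has unit modulus, $\lambda_k = e^{i\varphi_k}$. The symmetric condition I would handle through the coefficient formula $M_{i,j} = \frac{1}{L}\sum_k \lambda_k\,\omega^{k(i-j)}$: imposing $M_{i,j} = M_{j,i}$ and reindexing $k \mapsto L-k$ (so that $\omega^{-k(i-j)} = \omega^{(L-k)(i-j)}$) recasts symmetry as $\sum_k(\lambda_k - \lambda_{L-k})\,\omega^{k(i-j)} = 0$ for all $i-j$; by linear independence of the characters this forces $\lambda_k = \lambda_{L-k}$, \ie $\varphi_{L-k} = \varphi_k$.

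Finally, for part (4) I would verify the subgroup axioms from the previous parts: $I$ is symmetric circulant unitary; a product of circulant unitaries is circulant (part (2)) and unitary, and if both factors are symmetric then $(UW)^T = W^T U^T = WU = UW$ using symmetry together with commutativity; and $U^{-1} = U^\dagger$ is again circulant unitary, its eigenvalues $\overline{\lambda_k}$ preserving the $\lambda_{L-k} = \lambda_k$ constraint in the symmetric case. Non-triviality follows by exhibiting any non-identity element, e.g. a Fourier-diagonal matrix with phases respecting $\varphi_{L-k}=\varphi_k$. For compactness, both sets are subsets of $U(L)$ cut out by the \emph{closed} conditions of being circulant, symmetric and unitary; as closed and bounded subsets of the finite-dimensional space $\mathbb{C}^{L\times L}$ they are compact in the Hilbert--Schmidt metric. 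The whole argument is routine; the only step demanding genuine care is the reindexing and character-independence argument in part (3) for the symmetric case.
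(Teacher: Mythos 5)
Your proposal is correct and follows essentially the same route as the paper: direct verification of the Fourier diagonalization and its converse for part (1), commutativity from simultaneous diagonalization for part (2), unit-modulus eigenvalues plus the $\lambda_k=\lambda_{L-k}$ symmetry condition for part (3), and closure-plus-compactness arguments for part (4). The only (immaterial) differences are that the paper derives the symmetry condition by transposing the spectral decomposition directly ($U^T=\sum_k e^{i\varphi_k}\boldsymbol{v}^{(-k)}\boldsymbol{v}^{(-k)\dagger}$) rather than via character independence, and proves compactness via preimages of closed sets under continuous maps rather than closed-and-bounded; both variants are equally valid.
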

\begin{proof}

\noindent
\begin{enumerate}
\item For a circulant $M$ we have
\begin{equation}
M \boldsymbol{v}^{(k)}=\sum^{L-1}_{i,j=0}M_{i,j}\left(\boldsymbol{v}^{(k)}\right)_j \boldsymbol{e}^{(i)}=\frac{1}{\sqrt{L}}\sum^{L-1}_{i,j=0}M_{j-i}\omega^{kj} \boldsymbol{e}^{(i)}=\frac{1}{\sqrt{L}}\sum^{L-1}_{i=0}\left(\sum^{L-1}_{j=0}M_{j}\omega^{kj}\right) \omega^{ki} \boldsymbol{e}^{(i)}=\lambda_k \boldsymbol{v}^{(k)}\,,\end{equation}
where indices are to be understood modulo $L$ and $M_i$ is defined like in Definition \ref{def:circulant}.
Conversely, if $M$ is of the form \eqref{eq:spec_decomp_circ}, we have
\begin{align}
M_{i+k,j+k}&=\sum^{L-1}_{s=0} \lambda_s \left(\boldsymbol{v}^{(s)}\boldsymbol{v}^{(s)\dagger}\right)_{i+k,j+k}=\frac{1}{L}\sum^{L-1}_{s=0}\lambda_s \omega^{s(i+k)}\omega^{-s(j+k)}\\
&=\frac{1}{L}\sum^{L-1}_{s=0}\lambda_s\omega^{si}\omega^{-sj}=\sum^{L-1}_{s=0} \lambda_s \left(\boldsymbol{v}^{(s)}\boldsymbol{v}^{(s)\dagger}\right)_{i,j}=M_{i,j}
\end{align}

\item This is immediate, given the fact that circulant matrices are diagonal in the same (orthonormal) basis.

\item Since $U$ is unitary, eigenvalues must have absolute value 1, \ie $\lambda_k=e^{i \varphi_k}$. Moreover, transposing $U$ yields $U^T = \sum_k e^{i \varphi_k}\boldsymbol{v}^{(-k)}\boldsymbol{v}^{(-k)\dagger}$, since $\overline{\boldsymbol{v}}^{(k)}=\boldsymbol{v}^{(-k)}$. As such, $U^T=U$ is equivalent to $e^{i \varphi_{L-k}}=e^{i \varphi_k}$.

\item It is easy to check that the conditions on eigenvalues in point 3, which equivalently define $CU(L)$ and $CU_{\rm Sym}(L)$, are closed under matrix multiplication. 
Given the metric space $(X,d)$ consisting of $L\times L$ matrices with complex entries, equipped with the Hilbert-Schmidt metric, compactness of $CU(L)$ follows from the fact that it is the intersection of $U(L)$ with the pre-image of the compact (discrete) set $\{\mathbf{0}\}$, under the continuous map
\begin{equation}
f:X\to X,\, A\mapsto (TAT^\dagger - A),\end{equation}
where $T$ is the shift operator mapping $\boldsymbol{e}^{(i)}$ to $\boldsymbol{e}^{(i+1)}$.
Similarly, $CU_{\rm Sym}(L)$ is compact because it is the intersection of $CU(L)$ and the pre-image of $\{\mathbf{0}\}$ under the continuous map
\begin{equation}
g:X\to X,\, A\mapsto (A^{T}-A).\end{equation}

\end{enumerate}
\end{proof}

\section{Integrating over compact groups and the commutant theorem}

The Haar measure on a compact group $G$ formalizes the notion of a uniform probability measure on $G$, notably also for non-discrete $G$. In all protocols we propose, an experimentalist samples a uniformly random element (at least approximately, see Appendix~\ref{sec:nn-hamiltonians}) from a suitable $G$ and proceeds accordingly.  Thus, computing expectation values of single-shot estimates entails evaluating Haar integrals.

\begin{proposition}[Haar measure]
    Every compact group $G$ has a unique Haar measure $\mu_H:\mathcal{B}\to \mathbb{R}$, defined on the Borel $\sigma$-algebra $\mathcal{B}$ of $G$, satisfying
    \begin{enumerate}
        \item right-invariance: $\mu(Sg)=\mu(S)$, for all $g\in G$ and Borel sets $S$
        \item normalization: $\int_{G} d\mu_H(g) = 1 $,
    \end{enumerate}
and some additional regularization conditions that will be unimportant for the following (see Ref.~\cite{Simon1995RepresentationsOF} for a comprehensive treatment and proof). A compact group's right-invariant Haar measure is simultaneously left-invariant, and vice versa. In this case, the choice of definition is a matter of convention.

\end{proposition}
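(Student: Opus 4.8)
The statement packages three classical facts---existence of a right-invariant normalized Radon measure, its uniqueness up to scale, and the coincidence of right- and left-invariance for compact $G$---and I would establish them in that order. For existence, the plan is to pass from measures to functionals through the Riesz--Markov representation theorem: a normalized right-invariant Radon measure corresponds exactly to a positive, unital, right-translation-invariant linear functional (an invariant mean) $I$ on the space $C(G)$ of continuous functions. To construct $I$ I would exploit compactness twice. For fixed $f\in C(G)$, the family of right translates $\{\,x\mapsto f(xg) : g\in G\,\}$ is equicontinuous (since $f$ is uniformly continuous on the compact group) and uniformly bounded, so by the Arzel\`a--Ascoli theorem together with Mazur's theorem the closed convex hull $K_f$ of the orbit is norm-compact in $C(G)$. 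I would then show $K_f$ contains a single constant function and set $I(f)$ equal to its value. A cleaner, more structural alternative is to realize the set of means as a weak-$*$ compact convex subset of $C(G)^*$ on which $G$ acts by affine homeomorphisms, and to invoke a fixed-point theorem (Markov--Kakutani if $G$ is abelian, Kakutani or Ryll-Nardzewski in general) to obtain an invariant point. Either route yields $I$ with the required properties, and Riesz--Markov returns the measure $\mu_H$; running the mirror-image construction produces a left-invariant measure $\lambda$ as well.

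For uniqueness up to scale I would run the standard convolution/Fubini argument: given two right-invariant measures $\mu,\nu$ and test functions $f,h\in C(G)$ with $\int h\,d\mu\neq 0$, one forms the product $\int f\,d\nu\,\int h\,d\mu$ as a double integral over $G\times G$, performs a change of variables exploiting right-invariance, applies Fubini to interchange the order of integration, and deduces that the ratio $(\int f\,d\nu)/(\int f\,d\mu)$ is independent of $f$; hence $\nu=c\,\mu$, and normalization forces $c=1$. To obtain the final claim---that the right-invariant Haar measure is automatically left-invariant when $G$ is compact---I would introduce the modular function $\Delta:G\to\mathbb{R}_{>0}$. The measure $S\mapsto\mu_H(gS)$ is again right-invariant, so by the uniqueness just proved it equals $\Delta(g)\,\mu_H$ for a unique scalar $\Delta(g)>0$, and $\Delta$ is a continuous group homomorphism. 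Its image is a compact subgroup of the multiplicative group $(\mathbb{R}_{>0},\times)$, whose only compact subgroup is $\{1\}$; therefore $\Delta\equiv 1$, meaning $\mu_H(gS)=\mu_H(S)$ for all $g$. Thus $\mu_H$ is two-sided invariant and coincides with $\lambda$.

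The genuinely hard step is existence: uniqueness and unimodularity are short once a Haar measure is in hand, whereas producing the invariant functional requires real analytic work. In the averaging approach the crux is proving that $K_f$ contains a \emph{unique} constant---existence of a constant by minimizing the oscillation $\max f-\min f$ over the compact set $K_f$ and showing that any non-constant near-minimizer can have its oscillation strictly reduced by averaging translates, and uniqueness because it is precisely single-valuedness of this constant that makes $I$ well-defined and linear. In the fixed-point approach the same difficulty is repackaged as verifying the hypotheses of the chosen fixed-point theorem for a possibly non-abelian $G$. Since the excerpt only needs $\mu_H$ to exist in order to evaluate Haar integrals, I would, like the authors, cite the full existence construction (Ref.~\cite{Simon1995RepresentationsOF}) and give the uniqueness and unimodularity arguments in detail, these being both short and the conceptually illuminating part.
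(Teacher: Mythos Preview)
Your proposal is correct and, in fact, goes well beyond what the paper does: the paper offers no proof of this proposition at all, merely citing Ref.~\cite{Simon1995RepresentationsOF} for both existence and the remaining claims. You correctly anticipate this and propose the same strategy for existence (cite the reference), while additionally sketching the standard Fubini argument for uniqueness and the modular-function argument for unimodularity---both of which are sound and are exactly the classical route. There is nothing to compare against on the paper's side, so your outline simply supplies what the paper omits.
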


In particular, the average of any polynomial in the unitary entries can be deduced from the so-called moment operator of sufficient order.

\begin{definition}[Moment operator \cite{kliesch_theory_2021}]
\label{def:momop}
Let $G$ be a compact group and $\{u_g\}_{g\in G}\subset U(d)$ a $d$-dimensional unitary representation (i.e., $u_g u_h = u_{gh}$). 
The $\tmom$-th moment operator (with respect to the Haar measure on $G$) is the map 
\begin{align}
\begin{split}
\Phi_{\tmom}: \mathbb{C}^{\tmom d\times \tmom d} &\to \mathbb{C}^{\tmom d\times \tmom d},\\
A  \quad  &\mapsto \int_G u_g^{\dag \otimes \tmom } \: A \: u_g^{\otimes \tmom } \; d\mu_H(g),
\end{split}
\end{align}
where integration is carried out with respect to the unique Haar measure on $G$, $\mu_H$.
\end{definition}

The invariance of the Haar measure turns out to be of great help in characterizing moment operators. In fact, we will now prove a well-known, but nonetheless elegant characterization in terms of commutants:

\begin{definition}[Commutant]  \label{def:comm}
The commutant of a subset $\mathcal{S}\subset \mathbb{C}^{d\times d}$ is the subspace \begin{equation}
\operatorname{comm}(\mathcal{S}) = \{T\in \mathbb{C}^{d \times d} \, \vert \, TS = ST \, ,\forall S\in \mathcal{S}\}.\end{equation}
\end{definition}

\begin{lemma}[Commutant theorem \cite{kliesch_theory_2021}]\label{lemma:comm_theorem} \ 
Let $G$ and $u$ be defined like in Definition \ref{def:momop}.
The $\tmom $-th moment operator is the orthogonal projector (with respect to the Hilbert-Schmidt inner product) onto $\operatorname{comm}(\{u_g^{\otimes \tmom }\}_{g\in G})$.
\begin{proof}
Due to the right-invariance of the Haar measure, it immediately follows that the image of the $\tmom $-th moment operator is contained in $\operatorname{comm}(\{u_g^{\otimes \tmom }\}_{g\in G})$, since
\begin{align}
 u_h^{\otimes \tmom }\int_G u_g^{\dag \otimes \tmom } &A u_g^{ \otimes \tmom }d\mu_H(g) \\
  \nonumber
 = & \int_G u_{g'}^{\dag \otimes \tmom }A u_{g'h}^{\otimes \tmom }\,d\mu_H(g') \\
  \nonumber
 = & \int_G u_{g'}^{\dag \otimes \tmom }A u_{g'}^{\otimes \tmom } u_{h}^{\otimes \tmom }\, d\mu_H(g') \\
  \nonumber
 = & \int_G u_{g'}^{\dag \otimes \tmom }A u_{g'}^{\otimes \tmom }d\mu_H(g') \, u_h^{\otimes \tmom },
  \nonumber
\end{align}
where $g' = gh^{-1}$. It is also apparent that all $A\in \operatorname{comm}(\{u_g^{\otimes \tmom }\}_{g\in G})\setminus\{\mathbf{0}\}$ are $+1$-eigenvectors of the moment operator. What is left to show is that the orthogonal complement with respect to the Hilbert-Schmidt inner product, $\operatorname{comm}(\{u_g^{\otimes \tmom }\}_{g\in G})^{\perp}$, is an invariant subspace. To that effect, take two arbitrary elements $B\in \operatorname{comm}(\{u_g^{\otimes \tmom }\}_{g\in G})$ and $C\in \operatorname{comm}(\{u_g^{\otimes \tmom }\}_{g\in G})^{\perp}$. Then
\begin{align}
    \operatorname{tr}\left(B^{\dag}\int_G u_g^{\dag \otimes \tmom }C u_g^{ \otimes \tmom }d\mu_H(g)\right)  = \operatorname{tr}(B^{\dag}C)= 0\,,
\end{align}
where we have used that $\operatorname{comm}(\{u_g^{\otimes \tmom }\}_{g\in G})$ is closed under taking adjoints, because of unitarity, and the cyclicity of the trace.
\end{proof}
\end{lemma}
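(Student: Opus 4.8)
The plan is to verify directly that the linear map $\Phi_{\tmom}$, viewed on the Hilbert space $(\mathbb{C}^{\tmom d\times \tmom d},\langle\cdot,\cdot\rangle_{\mathrm{HS}})$, satisfies the defining properties of the orthogonal projector onto $W:=\operatorname{comm}(\{u_g^{\otimes \tmom}\}_{g\in G})$. It suffices to establish three facts: (a) $\operatorname{Im}(\Phi_{\tmom})\subseteq W$; (b) $\Phi_{\tmom}$ restricts to the identity on $W$; and (c) $\Phi_{\tmom}$ annihilates $W^\perp$. Indeed, (a) together with (b) yields $\operatorname{Im}(\Phi_{\tmom})=W$ and idempotency, while (c) shows that the idempotent $\Phi_{\tmom}$ vanishes on the orthogonal complement of its image, which is precisely the statement that it is the \emph{orthogonal} projector onto $W$.

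For (a), I would exploit the invariance of the Haar measure. Fixing $h\in G$, I consider $u_h^{\otimes \tmom}\Phi_{\tmom}(A)$ and substitute $g\mapsto gh^{-1}$. Using the representation identities $u_h u_g^\dagger=u_{gh^{-1}}^\dagger$ and $u_g^{\otimes \tmom}=(u_{gh^{-1}}u_h)^{\otimes \tmom}=u_{gh^{-1}}^{\otimes \tmom}u_h^{\otimes \tmom}$, together with the measure invariance $d\mu_H(gh^{-1})=d\mu_H(g)$, the integral is rewritten so that $u_h^{\otimes \tmom}$ reappears on the right, giving $u_h^{\otimes \tmom}\Phi_{\tmom}(A)=\Phi_{\tmom}(A)\,u_h^{\otimes \tmom}$. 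As $h$ is arbitrary, $\Phi_{\tmom}(A)\in W$.

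Fact (b) is immediate: if $A\in W$ then $A$ commutes with every $u_g^{\otimes \tmom}$, so the integrand collapses via $u_g^{\dagger\otimes \tmom}A\,u_g^{\otimes \tmom}=u_g^{\dagger\otimes \tmom}u_g^{\otimes \tmom}A=A$, and normalization $\int_G d\mu_H(g)=1$ gives $\Phi_{\tmom}(A)=A$. For (c), I would show $W^\perp$ is $\Phi_{\tmom}$-invariant and then combine this with (a). Taking $B\in W$ and $C\in W^\perp$, I use that $W$ is closed under the adjoint --- here the \emph{unitarity} of the representation is essential, since $A\,u_g^{\otimes \tmom}=u_g^{\otimes \tmom}A$ implies $A^\dagger u_g^{\dagger\otimes \tmom}=u_g^{\dagger\otimes \tmom}A^\dagger$ --- so $B^\dagger\in W$ and cyclicity of the trace yields $\operatorname{tr}(B^\dagger u_g^{\dagger\otimes \tmom}C\,u_g^{\otimes \tmom})=\operatorname{tr}(B^\dagger C)$ for every $g$. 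Integrating, $\langle B,\Phi_{\tmom}(C)\rangle_{\mathrm{HS}}=\langle B,C\rangle_{\mathrm{HS}}=0$, so $\Phi_{\tmom}(C)\in W^\perp$; but (a) forces $\Phi_{\tmom}(C)\in W$, whence $\Phi_{\tmom}(C)\in W\cap W^\perp=\{0\}$.

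I expect the main obstacle to be the change of variables in step (a): one must pin down that it is precisely the right-invariance of $\mu_H$ that makes the substitution close (the Haar proposition records left- and right-invariance together, but only one orientation is used), and check that the relations $u_{gh}=u_gu_h$ and $u_g^\dagger=u_{g^{-1}}$ survive passage through the $\tmom$-fold tensor power. The second, smaller subtlety flagged above is that the closure of $W$ under adjoints in step (c) genuinely rests on unitarity rather than mere invertibility of the $u_g$; everything else reduces to the normalization and the cyclicity of the trace.
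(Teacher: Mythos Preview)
Your proposal is correct and follows essentially the same approach as the paper: both arguments establish (a) $\operatorname{Im}(\Phi_\tmom)\subseteq W$ via the right-invariance substitution $g'=gh^{-1}$, (b) that $W$ consists of $+1$-eigenvectors by normalization, and (c) that $W^\perp$ is invariant using adjoint-closure of the commutant (from unitarity) together with cyclicity of the trace. The only cosmetic difference is that you make the final step $\Phi_\tmom(C)\in W\cap W^\perp=\{0\}$ explicit, whereas the paper stops at the invariance of $W^\perp$ and leaves this conclusion to the reader.
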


In the spirit of sampling from such unitary ensembles and averaging single-shot estimates, we will from here on denote integration with respect to the Haar measure by the expectation value $\mathbb{E}_{u\sim\{u_g\}_{g\in G}}$.

\subsection{Commutants and moment operators of the symmetric circulant unitary subgroups}

Since it is a compact groups, $CU_{\rm Sym}(L)$ carries a unique invariant measure as explained in the previous section, allowing us to apply the \emph{commutant theorem}. In order to determine the commutant of the unitary representation $\{u^{\otimes \tmom }\}_{u\in CU_{\rm Sym}(L)}$, we begin by decomposing it into irreducibles. As $CU_{\rm Sym}(L)$ is Abelian, the irreducible representations are one-dimensional. In what follows, bold indices indicate a collection of $\tmom$ indices, such as $\boldsymbol{i}\equiv(i_1,\dots ,i_\tmom)$

\begin{proposition}[Irreducible representations of $\{u^{\otimes \tmom }\}_{u\in CU_{\rm Sym}(L)}$]
\label{prop:irreps_of_ut}\ 
    The unitary representation $\{u^{\otimes \tmom }\}_{u \in CU_{\rm Sym}(L)}$ on $V^{\otimes t }=(\mathbb{C}^L)^{\otimes t}$ decomposes into a direct sum of one-dimensional irreducible representations. The representation spaces, which we denote as $V_{\boldsymbol{i} }$, are spanned by the vectors $\boldsymbol{v}^{(i_1)}\otimes\cdots\otimes \boldsymbol{v}^{(i_\tmom )}$, where $\boldsymbol{v}^{(i)}$ are the Fourier vectors defined in Proposition \ref{circulant_prop}. Two such irreps labeled by $\boldsymbol{i}$ and $\boldsymbol{j}$ are equivalent iff $\boldsymbol{j}=\pi(s_1 i_1,\dots,s_\tmom  i_\tmom )$, for some choice of $\pi$ and s, where $\pi\in S_\tmom $ is a permutation and $s$ is a collection of signs $s\in \{-1,+1\}^\tmom  $ (multiplication of indices by a sign should be understood modulo $L$). In such case we write $\boldsymbol{i}\sim \boldsymbol{j}$
\end{proposition}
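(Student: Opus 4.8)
The plan is to exploit the simultaneous Fourier diagonalisation of all elements of $CU_{\rm Sym}(L)$ established in Proposition~\ref{circulant_prop}. First I would write a generic $u\in CU_{\rm Sym}(L)$ in its spectral form $u=\sum_{k=0}^{L-1} e^{i\varphi_k}\boldsymbol{v}^{(k)}\boldsymbol{v}^{(k)\dagger}$, subject only to the symmetry constraint $\varphi_{L-k}=\varphi_k$ (equivalently $\varphi_{-k}=\varphi_k$, all indices modulo $L$). Taking the $t$-fold tensor power, $u^{\otimes t}$ is diagonal in the product Fourier basis $\{\boldsymbol{v}^{(i_1)}\otimes\cdots\otimes\boldsymbol{v}^{(i_t)}\}_{\boldsymbol{i}\in\{0,\dots,L-1\}^t}$, acting on $\boldsymbol{v}^{(i_1)}\otimes\cdots\otimes\boldsymbol{v}^{(i_t)}$ by the scalar $\exp(i\sum_{a=1}^t\varphi_{i_a})$. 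Since this holds simultaneously for every group element, each line $V_{\boldsymbol{i}}=\mathrm{span}(\boldsymbol{v}^{(i_1)}\otimes\cdots\otimes\boldsymbol{v}^{(i_t)})$ is an invariant subspace, and being one-dimensional it is automatically irreducible. As the $V_{\boldsymbol{i}}$ span $V^{\otimes t}$ orthogonally, this furnishes the claimed decomposition into one-dimensional irreps, consistent with $CU_{\rm Sym}(L)$ being Abelian.

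Next I would settle the equivalence criterion. Two one-dimensional representations are equivalent precisely when their characters coincide, and the character attached to $V_{\boldsymbol{i}}$ is the map $u\mapsto\exp(i\sum_a\varphi_{i_a})$. The essential observation is that the constraint $\varphi_k=\varphi_{-k}$ makes the character depend on the indices $i_a$ only through their equivalence classes $[k]=\{k,-k\}$ under negation modulo $L$: grouping the independent phases $\theta_r:=\varphi_r$ (one per class $r$), the character of $V_{\boldsymbol{i}}$ reads $\exp(i\sum_r c_r^{\boldsymbol{i}}\theta_r)$, where $c_r^{\boldsymbol{i}}$ counts how many of the $i_a$ lie in class $r$. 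Hence $V_{\boldsymbol{i}}\cong V_{\boldsymbol{j}}$ iff these class-count vectors agree, i.e.\ iff the multiset of classes $\{[i_1],\dots,[i_t]\}$ equals $\{[j_1],\dots,[j_t]\}$.

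Finally I would translate this multiset condition into the stated form. If the multisets of classes coincide, one may pair each $i_a$ with a $j_b$ in the same class; since $[k]=[-k]$, each such pairing amounts to $j_b=\pm i_a$, so assembling the matching permutation $\pi\in S_t$ and the signs $s\in\{-1,+1\}^t$ yields exactly $\boldsymbol{j}=\pi(s_1 i_1,\dots,s_t i_t)$. Conversely, permuting entries and flipping signs manifestly preserves the multiset of classes. This establishes the relation $\boldsymbol{i}\sim\boldsymbol{j}$ as defined.

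The one step that needs care, and which I expect to be the main subtlety, is the claim that equality of the two characters as functions on the group forces equality of the class-count vectors. This requires noting that the reduced phases $\theta_r$ range independently over the full circle, the only relation being $\varphi_k=\varphi_{-k}$ (with $L$ odd ensuring every nonzero class $\{k,-k\}$ has exactly two distinct elements, hence no accidental coincidences). Consequently the monomials $\exp(i\sum_r c_r\theta_r)$ are distinct, orthonormal characters of the torus and cannot agree unless their exponent vectors match; everything else is bookkeeping on Fourier indices.
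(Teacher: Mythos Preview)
Your argument is correct and follows essentially the same route as the paper: diagonalise $u^{\otimes t}$ in the tensor Fourier basis, read off the one-dimensional characters $u\mapsto\exp(i\sum_a\varphi_{i_a})$, and then compare characters using the constraint $\varphi_k=\varphi_{-k}$. The only difference is one of explicitness: where the paper dispatches the ``only if'' direction by saying it is easy to construct a separating $u$, you spell this out by regrouping into independent class-phases $\theta_r$ and invoking linear independence of torus characters. This is a perfectly valid (and arguably cleaner) way of making the same point. Your aside about $L$ odd is not actually needed here---the class-count argument works equally well for even $L$, where $[0]$ and $[L/2]$ are singleton classes---so you can safely drop that caveat.
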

\begin{proof}
    The vectors $\boldsymbol{v}^{(i_1)}\!\otimes\!\cdots\!\otimes\! \boldsymbol{v}^{(i_\tmom )}$ are eigenvectors of all elements $u^{\otimes \tmom }$. Thus, they span a representation space which is one-dimensional and hence irreducible. More explicitly, $u^{\otimes \tmom }\, \boldsymbol{v}^{(i_1)}\!\otimes\cdots\otimes\!\boldsymbol{v}^{(i_\tmom )} = e^{i\sum_{a=1}^\tmom  \varphi_{i_a}(u)} \, \boldsymbol{v}^{(i_1)}\!\otimes\cdots\!\otimes\! \boldsymbol{v}^{(i_\tmom )}$. Here, $e^{i\varphi_{i_a}\!(u)}$ is the eigenvalue of $u$ corresponding to the eigenvector $\boldsymbol{v}^{(i_a)}$ (see Proposition \ref{circulant_prop}). Two such representation are equivalent iff $\sum_{a=1}^\tmom  \varphi_{i_a}(u)=\sum_{a=1}^\tmom  \varphi_{j_a}(u)$ for all $u$, modulo $2\pi$. Due to the symmetries of $CU_{\rm Sym}(L)$, this is for sure the case if $(j_1,\dots,j_\tmom )=\pi(s_1 i_1,\dots,s_\tmom i_\tmom )$, while otherwise it is easy to construct a $u$ such that it is not the case.
\end{proof}

Using this observation and applying Shur's lemma, one can write down the commutant of $\{ u^{\otimes \tmom} \} _{u\in CU_{\rm Sym}(L)}$:
\begin{proposition}[Commutant of $\{ u^{\otimes \tmom} \} _{u\in CU_{\rm Sym}(L)}$]\label{prop:comm_ut}
    The commutant of $\{ u^{\otimes \tmom } \} _{u\in CU_{\rm Sym}(L)}$ is spanned by the operators 
    \begin{equation}
        \left( \boldsymbol{v}^{(i_1)}\!\otimes\!\cdots\!\otimes\!\boldsymbol{v}^{(i_\tmom )}\right) {\left( \boldsymbol{v}^{(j_1)}\!\otimes\!\cdots\!\otimes\!\boldsymbol{v}^{(j_\tmom )}\right)}^\dag \,\mbox{,      for all    } \boldsymbol{i}, \boldsymbol{j} \mbox{  such that  }\boldsymbol{i}\sim \boldsymbol{j}\,,
    \end{equation}
    where the equivalence $\boldsymbol{i}\sim \boldsymbol{j}$ is defined as in Proposition \ref{prop:irreps_of_ut}. Notice that all distinct operators of this form give rise to an orthonormal basis w.r.t. the Hilbert-Schmidt inner product. We indicate this set of basis operators as $\mathcal{C}_{\tmom}$.
\end{proposition}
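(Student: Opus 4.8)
The plan is to combine the explicit eigenbasis decomposition of Proposition~\ref{prop:irreps_of_ut} with Schur's lemma, carried out concretely as a matrix-element computation in the Fourier basis. Because $CU_{\rm Sym}(L)$ is Abelian, Proposition~\ref{prop:irreps_of_ut} already diagonalizes the whole representation: the tensor-product Fourier vectors $\boldsymbol{v}^{(\boldsymbol{i})}:=\boldsymbol{v}^{(i_1)}\otimes\cdots\otimes\boldsymbol{v}^{(i_t)}$ form an orthonormal eigenbasis of every $u^{\otimes t}$, each spanning a one-dimensional irrep $V_{\boldsymbol{i}}$. Grouping these irreps into isotypic classes under the equivalence $\sim$, Schur's lemma predicts that the commutant is the direct sum over classes of full matrix algebras on the multiplicity spaces, with the one-dimensionality of the irreps rendering the accompanying identity factors trivial; the matrix units of these algebras are precisely the rank-one operators connecting $\sim$-equivalent labels. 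I would make this rigorous by the following short direct argument.

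First I would expand an arbitrary operator in the Fourier eigenbasis,
\[
T=\sum_{\boldsymbol{i},\boldsymbol{j}} T_{\boldsymbol{i},\boldsymbol{j}}\,\boldsymbol{v}^{(\boldsymbol{i})}\bigl(\boldsymbol{v}^{(\boldsymbol{j})}\bigr)^{\dagger},
\]
which is legitimate since the rank-one operators $\boldsymbol{v}^{(\boldsymbol{i})}(\boldsymbol{v}^{(\boldsymbol{j})})^{\dagger}$ form a basis of the whole operator space. Using the eigenvalue relation $u^{\otimes t}\boldsymbol{v}^{(\boldsymbol{i})}=e^{i\Phi_{\boldsymbol{i}}(u)}\boldsymbol{v}^{(\boldsymbol{i})}$ with $\Phi_{\boldsymbol{i}}(u):=\sum_{a=1}^{t}\varphi_{i_a}(u)$ from Proposition~\ref{prop:irreps_of_ut}, conjugation gives
\[
u^{\otimes t}\,T\,\bigl(u^{\otimes t}\bigr)^{\dagger}=\sum_{\boldsymbol{i},\boldsymbol{j}} T_{\boldsymbol{i},\boldsymbol{j}}\,e^{i\left(\Phi_{\boldsymbol{i}}(u)-\Phi_{\boldsymbol{j}}(u)\right)}\,\boldsymbol{v}^{(\boldsymbol{i})}\bigl(\boldsymbol{v}^{(\boldsymbol{j})}\bigr)^{\dagger}.
\]
By linear independence of the basis operators, $T$ lies in the commutant if and only if, for every pair with $T_{\boldsymbol{i},\boldsymbol{j}}\neq 0$, one has $\Phi_{\boldsymbol{i}}(u)=\Phi_{\boldsymbol{j}}(u)\ (\mathrm{mod}\ 2\pi)$ for all $u\in CU_{\rm Sym}(L)$.

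The \emph{key step} is then to recognize this phase-matching condition as the equivalence $\sim$: by the characterization established in the proof of Proposition~\ref{prop:irreps_of_ut}, $\Phi_{\boldsymbol{i}}(u)=\Phi_{\boldsymbol{j}}(u)$ holds for all $u$ exactly when $\boldsymbol{i}\sim\boldsymbol{j}$. Hence $T_{\boldsymbol{i},\boldsymbol{j}}$ may be nonzero only when $\boldsymbol{i}\sim\boldsymbol{j}$, showing that the commutant is spanned by $\{\boldsymbol{v}^{(\boldsymbol{i})}(\boldsymbol{v}^{(\boldsymbol{j})})^{\dagger}:\boldsymbol{i}\sim\boldsymbol{j}\}$, as claimed. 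Orthonormality is then an immediate check: by cyclicity of the trace and orthonormality of the Fourier basis,
\[
\bigl\langle \boldsymbol{v}^{(\boldsymbol{i})}(\boldsymbol{v}^{(\boldsymbol{j})})^{\dagger},\,\boldsymbol{v}^{(\boldsymbol{k})}(\boldsymbol{v}^{(\boldsymbol{l})})^{\dagger}\bigr\rangle_{\mathrm{HS}}=\langle\boldsymbol{v}^{(\boldsymbol{i})}|\boldsymbol{v}^{(\boldsymbol{k})}\rangle\,\langle\boldsymbol{v}^{(\boldsymbol{l})}|\boldsymbol{v}^{(\boldsymbol{j})}\rangle=\delta_{\boldsymbol{i},\boldsymbol{k}}\,\delta_{\boldsymbol{j},\boldsymbol{l}}.
\]
I do not anticipate a serious obstacle; the only delicate point is the ``only if'' direction of the phase-matching characterization — that $\boldsymbol{i}\not\sim\boldsymbol{j}$ genuinely yields a separating $u$ — but this is exactly what Proposition~\ref{prop:irreps_of_ut} supplies, resting on the fact that the eigenphases $\varphi_k$ of elements of $CU_{\rm Sym}(L)$ are unconstrained apart from the single relation $\varphi_{L-k}=\varphi_k$.
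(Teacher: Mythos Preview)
Your proposal is correct and essentially identical to the paper's proof: both expand an arbitrary operator in the Fourier tensor basis, use the one-dimensional irrep decomposition of Proposition~\ref{prop:irreps_of_ut}, and conclude via the phase-matching condition (equivalently, Schur's lemma for one-dimensional irreps) that only blocks with $\boldsymbol{i}\sim\boldsymbol{j}$ survive. The paper phrases the key step as an explicit invocation of Schur's lemma on the block matrix $A_{\boldsymbol{i},\boldsymbol{j}}$, whereas you write out the eigenphase computation directly, but since the irreps are one-dimensional these are the same argument.
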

\begin{proof}
    We want to constrain the vector space of linear operators $A$ on $V^{\otimes \tmom} $ such that $A u^{\otimes \tmom } = u^{\otimes \tmom } A$. Decomposing $u^{\otimes \tmom }$ into the $m = L^\tmom $ irreducible representations on $V_{\boldsymbol{i}}$ introduced in Proposition \ref{prop:irreps_of_ut}, we get
    \begin{equation}
        \begin{pmatrix} A_{\boldsymbol{i}_0,\boldsymbol{i}_0} & A_{\boldsymbol{i}_0,\boldsymbol{i}_1} & \cdots & A_{\boldsymbol{i}_0,\boldsymbol{i}_m}\\ \vdots & \vdots & & \vdots\\ A_{\boldsymbol{i}_m,\boldsymbol{i}_0} & A_{\boldsymbol{i}_m,\boldsymbol{i}_1} & \cdots & A_{\boldsymbol{i}_m,\boldsymbol{i}_m} \end{pmatrix} \begin{pmatrix} \rho_{\boldsymbol{i}_0}(u) &  &\\  & \ddots & \\& & \rho_{\boldsymbol{i}_m}(u) \end{pmatrix} = \begin{pmatrix} \rho_{\boldsymbol{i}_0}(u) &  &\\  & \ddots & \\& & \rho_{\boldsymbol{i}_m}(u) \end{pmatrix} \begin{pmatrix} A_{\boldsymbol{i}_0,\boldsymbol{i}_0} & A_{\boldsymbol{i}_0,\boldsymbol{i}_1} & \cdots & A_{\boldsymbol{i}_0,\boldsymbol{i}_m}\\ \vdots & \vdots & & \vdots\\ A_{\boldsymbol{i}_m,\boldsymbol{i}_0} & A_{\boldsymbol{i}_m,\boldsymbol{i}_1} & \cdots & A_{\boldsymbol{i}_m,\boldsymbol{i}_m} \end{pmatrix}\,,
    \end{equation}
    where $\rho_{\boldsymbol{i}}(u)$ is the one-dimensional irreducible action of $u^{\otimes \tmom }$ on $V_{\boldsymbol{i}}$. Therefore, $A_{\boldsymbol{i,j}} \rho_{\boldsymbol{j}}(u)= \rho_{\boldsymbol{i}}(u) A_{\boldsymbol{i,j}}$ for every $u\in CU_{\rm Sym}(L)$. As $\rho_{\boldsymbol{i}}$ and $\rho_{\boldsymbol{j}}$ are irreps, due to Shur's lemma $A_{\boldsymbol{i,j}}$ can be non-vanishing only if $\rho_{\boldsymbol{i}}$ is equivalent to  $\rho_{\boldsymbol{j}}$, that is $\boldsymbol{i}\sim\boldsymbol{j}$. This constrains $A$ to be in the span of $\left( \boldsymbol{v}^{(i_1)}\!\otimes\!\cdots\!\otimes\! \boldsymbol{v}^{(i_\tmom )}\right) {\left( \boldsymbol{v}^{(j_1)}\!\otimes\!\cdots\!\otimes\! \boldsymbol{v}^{(j_\tmom )}\right)}^\dag$, for all $\boldsymbol{i}$ and $\boldsymbol{j}$ such that $\boldsymbol{i}\sim\boldsymbol{j})$. Conversely, each of these orthonormal operators independently commutes with $u^{\otimes \tmom }$, thus they form a basis of the commutant.
\end{proof}

Thanks to Lemma \ref{lemma:comm_theorem}, we have that the $\tmom$-th moment operator can be written as the orthogonal projector onto the commutant of $\{ u^{\otimes \tmom} \} _{u\in CU_{\rm Sym}(L)}$, that is
\begin{align}
    \Phi_\tmom(\cdot)&\equiv{\mathbb{E}}_{u \sim CU_{\rm Sym}(L)}  {u}^{\dag \otimes \tmom} \:(\,\cdot \,)\: u^{\otimes \tmom} =\nonumber \\
    &= \sum_{C\in\mathcal{C}_\tmom} \Tr [C^\dag (\cdot)] C \nonumber \\
    &=\sum_{\substack{\boldsymbol{i}, \boldsymbol{j}\\ s.t.\, \boldsymbol{i}\sim \boldsymbol{j}}} \Tr[\left( \boldsymbol{v}^{(j_1)}\!\otimes\!\cdots\!\otimes\!\boldsymbol{v}^{(j_\tmom )}\right) {\left( \boldsymbol{v}^{(i_1)}\!\otimes\!\cdots\!\otimes\!\boldsymbol{v}^{(i_\tmom )}\right)}^\dag(\cdot)] \left( \boldsymbol{v}^{(i_1)}\!\otimes\!\cdots\!\otimes\!\boldsymbol{v}^{(i_\tmom )}\right) {\left( \boldsymbol{v}^{(j_1)}\!\otimes\!\cdots\!\otimes\!\boldsymbol{v}^{(j_\tmom )}\right)}^\dag\,. \label{eq:t_moment_op}
\end{align}
Evaluating individual matrix elements of this expression, we have
\begin{align}
    \Phi_{\tmom}(\boldsymbol{k},\boldsymbol{l};\boldsymbol{p},\boldsymbol{q}) \equiv \underset{u}{\mathbb{E}} \; \overline{u}_{k_1,p_1} \cdots \overline{u}_{k_\tmom,p_\tmom} 
    \: u_{l_1, q_1} \cdots u_{l_\tmom, q_\tmom}
    = \sum_{\substack{\boldsymbol{i},\boldsymbol{j}:\\\boldsymbol{i}\sim\boldsymbol{j}}}\: 
    \overline{v}^{(i_1)}_{k_1}\cdots \overline{v}^{(i_\tmom)}_{k_\tmom}
    \: v^{(j_1)}_{l_1} \cdots v^{(j_\tmom)}_{l_\tmom}  
    \: v^{(i_1)}_{p_1}\cdots v^{(i_\tmom)}_{p_\tmom} 
    \:\overline{v}^{(j_1)}_{q_1} \cdots \overline{v}^{(j_\tmom)}_{q_\tmom} \,. \label{eq:t_moment_op_elements}
\end{align}

\subsubsection{Second moments}
Specialising to $t=2$ and $L$ odd, we have that an explicit orthonormal basis of the commutant of $\{ u^{\otimes 2} \} _{u\in CU_{\rm Sym}(L)}$ is given by the set $\mathcal{C}_2$ of operators listed below:
\begin{alignat}{2}
\label{eq:setW}
\mathcal{C}_2 =
     \Big\{& (\boldsymbol{v}^{(n)} \otimes \boldsymbol{v}^{(m)})  \, (\boldsymbol{v}^{(n)} \otimes \boldsymbol{v}^{(m)})^\dag  \hspace{4em} && \forall n,m\,,   \nonumber \\
    &(\boldsymbol{v}^{(n)} \otimes \boldsymbol{v}^{(m)})  \, (\boldsymbol{v}^{(-n)} \otimes \boldsymbol{v}^{(m)})^\dag  && \forall m,\, n\neq 0\,, \nonumber\\
    &  (\boldsymbol{v}^{(n)} \otimes \boldsymbol{v}^{(m)})  \, (\boldsymbol{v}^{(n)} \otimes \boldsymbol{v}^{(-m)})^\dag  && \forall n,\, m\neq 0\,, \nonumber \\ 
    &  (\boldsymbol{v}^{(n)} \otimes \boldsymbol{v}^{(m)})  \, (\boldsymbol{v}^{(-n)} \otimes \boldsymbol{v}^{(-m)})^\dag  && \forall n\neq 0 ,\, m\neq 0\,, \nonumber\\
    &  (\boldsymbol{v}^{(n)} \otimes \boldsymbol{v}^{(m)})  \, (\boldsymbol{v}^{(m)} \otimes \boldsymbol{v}^{(n)})^\dag  && \forall n\neq \pm m \,, \nonumber \\
    &(\boldsymbol{v}^{(n)} \otimes \boldsymbol{v}^{(m)})  \, (\boldsymbol{v}^{(m)} \otimes \boldsymbol{v}^{(-n)})^\dag  && \forall n\neq \pm m ,\, n\neq 0 \,, \nonumber \\
    &  (\boldsymbol{v}^{(n)} \otimes \boldsymbol{v}^{(m)})  \, (\boldsymbol{v}^{(-m)} \otimes \boldsymbol{v}^{(n)})^\dag  && \forall n\neq \pm m,\, m\neq 0 \,, \nonumber\\ 
    & (\boldsymbol{v}^{(n)} \otimes \boldsymbol{v}^{(m)})  \, (\boldsymbol{v}^{(-m)} \otimes \boldsymbol{v}^{(-n)})^\dag  && \forall n\neq \pm m,\, n\neq 0 ,\, m\neq 0 \hspace{2em} \Big\},
\end{alignat}
where indices should be understood as modulo $L$. 
The constraints on the indices simply ensure that no operator is counted twice. We note that the dimension of the commutant is $8L^2 -16L +9$.

In the following sections we will need some quantities related to the moment operator $\Phi_2$. In particular we will need
\begin{align}
    \sum_p \mathbb{E}_{u\sim CU_{\rm Sym}(L)} \left(u^{\dag \otimes 2} \right)_{pp,ki} \left(u^{\otimes 2}\right)_{lj,pp}&=\sum_p \Phi_2(k,i,l,j;p,p,p,p)\nonumber\\
    &=\sum_p \left[\sum_{n,m} \,\overline{v}^{(n)}_k \,\overline{v}^{(m)}_i v^{(n)}_l v^{(m)}_j \; v^{(n)}_p v^{(m)}_p \,\overline{v}^{(n)}_p \,\overline{v}^{(m)}_p \right.\nonumber \\
    &\hspace{8mm}+\sum_{n\neq 0,m} \,\overline{v}^{(n)}_k \,\overline{v}^{(m)}_i v^{(-n)}_l v^{(m)}_j \; v^{(n)}_p v^{(m)}_p \,\overline{v}^{(-n)}_p \,\overline{v}^{(m)}_p \nonumber \\
    &\hspace{8mm}+\sum_{n,m\neq 0} \,\overline{v}^{(n)}_k \,\overline{v}^{(m)}_i v^{(n)}_l v^{(-m)}_j \; v^{(n)}_p v^{(m)}_p \,\overline{v}^{(n)}_p \,\overline{v}^{(-m)}_p \nonumber \\
    &\hspace{8mm}+\sum_{n\neq 0,m\neq 0} \,\overline{v}^{(n)}_k \,\overline{v}^{(m)}_i v^{(-n)}_l v^{(-m)}_j \; v^{(n)}_p v^{(m)}_p \,\overline{v}^{(-n)}_p \,\overline{v}^{(-m)}_p \nonumber \\
    &\hspace{8mm}+\sum_{\substack{n,m \\ n\neq \pm m }} \,\overline{v}^{(n)}_k \,\overline{v}^{(m)}_i v^{(m)}_l v^{(n)}_j  \; v^{(n)}_p v^{(m)}_p \,\overline{v}^{(m)}_p \,\overline{v}^{(n)}_p \nonumber \\
    &\hspace{8mm}+\sum_{\substack{n\neq 0,m \\ n\neq \pm m }} \,\overline{v}^{(n)}_k \,\overline{v}^{(m)}_i v^{(m)}_l v^{(-n)}_j \; v^{(n)}_p v^{(m)}_p \,\overline{v}^{(m)}_p \,\overline{v}^{(-n)}_p \nonumber \\
    &\hspace{8mm}+\sum_{\substack{n,m\neq 0 \\ n\neq \pm m }} \,\overline{v}^{(n)}_k \,\overline{v}^{(m)}_i v^{(-m)}_l v^{(n)}_j \; v^{(n)}_p v^{(m)}_p \,\overline{v}^{(-m)}_p \,\overline{v}^{(n)}_p \nonumber \\
    &\hspace{8mm}+\sum_{\substack{n\neq 0,m\neq 0 \\ n\neq \pm m }} \left.\,\overline{v}^{(n)}_k \,\overline{v}^{(m)}_i v^{(-m)}_l v^{(-n)}_j \; v^{(n)}_p v^{(m)}_p \,\overline{v}^{(-m)}_p \,\overline{v}^{(-n)}_p \right]\,,
\end{align}
which we have evaluated by expanding it according to~\eqref{eq:t_moment_op_elements} and~\eqref{eq:setW}. Here, $v^{(n)}_l=\frac{1}{\sqrt{L}}\omega^{nl}$ is the $l$-th component of the vector $\boldsymbol{v}^{(n)}$. By using $\sum_p \omega^{pn}=L \delta_{n,0}$, where $\delta$ is the Kroenecker delta function on $\mathbb{Z}_L$ (\ie it constrains its arguments to be equal modulo $L$), we further have
\begin{align}
    \sum_p \mathbb{E}_{u\sim CU_{\rm Sym}(L)} &\left(u^{\dag \otimes 2} \right)_{pp,ki} \left(u^{\otimes 2}\right)_{lj,pp}= \nonumber\\[2mm]
    =& \frac{1}{L^3}\left[ \, \sum_{n,m} \omega^{n(l-k)} \,  \omega^{m(j-i)} \right.\nonumber \\
    &\hspace{8mm}+ \sum_{n\neq 0,m} \omega^{n(-l-k)} \,  \omega^{m(j-i)} \;\delta_{2n,0} \nonumber \\
    &\hspace{8mm}+\sum_{n,m\neq 0} \omega^{n(l-k)} \,  \omega^{m(-j-i)} \delta_{2m,0} \nonumber \\
    &\hspace{8mm}+\sum_{n\neq 0,m\neq 0} \omega^{n(-l-k)} \,  \omega^{m(-j-i)} \delta_{2n+2m,0}\nonumber \\
    &\hspace{8mm}+\sum_{\substack{n,m \\ n\neq \pm m }} \omega^{n(j-k)} \,  \omega^{m(l-i)} \nonumber \\
    &\hspace{8mm}+\sum_{\substack{n\neq 0,m \\ n\neq \pm m }} \omega^{n(-j-k)} \,  \omega^{m(l-i)} \delta_{2n,0} \nonumber \\
    &\hspace{8mm}+\sum_{\substack{n,m\neq 0 \\ n\neq \pm m }} \omega^{n(j-k)} \,  \omega^{m(-l-i)} \delta_{2m,0} \nonumber \\
    &\hspace{8mm}+  \sum_{\substack{n\neq 0,m\neq 0 \\ n\neq \pm m }} \omega^{n(-j-k)} \,  \omega^{m(-l-i)} \delta_{2n+2m,0} \biggr]\\ \nonumber\\
    =& \frac{1}{L^3}\left[ \, \sum_{n,m} \omega^{n(l-k)} \,  \omega^{m(j-i)} +\sum_{n\neq 0,m\neq 0} \omega^{n(-l-k)} \,  \omega^{m(-j-i)} \delta_{2n+2m,0}+\sum_{\substack{n,m \\ n\neq \pm m }} \omega^{n(j-k)} \,  \omega^{m(l-i)} \right] \label{eq:2moment_CUs_intermediate}\\ \nonumber \\
    =& \frac{1}{L^3}\left[ \, \sum_{n,m} \omega^{n(l-k)} \,  \omega^{m(j-i)} +\sum_{n\neq 0} \omega^{n(i+j-l-k)}+\sum_{\substack{n,m \\ n\neq m }} \omega^{n(j-k)} \,  \omega^{m(l-i)} -  \sum_{\substack{n \neq 0}} \omega^{n(j-k-l+i)} \right] \\ \nonumber \\
    =& \frac{1}{L^3}\left[ \, \sum_{n,m} \omega^{n(l-k)} \,  \omega^{m(j-i)} +\sum_{n,m } \omega^{n(j-k)} \,  \omega^{m(l-i)} -\sum_{n} \omega^{n(j-k+l-i)}  \right] \\ \nonumber \\
    =&\frac{1}{L} \left[ \delta_{l,k} \delta_{i,j} + \delta_{j,k} \delta_{l,i} - \frac{1}{L} \,\delta_{j-k,i-l} \right] \,,
    \label{eq:matrixelementmoment}   
\end{align}
In \eqref{eq:2moment_CUs_intermediate}, we have used the assumption that $L$ is odd (which implies that $\delta_{2n,0}$ is non-vanishing only for $n=0$) to eliminate some terms.

\subsubsection{Fourth moments}
\label{sec:fourthmomentop}

The commutant of $\{u^{\otimes 4}\}_{u\in CU_{\rm Sym}(L)}$ is given by the span of the operators $\left( \boldsymbol{v}^{(i_1)}\!\otimes\!\cdots\!\otimes\!\boldsymbol{v}^{(i_4)}\right) {\left( \boldsymbol{v}^{(j_1)}\!\otimes\!\cdots\!\otimes\!\boldsymbol{v}^{(j_4)}\right)}^\dag$, for all $\boldsymbol{i}\equiv(i_1,\dots,i_4)$ and $\boldsymbol{j}\equiv(j_1,\dots,j_4)$ such that $\boldsymbol{i}\sim\boldsymbol{j}$, like in Proposition~\ref{prop:irreps_of_ut}. Therefore, the fourth moment operator can be expressed as
\begin{align}
    \label{eqn:fourth_moment_operator}
    \mathbb{E}_{u \sim CU(L)} {u^\dag}^{\otimes 4} (\cdot ) u^{\otimes 4} &= \sum_{\substack{(\boldsymbol{i},\boldsymbol{j} )\\ \boldsymbol{i}\sim \boldsymbol{j} }} \Tr [{\left( \boldsymbol{v}^{(j_1)}\!\otimes\!\cdots\!\otimes\!\boldsymbol{v}^{(j_4)}\right)} {\left( \boldsymbol{v}^{(i_1)}\!\otimes\!\cdots\!\otimes\!\boldsymbol{v}^{(i_4)}\right)}^\dag (\cdot)] \nonumber\\
    &\hspace{60mm}\times\left( \boldsymbol{v}^{(i_1)}\!\otimes\!\cdots\!\otimes\!\boldsymbol{v}^{(i_4)}\right) {\left( \boldsymbol{v}^{(j_1)}\!\otimes\!\cdots\!\otimes\!\boldsymbol{v}^{(j_4)}\right)}^\dag \,.
\end{align}
As for second moments, we will only need expressions for specific quantities related to this fourth moment operator in subsequent derivations, namely
\begin{equation}
\label{eqn:fourth_moment_quantity}
\sum_{p_1,p_2}\Phi_4(\boldsymbol{k},\boldsymbol{m},\boldsymbol{l},\boldsymbol{n};\boldsymbol{p},\boldsymbol{p},\boldsymbol{p},\boldsymbol{p})=\sum_{p_1,p_2} \mathbb{E}_u \left({u^{\dag \otimes 4}}\right)_{p_1 p_2 p_1 p_2, k_1 k_2 m_1 m_2} \left(u^{\otimes 4} \right)_{l_1 l_2 n_1 n_2, p_1 p_2 p_1 p_2}.
\end{equation}
However, since it may be of independent value, we compute the expression for a general entry of the fourth moment operator, giving \ref{eqn:fourth_moment_quantity} as a simple corollary. Since all expressions will be anti-symmetrized later on, see Eq.~ (\ref{eq:coefficients_Mk}), we will compute entries up to an additive constant. This constant can be easily determined by evaluating $\Phi_4(\boldsymbol{0},\boldsymbol{0},\boldsymbol{0},\boldsymbol{0}; \boldsymbol{0},\boldsymbol{0}, \boldsymbol{0}, \boldsymbol{0})$. As all unitaries in the ensemble are circulant, such entry will only depend on differences of indices, i.e.,
\begin{equation}
\Phi_4(\boldsymbol{k},\boldsymbol{l},\boldsymbol{m},\boldsymbol{n};\boldsymbol{p},\boldsymbol{q},\boldsymbol{r},\boldsymbol{s}) = \Phi_4(\boldsymbol{k}\rm{-}\boldsymbol{p},\boldsymbol{l}\rm{-}\boldsymbol{q},\boldsymbol{m}\rm{-}\boldsymbol{r},\boldsymbol{n}\rm{-}\boldsymbol{s};\boldsymbol{0},\boldsymbol{0},\boldsymbol{0},\boldsymbol{0})=\Phi_4((a_1, a_2), (a_3, a_4), (b_1, b_2), (b_3, b_4), \boldsymbol{0}, \boldsymbol{0}, \boldsymbol{0}, \boldsymbol{0}),
\end{equation}
where negative indices are understood modulo $L$, and the newly-defined variables $a_i$, $b_i$ reflect the true number of degrees of freedom. From
Eq.~(\ref{eqn:fourth_moment_operator}), it follows that
\begin{align}
    & \Phi_4(\boldsymbol{k},\boldsymbol{l},\boldsymbol{m},\boldsymbol{n};\boldsymbol{p},\boldsymbol{q},\boldsymbol{r},\boldsymbol{s}) = \frac{1}{L^8} \sum_{\substack{(\boldsymbol{i},\boldsymbol{j} )\\ \boldsymbol{i}\sim \boldsymbol{j} }} \omega^{-\boldsymbol{i}\cdot \boldsymbol{a}+\boldsymbol{j}\cdot \boldsymbol{b}}.
\end{align}

To proceed, one needs to analyse all distinct choices of $\boldsymbol{i}$ and $\boldsymbol{j}$ such that $\boldsymbol{j} = \pi(s_1i_1,s_2i_2,s_3i_3,s_4i_4)$, for an arbitrary permutation $\pi\in S_4$ and collection of signs $\boldsymbol{s}\in\{\pm 1\}^4$. Special care has to be taken to not overcount any tuple $(\boldsymbol{i},\boldsymbol{j})$, since for a fixed $\boldsymbol{i}$ several choices of $\pi$, $\boldsymbol{s}$ may give rise to the same $\boldsymbol{j}$. For instance, a string $\boldsymbol{i}$ may contain repeated indices, in which case the total number of permutations acting non-trivially on it is reduced - adding signs to the mix may further decrease the number of such permutations. In addition, zero entries are of course unaffected by any sign they are multiplied by. To conveniently count all unique tuples, we employ the following strategy:

We may associate any $\boldsymbol{i}\in \{0,\dots, L-1\}^4$ with the corresponding sequence of "positive" indices $\Tilde{\boldsymbol{i}}\in \{0,\dots, \lfloor \frac{L}{2}\rfloor\}$. One obtains $\Tilde{\boldsymbol{i}}$ from $\boldsymbol{i}$ by leaving all indices $0\leq \boldsymbol{i}_k \leq \lfloor \frac{L}{2} \rfloor$ unchanged, and mapping any $\boldsymbol{i}_k > \lfloor \frac{L}{2} \rfloor$ to $-\boldsymbol{i}_k \equiv L-\boldsymbol{i}_k$. The resulting sequence $\Tilde{\boldsymbol{i}}$ will have a multiplicity pattern given by some partition $\boldsymbol{\lambda}\vdash4$. A partition of a positive integer $n$ is an ordered sequence of integers for which the sum of all elements is $n$.  For example, the partition $(3,1)$ captures $\Tilde{\boldsymbol{i}}$ that have two distinct indices, one of which appears with multiplicity $3$. By acting with a suitable permutation on $\Tilde{\boldsymbol{i}}$, we may order the sequence of indices according to $\lambda$ (different indices that appear with the same multiplicity are put in some arbitrary, but fixed order). As such, it is clear that every $\boldsymbol{i}$ may be obtained from an ordered sequence $\boldsymbol{i}^*$, corresponding to a $\boldsymbol{\lambda}\vdash 4$, by acting on it with suitable sign and permutation transformations. By extension, any tuple $(\boldsymbol{i},\boldsymbol{j})$ for which $\boldsymbol{i} \sim \boldsymbol{j}$ may be obtained from $(\boldsymbol{i}^*,\boldsymbol{i}^*)$ in this way, if we allow for independent signs and permutations. This insight provides us with a convenient way of addressing the problem of overcounting, and allows us to write

\begin{equation}
\label{eqn:monster}
     \sum_{\substack{(\boldsymbol{i},\boldsymbol{j} )\\ \boldsymbol{i}\sim \boldsymbol{j} }} \omega^{-\boldsymbol{i}\cdot \boldsymbol{a}+\boldsymbol{j}\cdot \boldsymbol{b}}=\sum_{\boldsymbol{\lambda}\vdash 4}\frac{1}{m(\boldsymbol{\lambda})}\sum_{\substack{i_1\neq \dots \neq i_{\abs{\boldsymbol{\lambda}}}, \\ 0\leq i_p \leq \lfloor \frac{L}{2}\rfloor}} \sum_{\sigma, \sigma' \in \{\pm 1\}^4}\frac{1}{\left( \boldsymbol{\lambda}!\right)^2}\sum_{\pi, \pi'\in S_4} \omega^{\boldsymbol{i}^{*}\cdot\left(\sigma\pi(\boldsymbol{a})+\sigma'\pi'(\boldsymbol{b})\right)} \prod_{k=1}^{\abs{\boldsymbol{\lambda}}}\left(1-\delta_{i_k,0}\frac{4^{\boldsymbol{\lambda}_k}-1}{4^{\boldsymbol{\lambda}_k}}\right),
\end{equation}
where $\boldsymbol{\lambda}!=\prod_{k}\boldsymbol{\lambda}_k!$,\, and $\boldsymbol{i}^{*}$ is the ordered sequence of integers belonging to the partition $\boldsymbol{\lambda}$ (subscript omitted for readability) and with corresponding entries $i_p$, as detailed in the paragraph above. The function $m(\boldsymbol{\lambda})$ is defined as
\begin{equation}
m(\boldsymbol{\lambda}) =   
    \begin{cases}
    4!, & \rm if \; \boldsymbol{\lambda} = (1,1,1,1) ,\\
    2, & \rm if \; \boldsymbol{\lambda} = (2,1,1), (2,2) ,\\
    1 , & \rm otherwise    .    
    \end{cases}
\end{equation}
Instead of summing over all valid "positive", ordered sequences $\boldsymbol{i}^{*}$ belonging to $\boldsymbol{\lambda}$, we perform a summation over distinct entries $i_p$ of $\boldsymbol{i}^{*}$. However, relabelling sub-blocks of equal multiplicity leaves $\boldsymbol{i}^{*}$ invariant and results in overcounting. Hence, we divide by $m(\lambda)$, which is defined to account for exactly this. Furthermore, we divide by $4$ for every entry of $\boldsymbol{i}^{*}$ that is zero, since both $\boldsymbol{i}$ and $\boldsymbol{j}$ will have an entry that is invariant under assignment of signs . This is taken care of by the product on the very right-hand side of Eq.~(\ref{eqn:monster}) -- each term acts non-trivially only if $i_k = 0$. Lastly, the only remaining source of overcounting is due to permutations $\pi, \pi'$ acting trivially on $\boldsymbol{i}^{*}$ by swapping duplicate entries. The number of trivial permutations depends on the multiplicity pattern of $\boldsymbol{i}^{*}$, and is given by $(\boldsymbol{\lambda})!$\,.

Equation (\ref{eqn:monster}) can be re-written as a linear combination of products of delta functions on $\mathbb{Z}_L$. To do so, we write the summation over distinct indices in terms of independent summations, where individual indices may be identified, for example $\sum_{i_1 \neq i_2} = \sum_{i_1, i_2} - \sum_{i_1 = i_2}$. Again, one has to pay special attention to not overcount any sequence of indices. 
Before we may replace sums of powers of $\omega$ with delta functions, we extend all limits of summation from $\lfloor \frac{L}{2} \rfloor$ to $L-1$. Since every term involves a summation over all assignments of signs $\sigma, \sigma'$, mapping $i_p$ to $-i_p$ is a symmetry transformation. Moreover, all indices in $\{0,\dots, L-1\}$ may be reached by starting from a "positive" index $0\leq i_p\leq \lfloor\frac{L}{2}\rfloor$ and assigning an appropriate sign to it (for indices that are zero this assignment will not be unique). Making use of these symmetry properties, we extend limits of summation by averaging over all possible assignments of signs to the $i_p$, taking special care of indices that are zero. For example, we may write 
\begin{equation}    
\sum_{\substack{i_1, i_2 \\ 0\leq i_p \leq \lfloor \frac{L}{2}\rfloor}}\left(\cdot\right) = \frac{1}{4}\sum_{\substack{i_1, i_2 \\ 0\leq i_p \leq L-1}} \left(\cdot\right)\left(1+\delta_{i_1,0}\right)\left(1+\delta_{i_2,0}\right).
\end{equation}
After some combinatorial effort one finds, up to an additive constant,

\begin{align}
\label{eqn:beast}
& \sum_{\substack{(\boldsymbol{i},\boldsymbol{j} )\\ \boldsymbol{i}\sim \boldsymbol{j} }} \omega^{-\boldsymbol{i}\cdot \boldsymbol{a}+\boldsymbol{j}\cdot \boldsymbol{b}} \equiv \sum_{\sigma, \sigma' \in \{\pm 1\}^4} \left[\frac{L^4}{2^4}\sum_{A^{(1,1,1,1)}}\delta(A^{(1,1,1,1)}_1) \delta(A^{(1,1,1,1)}_2) \delta(A^{(1,1,1,1)}_3) \delta(A^{(1,1,1,1)}_4) \right. \nonumber \\
&\hspace{50mm} -\frac{L^3}{2^5}\sum_{A^{(1,1,1)}}\delta(A^{(1,1,1)}_1) \delta(A^{(1,1,1)}_2) \delta(A^{(1,1,1)}_3) \nonumber \\
&\hspace{50mm} -\frac{L^3}{2^3} \sum_{A^{(2,1,1)}}\delta(A^{(2,1,1)}_1)\delta(A^{(2,1,1)}_2)\delta(A^{(2,1,1)}_3) \nonumber \\
&\hspace{50mm} +\frac{L^2}{2^4}\sum_{A^{(2,1)}}\delta(A^{(2,1)}_1)\delta(A^{(2,1)}_2) + \frac{9 L^2}{2^6}\sum_{A^{(1,1)}}\delta(A^{(1,1)}_1)\delta(A^{(1,1)}_2) \nonumber \\
&\hspace{50mm}+\frac{L^2}{2^2}\sum_{A^{(2,2)}}\delta(A^{(2,2)}_1)\delta(A^{(2,2)}_2) - \frac{9L^2}{2^5}\sum_{A^{(2)}}\delta(A^{(2)}_1) \nonumber \\
& \hspace{50mm} + L^2\sum_{A^{(3,1)}}\delta(A^{(3,1)}_1) \delta(A^{(3,1)}_2) -\frac{L}{2} \sum_{A^{(3)}}\delta(A^{(3)}_1) \nonumber \\
& \hspace{50mm} \left. -\frac{193L}{2^7}\sum_{A^{(1)}}\delta(A^{(1)}_1)  + L\sum_{A^{(4)}}\delta(A^{(4)}_1)\hspace{40mm}\right],
\end{align}
where $\delta$ is the Kroenecker delta function that is non-vanishing only if its argument is equal to $0$ modulo $L$. The sums over $A^{(p,q,\dots)}$ represent sums over all the possible ways of doing the following operation: consider the two sets of indices $\{a_1, a_2, a_3, a_4\}$ and $\{b_1, b_2, b_3, b_4\}$ and then consider all the possible ways of grouping them in such a way that the first group contains $p$ indices from the first set and $p$ from the second set, the second group contains $q$ indices from the first set and $q$ from the second set, and so on. The sums of the indices in each group are the components of $A^{(p,q,\dots)}_i$. For instance, one possible valid choice for $A^{(2,1,1)}$ would be
\begin{align}
    A^{(2,1,1)}_1&=a_1+a_4+b_1+b_2,  \\
    A^{(2,1,1)}_2&=a_2+b_4, \\
    A^{(2,1,1)}_3&=a_3+b_3 \,.
\end{align}
The total number of terms in (\ref{eqn:beast}) which arises from these sums over all possible choices of $A^{(p,q,\dots)}$ is a fixed combinatorial quantity that does not depend on $L$ and is of order $\sim 100$. Summation with respect to the variables $\sigma, \sigma'$ indicates that one must include all possible assignments of signs to the sets $\{a_1,a_2,a_3,a_4\}$ and $\{b_1,b_2,b_3,b_4\}$.

\section{Constructing unbiased estimators}
In this section we will review the classical shadows formalism \cite{huang_predicting_2020} and tailor it to the fermionic case we are considering.
The classical shadows formalism allows for estimating expectation values $\Tr(\rho O_i)$ of some operators $\{O_i\}_i$, with respect to a given quantum state $\rho$.
One round of the estimation procedure consists of drawing a random unitary $U$ from a suitable ensemble $\mathcal{U}$ (mostly uniformly from some unitary subgroup), evolving $\rho$ accordingly, and measuring the system. In our case, the $O_i$ are correlation operators (see Definition \ref{def_CorrOp}), evolution is restricted to free, particle number preserving, translationally invariant, and symmetric mode space transformations, and measurements are performed with respect to the occupation number basis.

The outcome $(U,\boldsymbol{n})$ of each round, with occupation numbers $\boldsymbol{n}$, is stored in a classical memory. The memory to store the outcome of a single shot must only grow linearly in the number of modes.
After repeating this prepare-and-measurement cycle, a suitable number of times, the expectation values $\Tr(\rho O_i)$ are estimated in post-processing as we explain below.
Consider the so-called 
\emph{measurement channel} \cite{huang_predicting_2020}
\begin{align}
    \mathcal{M}(\cdot):=\sum_{\boldsymbol{n}\in \{0,1\}^L}\underset{U \sim \mathcal{U} }{\mathbb{E}}\,\bra{\boldsymbol{n}}U (\cdot) U^\dagger\ket{\boldsymbol{n}} U^\dagger\ketbra{\boldsymbol{n}}{\boldsymbol{n}}U .
\end{align}
One can verify that this linear operator is self-adjoint with respect to the Hilbert-Schmidt inner product (i.e., 
$\Tr\left[\mathcal{M}(A)^\dag B\right]=\Tr\left[A^\dag \mathcal{M}(B)\right]$ for all operators $A$ and $B$).
For our choices of unitary ensemble, $\mathcal{M}$ is not invertible. However, we may consider its pseudo-inverse $\mathcal{M^+}$. The Moore-Penrose pseudo-inverse $\mathcal{M^+}$ of a self-adjoint channel $\mathcal{M}$ acts like an inverse on the image of $\mathcal{M}$ and like the zero operator on the kernel $\ker{(\mathcal{M})}$, i.e., 
\begin{equation}
    \mathcal{M}\circ \mathcal{M^+}=\mathcal{M^+}\circ \mathcal{M}=\mathcal{P}_{\Im(\mathcal{M})}\,,
\end{equation}
where $\mathcal{P}_{\Im(\mathcal{M})}$ is the orthogonal projector onto the image $\Im(\mathcal{M})$, which is the orthogonal complement of the kernel, as $\mathcal{M}$ is self adjoint.
Note that in 
Ref.~\cite{huang_predicting_2020}, all measurement channels are invertible, i.e., $\mathcal{P}_{\Im(\mathcal{M})}=\mathcal{I}$, the identity channel. 

Similarly to the steps taken in Ref.~\cite{huang_predicting_2020}, we use the measurement channel $\mathcal{M}$ to construct unbiased estimators for correlation functions of interest. Importantly, note that for a non-invertibile $\mathcal{M}$, only operators in the image of $\mathcal{M}$ can be recovered. Indeed, the part orthogonal to $\Im(\mathcal{M})$ gets projected out of the expectation value:

\begin{proposition}[Constructing unbiased estimators]
\label{prop:estimator}
    Let $\rho$ be the density matrix of a given quantum state and $O$ an operator whose expectation value with respect to $\rho$ we wish to estimate. Sampling $U$ uniformly from $\mathcal{U}$, performing a computational basis measurement with outcome $\boldsymbol{n}$, and computing the single-shot estimate $X_{U,\boldsymbol{n}} := \braket{\boldsymbol{n}| U \mathcal{M}^+(O) U^\dag |\boldsymbol{n}}$ produces the desired quantity $\Tr\left[\mathcal{P}_{\Im(\mathcal{M})}(O) \rho \right]$ in expectation. That is, we have
    \begin{equation} 
        \underset{U\sim\mathcal{U},\boldsymbol{n}}{\mathbb{E}} X_{U,\boldsymbol{n}} = \Tr\left[\mathcal{P}_{\Im(\mathcal{M})}(O) \rho \right] \,,
    \end{equation}
\end{proposition}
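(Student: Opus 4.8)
The plan is to unfold the two independent sources of randomness hidden in $X_{U,\boldsymbol{n}}$ — the draw of $U$ from the ensemble $\mathcal{U}$ and the Born-rule distribution of the measurement outcome $\boldsymbol{n}$ — and then to recognise the resulting average as nothing but $\mathcal{M}$ applied to $\mathcal{M}^+(O)$, evaluated against $\rho$. First I would condition on a fixed $U$ and record that the occupation-number outcome $\boldsymbol{n}$ occurs with Born probability $p(\boldsymbol{n}\mid U)=\braket{\boldsymbol{n}|U\rho U^\dagger|\boldsymbol{n}}$, so that averaging over the protocol gives
\begin{equation}
\underset{U\sim\mathcal{U},\boldsymbol{n}}{\mathbb{E}}\,X_{U,\boldsymbol{n}}
=\underset{U\sim\mathcal{U}}{\mathbb{E}}\sum_{\boldsymbol{n}}
\braket{\boldsymbol{n}|U\rho U^\dagger|\boldsymbol{n}}\;
\braket{\boldsymbol{n}|U\mathcal{M}^+(O)U^\dagger|\boldsymbol{n}}.
\end{equation}

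Next I would match this against the definition of the measurement channel. Setting $A:=\mathcal{M}^+(O)$ and using cyclicity of the trace together with $\Tr[U^\dagger\ketbra{\boldsymbol{n}}{\boldsymbol{n}}U\rho]=\braket{\boldsymbol{n}|U\rho U^\dagger|\boldsymbol{n}}$, one computes
\begin{align}
\Tr\!\left[\mathcal{M}(A)\,\rho\right]
&=\sum_{\boldsymbol{n}}\underset{U}{\mathbb{E}}\,
\braket{\boldsymbol{n}|UAU^\dagger|\boldsymbol{n}}\,
\Tr\!\left[U^\dagger\ketbra{\boldsymbol{n}}{\boldsymbol{n}}U\rho\right]\nonumber\\
&=\sum_{\boldsymbol{n}}\underset{U}{\mathbb{E}}\,
\braket{\boldsymbol{n}|UAU^\dagger|\boldsymbol{n}}\,
\braket{\boldsymbol{n}|U\rho U^\dagger|\boldsymbol{n}},
\end{align}
which is precisely the right-hand side of the first display. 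Hence $\mathbb{E}_{U,\boldsymbol{n}}X_{U,\boldsymbol{n}}=\Tr[\mathcal{M}(\mathcal{M}^+(O))\,\rho]$. Finally I would invoke the pseudo-inverse identity $\mathcal{M}\circ\mathcal{M}^+=\mathcal{P}_{\Im(\mathcal{M})}$ already recorded above to conclude $\mathbb{E}_{U,\boldsymbol{n}}X_{U,\boldsymbol{n}}=\Tr[\mathcal{P}_{\Im(\mathcal{M})}(O)\,\rho]$, as claimed.

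There is no serious obstacle here; the entire content of the argument is to line up the two-source average with the definition of $\mathcal{M}$. The only point that demands genuine care is the bookkeeping of the two probabilistic ingredients: the outcome $\boldsymbol{n}$ must be weighted by the conditional Born probability $\braket{\boldsymbol{n}|U\rho U^\dagger|\boldsymbol{n}}$ \emph{before} averaging over $U\sim\mathcal{U}$, so that this factor correctly lands inside the $U$-expectation and pairs with the operator $U^\dagger\ketbra{\boldsymbol{n}}{\boldsymbol{n}}U$ appearing in $\mathcal{M}$. Self-adjointness of $\mathcal{M}$ plays no role in the computation itself; it enters only implicitly through the stated identity $\mathcal{M}\circ\mathcal{M}^+=\mathcal{P}_{\Im(\mathcal{M})}$, which guarantees that the projector onto $\Im(\mathcal{M})$ is orthogonal and thus annihilates exactly the component of $O$ in $\ker(\mathcal{M})=\Im(\mathcal{M})^\perp$.
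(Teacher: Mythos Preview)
Your proof is correct and follows essentially the same approach as the paper: expand the expectation using the Born-rule weighting $\braket{\boldsymbol{n}|U\rho U^\dagger|\boldsymbol{n}}$, recognise the resulting average as $\Tr[\mathcal{M}(\mathcal{M}^+(O))\rho]$ by matching against the definition of $\mathcal{M}$, and then apply $\mathcal{M}\circ\mathcal{M}^+=\mathcal{P}_{\Im(\mathcal{M})}$. The paper's version is more terse but line-for-line equivalent.
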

\begin{proof}
We have
    \begin{align}
    \nonumber
    \underset{U\sim \mathcal{U},\boldsymbol{n}}{\mathbb{E}} X_{U,\boldsymbol{n}} &= \sum_{\boldsymbol{n} \in \{0,1\}^L}\underset{U \sim \mathcal{U} }{\mathbb{E}}\,\braket{\boldsymbol{n}| U \rho U^\dagger |\boldsymbol{n}} \,  \braket{\boldsymbol{n}| U \mathcal{M}^+(O) U^\dag |\boldsymbol{n}}\\[0.8em]
    \nonumber 
    &=\Tr \left( \sum_{\boldsymbol{n} \in \{0,1\}^L}\underset{U \sim \mathcal{U} }{\mathbb{E}}   \,  \braket{\boldsymbol{n}| U \mathcal{M}^+(O) U^\dag |\boldsymbol{n}} \,U^\dagger \ket{\boldsymbol{n}}\bra{\boldsymbol{n}} U \rho \right) \\[1em]
    \nonumber
    &=\Tr\left( \mathcal{M} (\mathcal{M}^+(O)) \rho \right)\\[1em]
    &=\Tr\left(\mathcal{P}_{\Im(\mathcal{M})}(O) \rho \right).
\end{align}
\end{proof}

Let us now study measurement operators for unitary subgroups $\mathcal{U}$ that contain only free, particle number preserving fermionic evolution. In this case, the measurement channel takes the following form: 
\begin{proposition}[Block diagonal form of $\mathcal{M}$ for free, particle number preserving unitaries]\label{propo:blocks_M}\ \\
    If the unitaries in the considered subgroup $\mathcal{U}$ are free and particle number preserving, then the measurement channel $\mathcal{M}$ is block diagonal with respect to the $2k$-point operator subspaces $W_{2k}$ defined in Definition~\ref{def_CorrOpSubspace}. That is, the action of $\mathcal{M}$ on a $2k$-point operator gives a linear combination of other $2k$-point operators of the same order:
    \begin{equation}
        \mathcal{M}(a^\dag_{i_1}\cdots a^\dag_{i_k} a_{j_1}\cdots a_{j_k})= \sum_{\boldsymbol{l},\boldsymbol{m}} M^{(2k)}_{\boldsymbol{i} \boldsymbol{j},\boldsymbol{l} \boldsymbol{m}} a^\dag_{l_1}\cdots \: a^\dag_{l_k} a_{m_1}\cdots a_{m_k}\,,
    \end{equation}
    where bold letters indicate a collection of $k$ indices, such as $\boldsymbol{i}\equiv(i_1,\dots ,i_k)$. The coefficient matrices $M^{(2k)}$ are given by
    \begin{align}
        M^{(2k)}_{\boldsymbol{i} \boldsymbol{j},\boldsymbol{l} \boldsymbol{m}}&= \frac{1}{(k!)^2}\sum_{\substack{\boldsymbol{p}\\
        \pi,\sigma,\lambda\in\mathcal{S}_k}} \!\mathrm{sgn}(\pi\sigma\lambda) \:\underset{U \sim \mathcal{U} }{\mathbb{E}} (u^\dag)_{p_1,i_1} \!\cdots (u^\dag)_{p_k,i_k} \, (u^\dag)_{p_1,\sigma m_1} \!\cdots (u^\dag)_{p_k,\sigma m_k} \cdot \nonumber\\
        &\hspace{80mm}\cdot (u)_{\pi j_1, p_1} \!\cdots (u)_{\pi j_k, p_k} \, (u)_{\lambda l_1, p_1} \!\cdots (u)_{\lambda l_k, p_k}, 
        \nonumber 
        \\
        &=\frac{1}{(k!)^2}\sum_{\substack{\boldsymbol{p}\\ \pi,\sigma,\lambda\in\mathcal{S}_k}} \!\mathrm{sgn}(\pi\sigma\lambda) \: \Phi_{2k}(\boldsymbol{i},\sigma\boldsymbol{m},\pi\boldsymbol{j},\lambda\boldsymbol{l};\boldsymbol{p},\boldsymbol{p},\boldsymbol{p},\boldsymbol{p})\label{eq:coefficients_Mk}
    \end{align}
    where $\mathcal{S}_k$ is the group of permutations of $k$ elements and $\mathrm{sgn}(\pi)$ is the sign of the permutation $\pi$. Here $u$ is the mode space unitary matrix associated to the free fermions particle number preserving evolution $U$, as in Equation \eqref{evolutionfreeHapp}, and $\Phi_{2k}$ is the moment operator defined as in~\eqref{eq:t_moment_op_elements}. Notice that $M^{(2k)}_{\boldsymbol{i} \boldsymbol{j},\boldsymbol{l} \boldsymbol{m}}$ is anti-symmetric under any permutation of the indices within $\boldsymbol{i}$, $\boldsymbol{j}$, $\boldsymbol{l}$ and $\boldsymbol{m}$, and Hermitian under exchanging $\boldsymbol{i}$, $\boldsymbol{j}$ with $\boldsymbol{l}$ and $\boldsymbol{m}$, that is $M^{(2k)}_{\boldsymbol{i} \boldsymbol{j},\boldsymbol{l} \boldsymbol{m}}=\overline{M}^{(2k)}_{\boldsymbol{l} \boldsymbol{m},\boldsymbol{i} \boldsymbol{j}}$.
\end{proposition}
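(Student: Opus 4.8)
The plan is to evaluate $\mathcal{M}(O)$ directly on a basis correlation operator $O=a^\dag_{i_1}\cdots a^\dag_{i_k}a_{j_1}\cdots a_{j_k}\in W_{2k}$, keeping the scalar weight $\braket{\boldsymbol{n}|UOU^\dag|\boldsymbol{n}}$ and the operator $U^\dag\ketbra{\boldsymbol{n}}{\boldsymbol{n}}U$ separate. First I would conjugate using Proposition~\ref{Prop:freeFerm}, writing $UOU^\dag=\sum_{\boldsymbol{p},\boldsymbol{q}}\big(\prod_a \overline{u}_{i_a,p_a}u_{j_a,q_a}\big)\,a^\dag_{p_1}\cdots a^\dag_{p_k}a_{q_1}\cdots a_{q_k}$. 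Taking the Fock-diagonal expectation, $\braket{\boldsymbol{n}|a^\dag_{p_1}\cdots a_{q_k}|\boldsymbol{n}}$ is nonzero only when $\boldsymbol{q}$ is a permutation of $\boldsymbol{p}$ with distinct entries, say $q_a=p_{\pi(a)}$, in which case it equals $\mathrm{sgn}(\pi)(-1)^{k(k-1)/2}\prod_a n_{p_a}$. Thus the weight $c_{\boldsymbol{n},U}:=\braket{\boldsymbol{n}|UOU^\dag|\boldsymbol{n}}$ is, as a function of the occupations, a linear combination of square-free degree-$k$ monomials $\prod_a n_{p_a}$.

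The crucial step is then the identity $\sum_{\boldsymbol{n}}c_{\boldsymbol{n},U}\ketbra{\boldsymbol{n}}{\boldsymbol{n}}=c_U(\hat n_0,\dots,\hat n_{L-1})$, obtained by replacing each eigenvalue $n_c$ by the number operator $\hat n_c=a^\dag_c a_c$, valid because the $\ketbra{\boldsymbol{n}}{\boldsymbol{n}}$ are the joint spectral projectors of the $\hat n_c$. Since $c_U$ is a square-free degree-$k$ polynomial, this operator is a combination of products $\prod_a \hat n_{p_a}=(-1)^{k(k-1)/2}\,a^\dag_{p_1}\cdots a^\dag_{p_k}a_{p_1}\cdots a_{p_k}$ of $k$ distinct number operators, hence lies in $W_{2k}$. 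This already establishes block-diagonality without ever expanding $U^\dag\ketbra{\boldsymbol{n}}{\boldsymbol{n}}U$ into all sectors: the $\boldsymbol{n}$-sum collapses the projector-weighted sum onto $W_{2k}$ \emph{before} the outer conjugation. Writing $\mathcal{M}(O)=\mathbb{E}_U\,U^\dag\big(\sum_{\boldsymbol{n}}c_{\boldsymbol{n},U}\ketbra{\boldsymbol{n}}{\boldsymbol{n}}\big)U$ and conjugating the surviving $a^\dag_{\boldsymbol{p}}a_{\boldsymbol{p}}$ back with $U^\dag a^\dag_p U=\sum_l u_{l,p}a^\dag_l$ and $U^\dag a_p U=\sum_m \overline{u}_{m,p}a_m$, I would collect all factors.

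Reading off the coefficient of $a^\dag_{l_1}\cdots a^\dag_{l_k}a_{m_1}\cdots a_{m_k}$ then yields $\sum_{\boldsymbol{p},\pi}\mathrm{sgn}(\pi)\,\mathbb{E}_U\prod_a \overline{u}_{i_a,p_a}\overline{u}_{m_a,p_a}u_{j_{\pi(a)},p_a}u_{l_a,p_a}$, where the two factors $(-1)^{k(k-1)/2}$ --- one from the Fock element, one from rewriting $\prod_a\hat n_{p_a}$ as a normal-ordered product --- cancel, leaving a clean $\mathrm{sgn}(\pi)$. Comparing with the moment-operator entries of~\eqref{eq:t_moment_op_elements}, this is exactly $\sum_{\boldsymbol{p},\pi}\mathrm{sgn}(\pi)\,\Phi_{2k}(\boldsymbol{i},\boldsymbol{m},\pi\boldsymbol{j},\boldsymbol{l};\boldsymbol{p},\boldsymbol{p},\boldsymbol{p},\boldsymbol{p})$, i.e.\ formula~\eqref{eq:coefficients_Mk} in the special case $\sigma=\lambda=\mathrm{id}$. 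To reach the fully symmetric form, I would use that the basis operators $a^\dag_{\boldsymbol{l}}a_{\boldsymbol{m}}$ are totally antisymmetric under permutations of the $\boldsymbol{l}$ and of the $\boldsymbol{m}$ indices; re-indexing the free sums over $\boldsymbol{l},\boldsymbol{m}$ by $\lambda,\sigma\in\mathcal{S}_k$ and using $a^\dag_{\lambda\boldsymbol{l}}=\mathrm{sgn}(\lambda)a^\dag_{\boldsymbol{l}}$ shows that each of the $(k!)^2$ relabelings contributes identically, reproducing the $\tfrac{1}{(k!)^2}\sum_{\sigma,\lambda}$ prefactor and the $\mathrm{sgn}(\pi\sigma\lambda)$ of~\eqref{eq:coefficients_Mk}.

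Finally, the antisymmetry of $M^{(2k)}$ under permutations within $\boldsymbol{l}$ and within $\boldsymbol{m}$ is manifest from the $\sigma,\lambda$ antisymmetrization (a permutation of the output indices is absorbed into a relabeling of $\sigma$ or $\lambda$ at the cost of its sign), while antisymmetry within $\boldsymbol{i}$ and $\boldsymbol{j}$ is inherited from the antisymmetry of the input operator $a^\dag_{\boldsymbol{i}}a_{\boldsymbol{j}}$ together with a relabeling of the $\pi$-sum. Hermiticity $M^{(2k)}_{\boldsymbol{i}\boldsymbol{j},\boldsymbol{l}\boldsymbol{m}}=\overline{M}^{(2k)}_{\boldsymbol{l}\boldsymbol{m},\boldsymbol{i}\boldsymbol{j}}$ follows directly by complex-conjugating the explicit expression and relabeling $\pi\mapsto\pi^{-1}$ and the dummy index $\boldsymbol{p}$, consistently with the self-adjointness of $\mathcal{M}$. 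I expect the main obstacle to be the sign bookkeeping: pinning down the reversal sign $(-1)^{k(k-1)/2}$ in the Fock-diagonal element and verifying that it cancels precisely against the normal-ordering sign in $\prod_a\hat n_{p_a}$, so that only $\mathrm{sgn}(\pi)$ survives, and confirming that the antisymmetrization rewriting incurs no overcounting when indices coincide (where the corresponding correlation operators vanish).
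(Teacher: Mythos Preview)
Your proposal is correct and follows essentially the same route as the paper's proof: conjugate the correlation operator through $U$ via Proposition~\ref{Prop:freeFerm}, use that the Fock-diagonal matrix element forces $\boldsymbol{q}$ to be a permutation of $\boldsymbol{p}$, recognise that the resulting $\boldsymbol{n}$-sum reconstructs the diagonal operator $a^\dag_{\boldsymbol{p}}a_{\boldsymbol{p}}$, conjugate back with $U^\dag$, and finally antisymmetrise over $\boldsymbol{l},\boldsymbol{m}$ to reach the symmetric form~\eqref{eq:coefficients_Mk}. The only cosmetic difference is that you explicitly track the reversal sign $(-1)^{k(k-1)/2}$ and then cancel it, whereas the paper sidesteps this entirely by never evaluating the eigenvalue of $a^\dag_{p_1}\cdots a^\dag_{p_k}a_{p_1}\cdots a_{p_k}$ and instead using directly that $\sum_{\boldsymbol{n}}\ket{\boldsymbol{n}}\bra{\boldsymbol{n}}D\ket{\boldsymbol{n}}\bra{\boldsymbol{n}}=D$ for diagonal $D$; your Hermiticity sketch is also a bit terser than the paper's explicit relabeling $(\pi,\sigma,\lambda,\boldsymbol{p})\mapsto(\sigma\lambda^{-1},\lambda^{-1}\pi,\lambda^{-1},\lambda^{-1}\boldsymbol{p})$, but the idea is the same.
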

\begin{proof}
    We have
    \begin{align}
        \mathcal{M}&(a^\dag_{i_1}\cdots a^\dag_{i_k} a_{j_1}\cdots a_{j_k})=\\
        \nonumber
        &=\sum_{\boldsymbol{n}} \underset{U}{\mathbb{E}} \braket{\boldsymbol{n}|U a^\dag_{i_1}\cdots a^\dag_{i_k} a_{j_1}\cdots a_{j_k} U^\dag |\boldsymbol{n}} U^\dag \ket{\boldsymbol{n}} \bra{\boldsymbol{n}} U \\
        \nonumber
        &=\sum_{\boldsymbol{p},\boldsymbol{q}} \underset{U}{\mathbb{E}} \:\overline{u}_{i_1,p_1} \cdots \overline{u}_{i_k,p_k} u_{j_1,q_1} \cdots u_{j_k,q_k} \sum_{\boldsymbol{n}} \!\braket{\boldsymbol{n}| a^\dag_{p_1}\cdots a^\dag_{p_k} a_{q_1}\cdots a_{q_k}|\boldsymbol{n}} U^\dag \ket{\boldsymbol{n}} \bra{\boldsymbol{n}} U \\
        \nonumber
        &=\sum_{\boldsymbol{p},\pi}  \underset{U}{\mathbb{E}} \:\overline{u}_{i_1,p_1} \cdots \overline{u}_{i_k,p_k} u_{j_1,\pi p_1} \cdots u_{j_k,\pi p_k} \sum_{\boldsymbol{n}} \braket{\boldsymbol{n}| a^\dag_{p_1}\cdots a^\dag_{p_k} a_{\pi p_1}\cdots a_{\pi p_k}|\boldsymbol{n}} U^\dag \ket{\boldsymbol{n}} \bra{\boldsymbol{n}} U \label{eq:Mblock-proof-line3}\\
        &=\sum_{\boldsymbol{p},\pi}  \mathrm{sgn}(\pi) \: \underset{U}{\mathbb{E}} \:\overline{u}_{i_1,p_1} \cdots \overline{u}_{i_k,p_k} u_{j_1,\pi p_1} \cdots u_{j_k,\pi p_k} \: U^\dag\left(\sum_{\boldsymbol{n}} \ket{\boldsymbol{n}}\braket{\boldsymbol{n}| a^\dag_{p_1}\cdots a^\dag_{p_k} a_{p_1}\cdots a_{p_k}|\boldsymbol{n}}   \bra{\boldsymbol{n}} \right)U \\
        \nonumber
        &=\sum_{\boldsymbol{p},\pi} \mathrm{sgn}(\pi) \: \underset{U}{\mathbb{E}} \:\overline{u}_{i_1,p_1} \cdots \overline{u}_{i_k,p_k} u_{j_1,\pi p_1} \cdots u_{j_k,\pi p_k} \: U^\dag a^\dag_{p_1}\cdots a^\dag_{p_k} a_{p_1}\cdots a_{p_k} U \label{eq:Mblock-proof-line5}\\
        &=\sum_{\boldsymbol{p},\boldsymbol{l},\boldsymbol{m},\pi}  \mathrm{sgn}(\pi) \: \underset{U}{\mathbb{E}} \:\overline{u}_{i_1,p_1} \cdots \overline{u}_{i_k,p_k} u_{\pi^{-1} \pi j_1,\pi p_1} \cdots u_{\pi^{-1} \pi j_k,\pi p_k} \: u_{l_1,p_1} \cdots u_{l_k,p_k} \overline{u}_{m_1,p_1} \cdots \overline{u}_{m_k,p_k}  \cdot \nonumber \\
        &\hspace{120mm}\cdot a^\dag_{l_1}\cdots a^\dag_{l_k} a_{m_1}\cdots a_{m_k} \\
        &=\sum_{\boldsymbol{p},\boldsymbol{l},\boldsymbol{m},\pi}  \mathrm{sgn}(\pi) \: \underset{U}{\mathbb{E}} \:(u^\dag)_{p_1,i_1} \!\cdots (u^\dag)_{p_k,i_k} \, (u^\dag)_{p_1,m_1} \!\cdots (u^\dag)_{p_k,m_k} \, (u)_{\pi^{-1} j_1, p_1} \!\cdots (u)_{\pi^{-1} j_k, p_k} \, (u)_{l_1, p_1} \!\cdots (u)_{l_k, p_k} \cdot \nonumber \\
        &\hspace{120mm}\cdot a^\dag_{l_1}\cdots a^\dag_{l_k} a_{m_1}\cdots a_{m_k}\,.
    \end{align}
    In step \eqref{eq:Mblock-proof-line3} we have used the fact that $\ket{\boldsymbol{n}}$ is a Fock state, therefore, we only have a non-vanishing expression if the indices $\boldsymbol{q}$ are a permutation of the indices $\boldsymbol{p}$ such that the operators $a_q$ annihilate exactly the same fermions that the operators $a_p^\dag$ create. In step \eqref{eq:Mblock-proof-line5} we have observed that the operator $a^\dag_{p_1}\cdots a^\dag_{p_k} a_{p_1}\cdots a_{p_k}$ is diagonal in the Fock basis and, therefore, $\sum_{\boldsymbol{n}} \ket{\boldsymbol{n}}\braket{\boldsymbol{n}| a^\dag_{p_1}\cdots a^\dag_{p_k} a_{p_1}\cdots a_{p_k}|\boldsymbol{n}}   \bra{\boldsymbol{n}}$ is just its regular expansion with respect to these diagonal elements. Finally, in the last step we have reordered the unitaries $u_{\pi^{-1} \pi j_1,\pi p_1} \cdots u_{\pi^{-1} \pi j_k,\pi p_k}$ so as to "undo" the effect of the permutation $\pi$.  To reach the result \eqref{eq:coefficients_Mk}, notice that as the permutations form a group, summing over $\pi$ or its inverse is the same, and further notice that the tensor $a^\dag_{l_1}\cdots a^\dag_{l_k} a_{m_1}\cdots a_{m_k}$ is anti-symmetric, therefore the indices $\boldsymbol{l}$ and $\boldsymbol{m}$ that are contracted with it can be anti-symmetrised. 
    
    To show Hermicity, first compute $\overline{M}^{(2k)}_{\boldsymbol{i} \boldsymbol{j},\boldsymbol{l} \boldsymbol{m}}$. To verify that this expression is equal to ${M}^{(2k)}_{\boldsymbol{l} \boldsymbol{m},\boldsymbol{i} \boldsymbol{j}}$, we again make use of the fact that we can arbitrarily reorder the unitary entries in each term, giving us
    \begin{align}
        \overline{M}^{(2k)}_{\boldsymbol{i} \boldsymbol{j},\boldsymbol{l} \boldsymbol{m}} & =  \frac{1}{(k!)^2}\sum_{\substack{\boldsymbol{p}\\
        \pi,\sigma,\lambda\in\mathcal{S}_k}} \!\mathrm{sgn}(\pi\sigma\lambda) \:\underset{U \sim \mathcal{U} }{\mathbb{E}} (u)_{i_1,p_1} \!\cdots (u)_{i_k,p_k} \, (u)_{\sigma m_1,p_1} \!\cdots (u)_{\sigma m_k, p_k} \cdot \nonumber\\
        &\hspace{60mm}\cdot (u^\dag)_{\pi^{-1} p_1, j_1} \!\cdots (u^\dag)_{\pi^{-1} p_k, j_k} \, (u^\dag)_{\lambda^{-1} p_1, l_1} \!\cdots (u^\dag)_{\lambda^{-1} p_k, l_k}.
    \end{align}
    Hermicity then follows from replacing the sum over $\boldsymbol{p}$ with one over $\lambda^{-1}\boldsymbol{p}$, as well as summing instead over the independent permutations $\tilde{\pi}=\sigma\lambda^{-1}$, $\tilde{\sigma}=\lambda^{-1}\pi$, and $\tilde\lambda = \lambda^{-1}$. 
\end{proof}

This result implies that each of the $2k$-point operator subspaces constitutes an invariant subspace, and the measurement channel is block-diagonal with respect to these. Depending on which type of correlation functions one is interested in estimating, it is sufficient to study $\mathcal{M}$ on the corresponding subspace. Similarly, the pseudo-inverse $\mathcal{M}^+$ will also be block diagonal with respect to the subspaces $W_{2k}$, so it will be sufficient to "pseudo-invert" $\mathcal{M}$ within the subspace of interest. Within each subspace $W_{2k}$, the action of the measurement channel is defined by the matrix $M^{(2k)}$, while $\mathcal{M}^+\vert_{W_{2k}}$ can be constructed by computing the pseudo-inverse of $M^{(2k)}$.

\subsection{$2$-point operator sector} 

Due to the block-diagonal form of $\mathcal{M}$, let us now focus on the two-point sector $W_2$. Assuming that $L$ is odd, the corresponding matrix $M^{(2)}$ is as follows.

\begin{proposition}[Form of $M^{(2)}$ for $CU_{\rm Sym}(L)$]\label{prop:Channel2point}\ 
    Let $U$ be drawn such that the corresponding mode space unitary $u$ is uniformly distributed in $\mathcal{U}=CU_{\rm Sym}(L)$. If $L$ is odd, then
    \begin{equation}
        M^{(2)}_{i j, lm} = \frac{1}{L}\left( \delta_{i,j}\delta_{l,m}+\delta_{i,l}\delta_{j,m}-\dfrac{1}{L}\delta_{i-j,l-m}\right).
    \end{equation}
\end{proposition}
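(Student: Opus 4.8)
The plan is to obtain $M^{(2)}$ as the $k=1$ specialization of the general block formula \eqref{eq:coefficients_Mk} from Proposition~\ref{propo:blocks_M}, and then to read off the answer from the second-moment computation already carried out in \eqref{eq:matrixelementmoment}. Setting $k=1$ is the natural first move: the symmetric group $\mathcal{S}_1$ is trivial, so the three permutation sums over $\pi,\sigma,\lambda$ each collapse to the identity, every sign $\mathrm{sgn}(\pi\sigma\lambda)$ equals $+1$, and the prefactor $1/(k!)^2$ is $1$. What remains is simply
\begin{equation}
M^{(2)}_{ij,lm}=\sum_p \Phi_2(i,m,j,l;p,p,p,p)=\sum_p \underset{u}{\mathbb{E}}\,\overline{u}_{i,p}\,\overline{u}_{m,p}\,u_{j,p}\,u_{l,p}\,.
\end{equation}

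Next I would identify this object with the contracted second moment of the subsection on second moments. Up to relabelling it is exactly the quantity $\sum_p \Phi_2(k,i,l,j;p,p,p,p)$ whose closed form is given in \eqref{eq:matrixelementmoment}. Matching the barred row indices $\{i,m\}$ and unbarred row indices $\{j,l\}$ of my expression against $\{k,i\}$ and $\{l,j\}$ in \eqref{eq:matrixelementmoment} --- i.e.\ performing the substitution $k\mapsto i$, $i\mapsto m$, $l\mapsto j$, $j\mapsto l$ --- turns $\delta_{l,k}\delta_{i,j}$ into $\delta_{i,j}\delta_{l,m}$, turns $\delta_{j,k}\delta_{l,i}$ into $\delta_{i,l}\delta_{j,m}$, and turns $\delta_{j-k,i-l}$ into $\delta_{i-j,l-m}$ (using that $l-i\equiv m-j$ is the same constraint as $i-j\equiv l-m$ modulo $L$). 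This reproduces the claimed formula verbatim, including the prefactors $1/L$ and $1/L^2$.

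The genuine content therefore resides in \eqref{eq:matrixelementmoment}, and this is the step I regard as the main obstacle. Establishing it requires writing $\Phi_2$, via the commutant theorem (Lemma~\ref{lemma:comm_theorem}), as the orthogonal projector onto the commutant of $\{u^{\otimes 2}\}_{u\in CU_{\rm Sym}(L)}$, expanding in the explicit eight-family orthonormal basis $\mathcal{C}_2$ of \eqref{eq:setW} coming from the irreducible decomposition in Proposition~\ref{prop:irreps_of_ut}, and then collapsing the resulting geometric sums over Fourier phases using $\sum_p\omega^{pn}=L\,\delta_{n,0}$. The delicate point is where the hypothesis $L$ odd enters: several of the commutant families pair a Fourier vector $\boldsymbol{v}^{(n)}$ with its conjugate $\boldsymbol{v}^{(-n)}$ and contribute factors $\delta_{2n,0}$, and for $L$ odd the congruence $2n\equiv 0 \pmod L$ forces $n\equiv 0$, so these families collapse and only three terms survive. (For even $L$ the extra root $n=L/2$ would survive and alter the formula, which is precisely why oddness is assumed.) Since \eqref{eq:matrixelementmoment} is already available in the excerpt, the proof of the proposition itself amounts only to the $k=1$ collapse and the index matching described above.
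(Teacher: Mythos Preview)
Your proposal is correct and follows exactly the same approach as the paper's proof, which simply invokes the $k=1$ specialization of \eqref{eq:coefficients_Mk} and reads off the result from the precomputed second-moment expression \eqref{eq:matrixelementmoment}. Your explicit index matching and your remark on where the oddness of $L$ enters via $\delta_{2n,0}$ are accurate and more detailed than the paper's one-line proof, but the strategy is identical.
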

\begin{proof}
    Starting from equation \eqref{eq:coefficients_Mk} for $k=1$, we get the desired result using Eq.~(\ref{eq:matrixelementmoment}) for the second moment operator. 
\end{proof}

Now that we have an explicit expression for the matrix $M^{(2)}$, we will characterise the image $\Im(\mathcal{M}\vert_{W_2})$ and pseudo-inverse $\mathcal{M}^+\vert_{W_2}$, as this will give us valuable insights into what correlation functions can be estimated, as well as the necessary post-processing. 

\begin{proposition}[Kernel of $\mathcal{M}\vert_{W_2}$ for $CU_{\rm Sym}(L)$] 
\label{Prop:kernel}
The kernel of $\mathcal{M}\vert_{W_2}$, for Haar random $u\in CU_{\rm Sym}(L)$, is given by
\begin{align}
    &\ker(\mathcal{M}\vert_{W_2})=\mathrm{span} \left\{\sum^{L-1}_{i=0}a_i^\dagger a_{i+k}\right\}_{k=1}^{L-1},
\end{align}
where indices are understood modulo L. The given elements are orthogonal with respect to the Hilbert-Schmidt inner product.
\end{proposition}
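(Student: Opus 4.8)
The plan is to work directly with the explicit matrix $M^{(2)}$ from Proposition~\ref{prop:Channel2point} and determine which coefficient matrices $c$ are annihilated by $\mathcal{M}\vert_{W_2}$. I would write a general element of $W_2$ as $O=\sum_{i,j}c_{ij}\,a_i^\dagger a_j$, so that by linearity
\[
\mathcal{M}(O)=\sum_{l,m}\Big(\sum_{i,j}c_{ij}\,M^{(2)}_{ij,lm}\Big)\,a_l^\dagger a_m .
\]
Inserting $M^{(2)}_{ij,lm}=\tfrac1L\big(\delta_{i,j}\delta_{l,m}+\delta_{i,l}\delta_{j,m}-\tfrac1L\delta_{i-j,l-m}\big)$ and carrying out the contractions over $i,j$, the coefficient of $a_l^\dagger a_m$ becomes
\[
\frac1L\Big(\delta_{l,m}\,\Tr(c)+c_{lm}-\tfrac1L\,S_{l-m}\Big),
\]
where $S_d:=\sum_i c_{i,i-d}$ is the sum of the entries of $c$ along the cyclic off-diagonal of offset $d$ (all indices modulo $L$).

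Next I would impose $\mathcal{M}(O)=0$ and split the vanishing condition into the diagonal and off-diagonal cases. For $l=m$, using $S_0=\Tr(c)$, the condition collapses to $c_{ll}=\tfrac1L\Tr(c)-\Tr(c)$, forcing all diagonal entries to coincide; summing this identity over the $L$ values of $l$ gives $L\,\Tr(c)=0$, hence $\Tr(c)=0$ and all $c_{ll}=0$. For $l\neq m$ the condition reads $c_{lm}=\tfrac1L S_{l-m}$, that is, $c_{lm}$ depends only on the offset $d=l-m$; substituting this back into the definition of $S_d$ shows the equation is self-consistent for any common value, so it imposes no further constraint. Hence $O\in\ker(\mathcal{M}\vert_{W_2})$ if and only if $c$ is circulant with vanishing diagonal.

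I would then translate this back into operators: a circulant $c$ with $c_{ll}=0$ and $c_{lm}=\alpha_{l-m}$ is precisely a linear combination of $N_k:=\sum_i a_i^\dagger a_{i+k}$ for $k=1,\dots,L-1$, so these operators span the kernel. As a direct check, for $N_k$ one has $\Tr(c)=0$ and $S_d=L\,\delta_{d,-k}$, so the coefficient above reduces to $\delta_{l-m,-k}-\delta_{l-m,-k}=0$. For the orthogonality claim I would evaluate the Hilbert--Schmidt inner products $\langle N_k,N_{k'}\rangle=\sum_{i,i'}\Tr\big(a_{i+k}^\dagger a_i\,a_{i'}^\dagger a_{i'+k'}\big)$; a single fermionic trace is nonzero only when the created and annihilated modes coincide as multisets, $\{i+k,i'\}=\{i,i'+k'\}$, which for $k,k'\neq0$ forces $i'=i$ and $k=k'$, yielding $\langle N_k,N_{k'}\rangle=L\,2^{L-2}\,\delta_{k,k'}$.

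The genuinely delicate step is the second one: the computation mixes three Kronecker deltas on $\mathbb{Z}_L$, and one must track the cyclic off-diagonal sums $S_d$ carefully, noting in particular that the off-diagonal equation is automatically self-consistent (imposing no constraint) while it is the diagonal equation that kills both the trace and the diagonal. The other point requiring care is the fermionic trace bookkeeping in the orthogonality computation, where the anticommutation signs and the restriction to admissible Fock occupations must be handled correctly; this is where I expect the main obstacle, although it ultimately reduces to a finite combinatorial check.
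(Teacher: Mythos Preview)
Your proposal is correct and follows essentially the same route as the paper's proof: both expand a general $O=\sum_{i,j}c_{ij}\,a_i^\dagger a_j$, apply the explicit $M^{(2)}$ from Proposition~\ref{prop:Channel2point}, and read off from the vanishing of each coefficient that $c$ must be circulant with zero diagonal. Your version is in fact more detailed---you explicitly separate the diagonal case (which forces $\Tr(c)=0$ and $c_{ll}=0$) from the off-diagonal case (which is self-consistent), and you carry out the Hilbert--Schmidt orthogonality computation that the paper merely asserts; your trace evaluation $\langle N_k,N_{k'}\rangle=L\,2^{L-2}\,\delta_{k,k'}$ is correct.
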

\begin{proof}
A general operator $O=\sum^{L-1}_{i,j=0} c_{i,j} a_i^\dagger a_j $, for $c_{i,j}\in \mathbb{C}$, is contained in $ \ker(\mathcal{M}\vert_{W_2})$ iff
\begin{align}
\boldsymbol{0}=\mathcal{M}\vert_{W_2}\left(O\right)=\sum^{L-1}_{p,q=0}a_p^\dagger a_q\sum^{L-1}_{i,j=0}\frac{ c_{i,j} }{L}\left( \delta_{i,j}\delta_{p,q}+\delta_{i,p}\delta_{j,q}-\dfrac{1}{L}\delta_{i-j,p-q}\right) .
\end{align}
Since the $a_p^\dagger a_q$ are linearly independent operators, all coefficients have to be zero, giving
\begin{equation}
O\in \ker(\mathcal{M}\vert_{W_2}) \iff c_{p,q} =\begin{cases} 0 & \textrm{, if } p=q \\ f_{p-q} & \textrm{, if } p \neq q \end{cases},
\end{equation}
for arbitrary coefficients $f_k$, i.e., $c_{p,q}$ can only depend on the difference $p-q$.
\end{proof}

\begin{proposition}[Image of $\mathcal{M}\vert_{W_2}$ for $CU_{\rm Sym}(L)$]
    The image of $\mathcal{M}\vert_{W_2}$, for Haar random $u\in CU_{\rm Sym}(L)$, is
    \begin{align}
    &\Im(\mathcal{M}\vert_{W_2})=\left\{ O=\sum^{L-1}_{l,m=0} b_{l,m} a_l^\dagger a_m\,\,\Big\vert  \, \sum^{L-1}_{i=0}b_{i,i+k}=0 ,\, k \in \{1,\dots,L-1\} \right\}.
\end{align}
A basis of $\Im(\mathcal{M}\vert_{W_2})$ is given by the set of operators
\begin{align}
            \hspace{8em}\left\{a_i^\dagger a_j - a_{L-1}^\dagger a_{j-i-1}\right\}_{i \in \{0,\dots, L-2\} , \, j\in \{0,\dots, L-1\}  :  j\neq i} \,\cup\,  \left\{a_i^\dagger a_i\right\}^{L-1}_{i=0}.
            \label{eq:BasisImage}
        \end{align}
\label{Prop:image2}
\end{proposition}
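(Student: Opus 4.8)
The plan is to exploit that $\mathcal{M}\vert_{W_2}$ is self-adjoint with respect to the Hilbert--Schmidt inner product, so that its image is the orthogonal complement of its kernel, which was already determined in Proposition~\ref{Prop:kernel}. Concretely, I would argue $\Im(\mathcal{M}\vert_{W_2}) = \ker(\mathcal{M}\vert_{W_2})^{\perp}$ and then compute this orthogonal complement explicitly.

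First I would record the Hilbert--Schmidt inner products among the basis operators $a_l^\dag a_m$ by short Fock-space trace computations. One finds that the off-diagonal operators $a_i^\dag a_j$ (with $i\neq j$) are mutually orthogonal and orthogonal to every number operator $a_p^\dag a_p$, with $\Tr(a_j^\dag a_i a_i^\dag a_j)=2^{L-2}$, while the number operators have Gram matrix $2^{L-2}(I+\mathbf{1}\mathbf{1}^{\transp})$. The crucial point is that each kernel generator $K_k=\sum_i a_i^\dag a_{i+k}$ is a combination of off-diagonal operators only, so for a general $O=\sum_{l,m}b_{l,m}a_l^\dag a_m$ the orthogonality condition collapses to $\langle K_k,O\rangle_{\mathrm{HS}}=2^{L-2}\sum_i b_{i,i+k}=0$. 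Ranging over $k\in\{1,\dots,L-1\}$ yields exactly the stated description of the image. Equivalently, and avoiding the Gram matrix, one may verify directly from the explicit $M^{(2)}$ that $\sum_l M^{(2)}_{ij,l,l+k}=0$ for every $k\neq 0$, placing $\Im(\mathcal{M}\vert_{W_2})$ inside the stated set, and then close the argument by the dimension count below.

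Then I would verify the proposed basis. Each listed operator lies in the image: for $a_i^\dag a_j - a_{L-1}^\dag a_{j-i-1}$ both terms sit on the same off-diagonal $k=j-i \pmod L$ with coefficients $+1$ and $-1$, so $\sum_i b_{i,i+k}=0$, while the number operators $a_i^\dag a_i$ live on the unconstrained main diagonal $k=0$. Counting, there are $L$ number operators and $(L-1)^2$ difference operators (from $i\in\{0,\dots,L-2\}$ and $j\neq i$), for a total of $L^2-L+1$, which matches $\dim\Im = \dim W_2 - \dim\ker = L^2-(L-1)$ by rank--nullity and the $L-1$ orthogonal kernel generators of Proposition~\ref{Prop:kernel}. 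Finally, linear independence follows cleanly: the number-operator part decouples (forcing its coefficients to vanish), and for the remaining combination the coefficient of any $a_p^\dag a_q$ with $p\le L-2$ equals precisely the single coefficient $\alpha_{p,q}$, since the subtracted terms $a_{L-1}^\dag a_{\cdot}$ only ever contribute to row $L-1$; hence all coefficients must vanish.

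I expect the main obstacle to be the second step: correctly reducing Hilbert--Schmidt orthogonality to the coefficient-sum conditions, i.e.\ handling the non-orthonormality of the operator basis (the number operators in particular are not mutually orthogonal). The dimension-count route sidesteps this subtlety but then relies on the tightness of rank--nullity together with the explicit kernel, so either way the careful bookkeeping of which inner-product structure is in play is where the real work lies.
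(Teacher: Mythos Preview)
Your approach is correct and essentially coincides with the paper's: both use $\Im(\mathcal{M}\vert_{W_2})=\ker(\mathcal{M}\vert_{W_2})^{\perp}$ from self-adjointness and then reduce the Hilbert--Schmidt orthogonality $\Tr(K_k^\dag O)=0$ to the coefficient conditions $\sum_i b_{i,i+k}=0$, exactly as the paper does (it writes $\Tr(a_{i+k}^\dag a_i\, a_l^\dag a_m)\propto \delta_{l,i}\delta_{m,i+k}$ without spelling out the Gram matrix). Your additional work---the explicit Gram structure, the dimension count $L+(L-1)^2=L^2-L+1$, and the linear-independence argument for the basis---goes beyond the paper, which simply states the basis without verification; the obstacle you flag (non-orthogonality of the number operators) is correctly identified as harmless because the kernel generators $K_k$ involve only off-diagonal operators.
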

\begin{proof}
     As $\mathcal{M}$ is self-adjoint with respect to the Hilbert-Schmidt inner product, $\Im(\mathcal{M}\vert_{W_2})=\ker(\mathcal{M}\vert_{W_2})^\perp$. Using Proposition \ref{Prop:kernel}, we have that $O=\sum^{L-1}_{l,m=0} b_{l,m} a_l^\dagger a_m  \in \Im(\mathcal{M}\vert_{W_2})$ iff
    \begin{align}
        0 &=\Tr\left( \sum^{L-1}_{i=0}a_{i+k}^\dagger a_{i} \, O \right)
        =\sum^{L-1}_{l,m,i=0} b_{l,m}\Tr\left(a_{i+k}^\dagger a_{i}\,a_l^\dagger a_m \right)\propto \sum^{L-1}_{i=0} b_{i,i+k}  \hspace{5mm} \forall \, k\in \{1,\dots,L-1\}.
\end{align}
\end{proof}

We now identify operators that are eigenvectors of the measurement channel $\mathcal{M}$. These correspond to quantities that we can straightforwardly estimate with our procedure, since applying the pseudo-inverse of $\mathcal{M}$ as in Proposition \ref{prop:estimator} can be done by simply multiplying with the inverse eigenvalue.

\begin{proposition}[Eigenvectors of $\mathcal{M}^+\vert_{W_2}$ for $CU_{\rm Sym}(L)$]\label{prop:ev_of_M2} \ 
    The non-diagonal operators in the basis given in Proposition ~\ref{Prop:image2} are eigenvectors of the measurement channel $\mathcal{M}$ for Haar random $u\in CU_{\rm Sym}(L)$, all with eigenvalue $1/L$: Let $i,j$ be like in ~\ref{eq:BasisImage}, then
        \begin{align}
            &\mathcal{M}\left(a_i^\dagger a_j- a_{L-1}^\dagger a_{j-i-1}\right)=\frac{1}{L}\left(a_i^\dagger a_j- a_{L-1}^\dagger a_{j-i-1}\right) .
    \end{align} 
\label{Prop:eigenstatesChannel}
\end{proposition}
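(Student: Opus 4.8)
The plan is to prove this by a direct computation, relying entirely on the explicit form of $M^{(2)}$ from Proposition~\ref{prop:Channel2point} together with the kernel description from Proposition~\ref{Prop:kernel}. First I would record the action of $\mathcal{M}$ on an arbitrary $2$-point monomial $a_p^\dagger a_q$ by contracting with the coefficient matrix:
\begin{equation}
\mathcal{M}(a_p^\dagger a_q)=\sum_{l,m} M^{(2)}_{pq,lm}\, a_l^\dagger a_m = \frac{1}{L}\left(\delta_{p,q}\sum_{l} a_l^\dagger a_l + a_p^\dagger a_q - \frac{1}{L}\sum_{l} a_l^\dagger a_{l-(p-q)}\right),
\end{equation}
where the three summands come respectively from the $\delta_{i,j}\delta_{l,m}$, $\delta_{i,l}\delta_{j,m}$ and $-\tfrac{1}{L}\delta_{i-j,l-m}$ pieces of $M^{(2)}$.

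Next I would apply this formula to each of the two operators $a_i^\dagger a_j$ and $a_{L-1}^\dagger a_{j-i-1}$ separately and use the labelling constraint $i\neq j$ inherited from~\eqref{eq:BasisImage}. Since $i\neq j$, the first (diagonal) term vanishes for $a_i^\dagger a_j$, leaving $\tfrac{1}{L}a_i^\dagger a_j$ minus a $\tfrac{1}{L^2}$ multiple of the ``difference sum'' $S_d:=\sum_l a_l^\dagger a_{l-d}$ with $d=i-j$. The key observation is that $S_d$ is, up to the relabelling $k=-d \bmod L$, precisely one of the kernel generators $\sum_i a_i^\dagger a_{i+k}$ identified in Proposition~\ref{Prop:kernel}. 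The only place where care is genuinely required is the modular bookkeeping for the second operator: I would compute its index difference as $(L-1)-(j-i-1)=L+i-j\equiv i-j \pmod{L}$, so it again equals $d$, meaning $\mathcal{M}(a_{L-1}^\dagger a_{j-i-1})$ produces the \emph{same} kernel term $-\tfrac{1}{L^2}S_d$; I would likewise note that $L-1\equiv j-i-1 \pmod{L}$ would force $i=j$, which is excluded, so its diagonal term vanishes as well.

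Subtracting the two results, the identical $S_d$ contributions cancel exactly and what survives is
\begin{equation}
\mathcal{M}\left(a_i^\dagger a_j - a_{L-1}^\dagger a_{j-i-1}\right)=\frac{1}{L}\left(a_i^\dagger a_j - a_{L-1}^\dagger a_{j-i-1}\right),
\end{equation}
which is exactly the claimed eigenvalue relation with eigenvalue $1/L$. The main obstacle is nothing deep: it is simply verifying the modular identification of the two index differences so that the kernel components cancel. Once that is in place the statement is immediate, and the argument also transparently explains why these particular combinations were singled out as the image basis in Proposition~\ref{Prop:image2}.
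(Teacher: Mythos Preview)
Your proof is correct and follows exactly the approach the paper indicates: the paper's own proof is simply the one-line remark that the identity ``can be explicitly verified using Proposition~\ref{prop:Channel2point},'' and what you have written is precisely that verification carried out in detail. Your modular bookkeeping for the second operator's index difference and the resulting cancellation of the kernel term $S_d$ are both correct.
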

\begin{proof}\
    This can be explicitly verified using Proposition ~\ref{prop:Channel2point}.
\end{proof}

 \subsection{$4$-point operator sector}
We now move on to the sector of four point correlation operators.
We are interested in studying the measurement channel restricted to the $k=2$ block, that is $\mathcal{M}\vert_{W_4}$, which is characterised by the coefficient matrix $M^{(4)}$. For this matrix we have the following expression:

\begin{proposition}[Form of $M^{(4)}$ for $CU_{\rm Sym}(L)$]\label{prop:Channel4point}\ 
    Consider the ensemble of free, particle number preserving evolution, where
    mode space unitaries are distributed according to the Haar measure on $CU_{\rm Sym}(L)$. Then
    \begin{equation}
         M^{(4)}_{i_1 i_2 j_1 j_2, l_1 l_2 m_1 m_2}=\frac{1}{4} \sum_{\pi,\sigma,\lambda\in\mathcal{S}_2} \mathrm{sgn}(\pi\sigma\lambda) \; \tilde{M}^{(4)}_{i_1 i_2 \pi(j_1) \pi(j_2), \sigma(l_1) \sigma(l_2) \lambda(m_1) \lambda(m_2)},
    \end{equation}
    where
    \begin{align}
        \tilde{M}^{(4)}_{i_1 i_2 j_1 j_2, l_1 l_2 m_1 m_2} = \sum_{p_1,p_2} \frac{1}{L^8} &\sum_{\sigma, \sigma' \in \{\pm 1\}^4} \left[\frac{L^4}{2^4}\sum_{A^{(1,1,1,1)}}\delta(A^{(1,1,1,1)}_1) \delta(A^{(1,1,1,1)}_2) \delta(A^{(1,1,1,1)}_3) \delta(A^{(1,1,1,1)}_4) \right. \nonumber \\
&\hspace{30mm}-\frac{L^3}{2^5}\sum_{A^{(1,1,1)}}\delta(A^{(1,1,1)}_1) \delta(A^{(1,1,1)}_2) \delta(A^{(1,1,1)}_3) \nonumber \\
&\hspace{30mm} -\frac{L^3}{2^3} \sum_{A^{(2,1,1)}}\delta(A^{(2,1,1)}_1)\delta(A^{(2,1,1)}_2)\delta(A^{(2,1,1)}_3) \nonumber \\
&\hspace{30mm} +\frac{L^2}{2^4}\sum_{A^{(2,1)}}\delta(A^{(2,1)}_1)\delta(A^{(2,1)}_2) + \frac{9 L^2}{2^6}\sum_{A^{(1,1)}}\delta(A^{(1,1)}_1)\delta(A^{(1,1)}_2) \nonumber \\
&\hspace{30mm}+\frac{L^2}{2^2}\sum_{A^{(2,2)}}\delta(A^{(2,2)}_1)\delta(A^{(2,2)}_2) - \frac{9L^2}{2^5}\sum_{A^{(2)}}\delta(A^{(2)}_1) \nonumber \\
& \hspace{30mm} + L^2\sum_{A^{(3,1)}}\delta(A^{(3,1)}_1) \delta(A^{(3,1)}_2) -\frac{L}{2} \sum_{A^{(3)}}\delta(A^{(3)}_1) \nonumber \\
& \hspace{30mm} \left. -\frac{193L}{2^7}\sum_{A^{(1)}}\delta(A^{(1)}_1)  + L\sum_{A^{(4)}}\delta(A^{(4)}_1)\hspace{40mm}\right].
    \end{align}
    The objects $A^{(a,b,\dots)}$ are defined similarly to after equation \eqref{eqn:beast}. That is, the sums over $A^{(a,b,\dots)}$ represent summing over all the possible ways of doing the following operation: consider the two sets of indices $\mathcal{A}=\{(p_1-i_1), (p_2-i_2), (p_1-m_1), (p_2-m_2)\}$ and $\mathcal{B}=\{(j_1-p_1), (j_2-p_2), (l_1-p_1), (l_2-p_2)\}$, and then consider all the possible ways of grouping them in such a way that the first group contains $a$ indices from the first set and $a$ from the second set, the second group contains $b$ indices from the first set and $b$ from the second set, and so on. The sums of the indices in each group are the components of $A^{(a,b,\dots  )}_i$. Again, summation with respect to the variables $\sigma, \sigma'$ indicates that one must include all possible assignments of signs to the sets $\mathcal{A}$ and $\mathcal{B}$.
\end{proposition}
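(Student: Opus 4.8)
The plan is to obtain the claimed formula by specialising the general block expression \eqref{eq:coefficients_Mk} to $k=2$ and then feeding in the contraction of the fourth moment operator that was already worked out in the preceding subsection. Setting $k=2$ in \eqref{eq:coefficients_Mk} immediately produces the prefactor $1/(2!)^2=1/4$ and the outer antisymmetrisation $\sum_{\pi,\sigma,\lambda\in\mathcal{S}_2}\mathrm{sgn}(\pi\sigma\lambda)$ acting on the index pairs, precisely as in the statement. The quantity being antisymmetrised is the diagonal-contracted moment object $\sum_{p_1,p_2}\Phi_4(\boldsymbol{i},\sigma\boldsymbol{m},\pi\boldsymbol{j},\lambda\boldsymbol{l};\boldsymbol{p},\boldsymbol{p},\boldsymbol{p},\boldsymbol{p})$ of \eqref{eqn:fourth_moment_quantity}, which I would identify with $\tilde M^{(4)}$ after a relabelling of the dummy permutations and a matching of the slot ordering of $\Phi_4$ to the index ordering of $\tilde M^{(4)}$. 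Thus the whole content of the proposition reduces to evaluating the single contracted moment quantity $\tilde M^{(4)}$, and everything outside it is bookkeeping. A useful observation to record here is that, because the outer sum carries the sign $\mathrm{sgn}(\pi\sigma\lambda)$, any index-independent additive constant in $\tilde M^{(4)}$ is annihilated; hence it suffices to compute $\tilde M^{(4)}$ up to such a constant, which is then fixed by evaluating at all-zero indices.

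For the evaluation of $\tilde M^{(4)}$ I would invoke the commutant theorem, Lemma~\ref{lemma:comm_theorem}: the fourth moment operator is the orthogonal projector onto $\operatorname{comm}(\{u^{\otimes 4}\}_{u\in CU_{\rm Sym}(L)})$, an orthonormal basis of which is given by Proposition~\ref{prop:comm_ut} as the rank-one operators labelled by equivalence classes $\boldsymbol{i}\sim\boldsymbol{j}$ in the sense of Proposition~\ref{prop:irreps_of_ut}. Substituting the explicit Fourier components $v^{(a)}_b=L^{-1/2}\omega^{ab}$ and using circulant shift-invariance to reduce every entry to differences of indices yields $\Phi_4=L^{-8}\sum_{\boldsymbol{i}\sim\boldsymbol{j}}\omega^{-\boldsymbol{i}\cdot\boldsymbol{a}+\boldsymbol{j}\cdot\boldsymbol{b}}$, with $\boldsymbol{a},\boldsymbol{b}$ the relevant index differences that eventually assemble into the sets $\mathcal{A}$ and $\mathcal{B}$. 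The crux is then the sum over the equivalence relation: since $\boldsymbol{i}\sim\boldsymbol{j}$ means $\boldsymbol{j}=\pi(s_1i_1,\dots,s_4i_4)$ for some $\pi\in S_4$ and signs $s\in\{\pm1\}^4$, I would enumerate representative tuples by folding each $\boldsymbol{i}$ to positive indices in $\{0,\dots,\lfloor L/2\rfloor\}$ and organising them by multiplicity pattern, i.e. a partition $\boldsymbol{\lambda}\vdash 4$, exactly as in \eqref{eqn:monster}. Contracting via $\sum_p\omega^{pn}=L\delta_{n,0}$ then turns the exponential sums into products of Kronecker deltas grouped by the partition-type objects $A^{(p,q,\dots)}$, giving \eqref{eqn:beast}; finally setting all four input indices to $\boldsymbol{p}$ and summing over $p_1,p_2$ delivers $\tilde M^{(4)}$ with $\mathcal{A}$ and $\mathcal{B}$ as stated.

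The main obstacle is entirely combinatorial and lives in the derivation of \eqref{eqn:beast}: one must count each pair $(\boldsymbol{i},\boldsymbol{j})$ in the equivalence class exactly once under the combined permutation-and-sign action of $S_4$ and $\{\pm1\}^4$. Repeated indices shrink the orbit, which is compensated by the factors $\boldsymbol{\lambda}!$ and the relabelling multiplicity $m(\boldsymbol{\lambda})$, and indices equal to $0$ are fixed under sign flips, which is handled by the $\delta_{i_k,0}$ corrections in \eqref{eqn:monster}. The delicate inclusion–exclusion needed to pass from restricted \emph{distinct-index} sums to unrestricted ones (and back, after extending the summation range from $\lfloor L/2\rfloor$ to $L-1$ by the sign symmetry) is what pins down the rational coefficients such as $9/2^6$ and $193/2^7$; verifying these exactly is the only genuinely error-prone step, and I would cross-check it numerically for small odd $L$. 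I would also flag that the hypothesis $L$ odd enters precisely where it did in the second-moment computation leading to \eqref{eq:matrixelementmoment}, namely so that $i\mapsto -i$ has the single fixed point $0$ and factors like $\delta_{2n,0}$ collapse to $n=0$. Once \eqref{eqn:beast} is established, the remaining assembly into the antisymmetrised form of the proposition is immediate.
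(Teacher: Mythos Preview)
Your proposal is correct and follows essentially the same approach as the paper: specialise \eqref{eq:coefficients_Mk} to $k=2$ to obtain the $1/4$ prefactor and the antisymmetrisation over $\pi,\sigma,\lambda\in\mathcal{S}_2$, then evaluate the contracted fourth moment via the combinatorial result \eqref{eqn:beast}. The paper's own proof is in fact just a two-line citation of those two equations, since \eqref{eqn:beast} is established in the preceding subsection; your write-up additionally sketches how \eqref{eqn:beast} is obtained (commutant basis, partition-by-multiplicity enumeration \eqref{eqn:monster}, inclusion--exclusion, and the role of $L$ odd), and correctly identifies the relabelling needed to match the slot order of $\Phi_4$ in \eqref{eq:coefficients_Mk} to that of $\tilde M^{(4)}$, as well as the reason it suffices to work up to an additive constant.
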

\begin{proof}
    This form for the matrix $M^{(4)}$ follows from taking  Eq.~\eqref{eq:coefficients_Mk} for the $k=2$ case and using result \eqref{eqn:beast} to evaluate the resulting fourth moment expression.
\end{proof}

The total number of independent elements of $M^{(4)}$ that one would need to compute is $L^4 (L-1)^4/16$ (since we must consider only $i_1<i_2$, $j_1<j_2$, $l_1<l_2$, $m_1<m_2$, as the remaining elements are related by a sign change).  
Taking into account the polynomial scaling with $L$ of all the relevant quantities, we can conclude that it is practical to evaluate and invert $M^{(4)}$ numerically for reasonably large $L$. Note moreover that this only needs to be done once in our whole post-processing procedure (as well as future ones). 

\section{Estimating individual correlation functions}
Equipped with the unbiased estimators from the previous section, our procedure is able to estimate two-point expectation values for all $O\in \Im{\mathcal{M}\vert_{W_2}}$ and four-point expectation values for all $O\in \Im{\mathcal{M}\vert_{W_4}}$. However, due to the non-invertibility of $\mathcal{M}$, it may seem that individual correlation functions $\Tr(a_i^\dagger a_j \rho)$ or $\Tr(a_i^\dagger a^\dag_j a_k a_l \rho)$ are inaccessible with operations in $CU_{\rm Sym}(L)$ alone.
However, by extending our fermion lattice by some auxiliary modes, we may circumvent this limitation.

\subsection{Estimating 2-point correlation functions}
\label{sec:samplecomplexity}
Suppose $\rho$ is a fermionic state on a one-dimensional lattice with an even number of modes $L$. Further, assume that we can prepare the product state $\rho^\prime:=\rho \otimes \ketbra{0}{0}$, i.e., an embedding of $\rho$ into an $L+1$-mode lattice with the additional, auxiliary mode unoccupied. By Proposition \ref{Prop:image2}, translationally invariant evolution on the extended lattice allows for estimating the expectation values of $a_i^\dagger a_j - a_{(L+1)-1}^\dagger a_{j-i-1} $, for all $i,j\in\{0,\dots,L-1\}$. This turns out to be sufficient for our purpose. Notice indeed that 
\begin{equation}
    \Tr\left((a_i^\dagger a_j - a_{(L+1)-1}^\dagger a_{j-i-1} )\rho^\prime \right) = \Tr\left(a_i^\dagger a_j \rho \right)\,.
\end{equation}
Following Propositions \ref{prop:estimator} and \ref{prop:ev_of_M2}, the estimator for this quantity is 
\begin{align}
    X^{(i,j)}_{u,\boldsymbol{n}}&= \braket{\boldsymbol{n}|U\mathcal{M}^+(a_i^\dagger a_j - a_{(L+1)-1}^\dagger a_{j-i-1})U^\dag|\boldsymbol{n}}\\[1em]
    \nonumber
    &=(L+1) \braket{\boldsymbol{n}|U(a_i^\dagger a_j - a_{(L+1)-1}^\dagger a_{j-i-1})U^\dag|\boldsymbol{n}}\\[0.5em]
    \nonumber&=(L+1)\sum^{L}_{k=0} \left(\,\overline{u}_{i,k}u_{j,k} - \overline{u}_{L,k}u_{j-i-1,k}\right) n_k,
    \nonumber
\end{align}
where $n_k$ is the $k$-th entry of $\boldsymbol{n}$.
For this estimator we can further prove an upper bound on the required sample complexity. 

\begin{proposition}[Sample complexity upper bound]\label{prop:sample_complexity_2}
Let $\epsilon, \delta > 0$, and 
\begin{equation}
N\ge \frac{16}{\epsilon^{ 2}}\left(L+1\right)^2\log\left(\frac{2L\left(L-1\right)}{\delta}\right). 
\end{equation}
Further, let $\hat{X}^{(i,j)}=\frac{1}{N}\sum^{N}_{m=1}X^{(i,j)}_{u^{(m)},\boldsymbol{n}^{(m)}}$ be the arithmetic mean of $N$ single-round estimates $X^{(i,j)}_{u^{(m)},\boldsymbol{n}^{(m)}}$, where $m$ denotes which measurement cycle's data is used to compute the estimate. Then
\begin{align}
    \operatorname{max}_{\, i>j }\abs{ \hat{X}^{(i,j)} - \Tr\left(\rho a_i^\dagger a_j\right) } < \epsilon \quad \text{with probability} \, \geq 1-\delta.
\end{align}
Moreover, $\hat{X}^{i,j}$ can be efficiently computed in time $\mathcal{O}\left(L \times N \right)$.  
\end{proposition}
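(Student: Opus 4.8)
The plan is to view $\hat X^{(i,j)}$ as a sample mean of $N$ i.i.d.\ bounded random variables, apply Hoeffding's inequality, and close with a union bound over the relevant index pairs. Unbiasedness of each round is inherited from the earlier analysis: on the $(L+1)$-mode lattice the operator $a_i^\dagger a_j - a_L^\dagger a_{j-i-1}$ lies in $\Im(\mathcal M\vert_{W_2})$ and is an eigenvector of $\mathcal M$ with eigenvalue $1/(L+1)$ (Propositions~\ref{Prop:image2} and~\ref{prop:ev_of_M2}), while $\Tr((a_i^\dagger a_j - a_L^\dagger a_{j-i-1})\rho') = \Tr(a_i^\dagger a_j\rho)$; Proposition~\ref{prop:estimator} then makes $X^{(i,j)}_{u,\boldsymbol n}$ an unbiased single-shot estimator, and linearity of expectation passes this to the mean $\hat X^{(i,j)}$.

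First I would prove a uniform almost-sure bound on the single-shot estimator. From $X^{(i,j)}_{u,\boldsymbol n} = (L+1)\sum_{k=0}^{L}(\overline u_{i,k}u_{j,k} - \overline u_{L,k}u_{j-i-1,k})n_k$ with $n_k\in\{0,1\}$, the Cauchy--Schwarz inequality and the unit norm of the rows of the unitary $u$ give $|\sum_k \overline u_{i,k}u_{j,k}n_k|\le \|u_{i,\cdot}\|\,\|u_{j,\cdot}\| = 1$, and likewise for the second term, so that $|X^{(i,j)}_{u,\boldsymbol n}|\le 2(L+1)$. Consequently $\mathrm{Re}\,X^{(i,j)}_{u,\boldsymbol n}$ and $\mathrm{Im}\,X^{(i,j)}_{u,\boldsymbol n}$ each take values in an interval of length $4(L+1)$.

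Next I would apply Hoeffding separately to the real and imaginary parts, since the estimator is complex-valued whereas Hoeffding concerns real bounded variables. Aiming for accuracy $\epsilon/\sqrt 2$ on each part, Hoeffding yields $\Pr(|\mathrm{Re}\,\hat X^{(i,j)} - \mathrm{Re}\,\Tr(\rho a_i^\dagger a_j)|\ge \epsilon/\sqrt2)\le 2\exp(-N\epsilon^2/(16(L+1)^2))$, and identically for the imaginary part; requiring this to be at most $\delta/(L(L-1))$ reproduces exactly the threshold $N\ge 16(L+1)^2\epsilon^{-2}\log(2L(L-1)/\delta)$. A union bound over the $\binom{L}{2} = L(L-1)/2$ pairs with $i>j$ and the two parts per pair, that is $L(L-1)$ events in total, then bounds the total failure probability by $\delta$. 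On the complementary event, writing $\Delta := \hat X^{(i,j)} - \Tr(\rho a_i^\dagger a_j)$, one has $|\Delta| = \sqrt{(\mathrm{Re}\,\Delta)^2 + (\mathrm{Im}\,\Delta)^2} < \sqrt{\epsilon^2/2 + \epsilon^2/2} = \epsilon$ simultaneously for every $i>j$, which is the claim. The runtime statement follows since each of the $N$ single-shot values is a sum of $L+1$ precomputable terms, so forming $\hat X^{(i,j)}$ costs $\mathcal O(L\times N)$.

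I expect no conceptual obstacle here, as the argument is a standard concentration-plus-union-bound. The only points demanding genuine care are the $\sqrt 2$ split between real and imaginary parts together with the accompanying constant bookkeeping, which must be tracked precisely to land on the advertised constants $16$ and $2L(L-1)$ rather than a looser bound, and the sharpness of the range estimate $|X^{(i,j)}_{u,\boldsymbol n}|\le 2(L+1)$, whose factor $(L+1)$ (arising from the pseudo-inverse eigenvalue and compensated only through the Cauchy--Schwarz step) is what ultimately fixes the $(L+1)^2$ prefactor of the sample complexity.
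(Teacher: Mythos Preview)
Your proposal is correct and follows essentially the same approach as the paper: bound $|X^{(i,j)}_{u,\boldsymbol n}|\le 2(L+1)$, apply Hoeffding to real and imaginary parts with target accuracy $\epsilon/\sqrt{2}$, and close with a union bound over the $L(L-1)/2$ index pairs and the two parts. The only cosmetic difference is that the paper states the range bound without justification whereas you supply the Cauchy--Schwarz argument, and the paper organises the union bound as two nested steps (real/imaginary first, then pairs) rather than your single pass over $L(L-1)$ events; the arithmetic and constants coincide.
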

\begin{proof}
Let $\epsilon^\prime=\frac{\epsilon}{\sqrt{2}}$. For $N$ samples, we prove that imaginary and real parts of $\Tr\left(\rho a_i^\dagger a_j\right)$ are recovered up to additive error $< \epsilon^\prime$ and failure probability $\leq\frac{\delta}{2}$. A union bound then gives us the desired result.
Since the i.i.d.~random variables $X^{(i,j)}_{u^{(m)},\boldsymbol{n}^{(m)}}$ can be easily bounded by $\abs{ X^{(i,j)}_{u^{(m)},\boldsymbol{n}^{(m)}} } \le 2\left(L+1\right)$, we use Hoeffding's inequality to lower bound the probability 
\begin{align}
    &P\left(\operatorname{max}_{i>j}\abs{ \,\mathfrak{Re}\,\hat{X}^{(i,j)} - \mathfrak{Re}\,\Tr\left(\rho a_i^\dagger a_j\right) } < \epsilon^\prime \right)\\[0.5em]
    \ge \, & 1-\sum_{i>j} P\left( \, \abs{ \mathfrak{Re}\,\hat{X}^{(i,j)} - \mathfrak{Re}\, \Tr\left(\rho a_i^\dagger a_j\right) } \ge \epsilon^\prime \right).\nonumber
\end{align}
In particular, by Hoeffding's inequality, 
\begin{align}
    P\left( \abs{ \,\, \frac{1}{N}\sum^{N}_{m=1} \mathfrak{Re} \, X^{(i,j)}_{u^{(m)},\boldsymbol{n}^{(m)}}  - \mathfrak{Re} \, \Tr\left(\rho a_i^\dagger a_j\right) } \ge \epsilon^\prime \right)\le2 \exp\left(-\frac{N\epsilon^{\prime 2}}{8\left(L+1\right)^2}\right).
\end{align}
Plugging in $N$ gives an upper bound of $\frac{\delta}{2}$. The analysis for the imaginary part is analogous.

\end{proof}

\subsection{Estimating $k$-point correlation functions}
The procedure for estimating individual 2-point correlation functions thanks to the addition of an auxiliary mode as in the previous section can be generalized to higher order correlation functions.
Our aim is to estimate $\Tr\left(\rho A_{\boldsymbol{i},\boldsymbol{j}}\right)$ for an individual correlation operator $A_{\boldsymbol{i},\boldsymbol{j}}=a_{i_1}^\dagger\dots a_{i_k}^\dagger a_{j_1}\dots a_{j_k}$. In general, however, we find that for our distribution of unitaries $\mathcal{P}_{\Im(\mathcal{M})}(A_{\boldsymbol{i},\boldsymbol{j}})\neq A_{\boldsymbol{i},\boldsymbol{j}}$, making the direct use of Proposition~\ref{prop:estimator} unfeasible. 
Nevertheless, we can consider the state $\rho'=\rho \otimes \ketbra{\boldsymbol{0}}{\boldsymbol{0}}_{\mathrm{anc}}$, where a certain number $L_{\mathrm{anc}}$ of auxiliary modes, initially in the unoccupied state, have been added to our system.
We will denote by $W_{2k}'$ the $2k$-point operator space on the enlarged system of $L_{\mathrm{tot}}=L+L_{\mathrm{anc}}$ modes.

Then, to estimate the correlation function $A_{\boldsymbol{i},\boldsymbol{j}}$ it is sufficient to find an operator $O_{\boldsymbol{i},\boldsymbol{j}}'\in W_{2k}'$ such that $\Tr \rho' O_{\boldsymbol{i},\boldsymbol{j}}'  = \Tr \rho A_{\boldsymbol{i},\boldsymbol{j}} $ and $O_{\boldsymbol{i},\boldsymbol{j}}'\in \Im(\mathcal{M}')$, where $\mathcal{M}'$ is the measurement channel on the enlarged system. The latter condition in particular implies that there must exist an operator $X^{(\boldsymbol{i},\boldsymbol{j})}\in W_{2k}'$ such that $O_{\boldsymbol{i},\boldsymbol{j}}' = \mathcal{M}'(X^{(\boldsymbol{i},\boldsymbol{j})})$.
Expanding the relevant quantities on a basis of correlation operators, we have
\begin{align}
    X^{(\boldsymbol{i},\boldsymbol{j})}&:=
    \sum_{l,m=0}^{L_\mathrm{tot}-1} X^{(\boldsymbol{i},\boldsymbol{j})}_{\boldsymbol{l},\boldsymbol{m}} a_{l_1}^\dagger\dots a_{l_k}^\dagger a_{m_1}\dots a_{m_k} ,\\
    O_{\boldsymbol{i},\boldsymbol{j}}' &:= \sum_{l,m,r,s=0}^{L_\mathrm{tot}-1} X^{(\boldsymbol{i},\boldsymbol{j})}_{\boldsymbol{l},\boldsymbol{m}} M'^{(2k)}_{\boldsymbol{l},\boldsymbol{m};\boldsymbol{r},\boldsymbol{s}} a_{r_1}^\dagger\dots a_{r_k}^\dagger a_{s_1}\dots a_{s_k} \,.
\end{align}
The condition $\Tr \rho' O_{\boldsymbol{i},\boldsymbol{j}}'  = \Tr \rho A_{\boldsymbol{i},\boldsymbol{j}} $ is satisfied if the expansion of $O_{\boldsymbol{i},\boldsymbol{j}}'$ does not contain correlation operators of the original system $W_{2k}$, other than $A_{\boldsymbol{i},\boldsymbol{j}}$, that is
\begin{equation}
    \sum_{l,m=0}^{L_\mathrm{tot}-1} X^{(\boldsymbol{i},\boldsymbol{j})}_{\boldsymbol{l},\boldsymbol{m}} M'^{(2k)}_{\boldsymbol{l},\boldsymbol{m};\boldsymbol{r},\boldsymbol{s}} = \delta_{\boldsymbol{r},\boldsymbol{i}} \delta_{\boldsymbol{s},\boldsymbol{j}}\hspace{10mm}\mbox{ for all }\boldsymbol{r},\boldsymbol{s}\in \{0,\dots,L-1\}^k\,. \label{eq:system_for_X}
\end{equation}
In conclusion, identifying a suitable $X^{(\boldsymbol{i},\boldsymbol{j})}$ amounts to solving the linear system $\sum_{l,m} X^{(\boldsymbol{i},\boldsymbol{j})}_{\boldsymbol{l},\boldsymbol{m}} \widetilde{M}^{(2k)}_{\boldsymbol{l},\boldsymbol{m};\boldsymbol{r},\boldsymbol{s}} = B^{(\boldsymbol{i},\boldsymbol{j})}_{\boldsymbol{r},\boldsymbol{s}}$, where $\widetilde{M}^{(2k)}$ is a rectangular matrix obtained by removing from $M'^{(2k)}$ the columns corresponding to correlation operators acting on at least one auxiliary mode, and $B^{(\boldsymbol{i},\boldsymbol{j})}_{\boldsymbol{r},\boldsymbol{s}}=\delta_{\boldsymbol{r},\boldsymbol{i}} \delta_{\boldsymbol{s},\boldsymbol{j}}$. Adding a sufficient number of auxiliary modes will lead to $\widetilde{M}^{(2k)}$ having maximal rank, and thus the system will have a solution.

Once $X^{(\boldsymbol{i},\boldsymbol{j})}$ has been identified, we need to compute the corresponding estimator. According to Proposition~\ref{prop:estimator}, this is achieved by evaluating 
\begin{equation}
    {\mathcal{M}'}^{+}(O_{\boldsymbol{i},\boldsymbol{j}}')={\mathcal{M}'}^{+} \circ \mathcal{M}' (X^{(\boldsymbol{i},\boldsymbol{j})})= \mathcal{P}_{\Im(\mathcal{M}')} (X^{(\boldsymbol{i},\boldsymbol{j})})\,. \label{eq:projection_of_X}
\end{equation}

Let us point out that, as $M'^{(2k)}$ is positive semi-definte, both solving equation~\eqref{eq:system_for_X} and performing the projection~\eqref{eq:projection_of_X} numerically only require the implementation of a function that returns the value of $\sum_{l,m} X_{\boldsymbol{l},\boldsymbol{m}} M'^{(2k)}_{\boldsymbol{l},\boldsymbol{m};\boldsymbol{r},\boldsymbol{s}}$ for an arbitrary $X$. Storing the full matrix $M'^{(2k)}$ is thus not required, which can be helpful given that it can reach a considerable size already for $k=2$ and moderately large $L$. This can be achieved for example using a \emph{conjugate gradient} algorithm for equation~\eqref{eq:system_for_X}. The projection~\eqref{eq:projection_of_X} can be implemented by iteratively solving the differential equation $d/dt X_{\boldsymbol{l},\boldsymbol{m}}(t)=-\sum_{r,s} M'^{(2k)}_{\boldsymbol{l},\boldsymbol{m};\boldsymbol{r},\boldsymbol{s}} (X_{\boldsymbol{r},\boldsymbol{s}}(t)- X^{(\boldsymbol{i},\boldsymbol{j})}_{\boldsymbol{r},\boldsymbol{s}})$, with initial condition $X_{\boldsymbol{l},\boldsymbol{m}}(0)=0$, which has solution $X(t)=(\openone-e^{-M't})X^{(\boldsymbol{i},\boldsymbol{j})} $ that converges to $\mathcal{P}_{\Im(\mathcal{M}')} (X^{(\boldsymbol{i},\boldsymbol{j})})$ in the long-time limit.

\section{Recovery with nearest-neighbour Hamiltonians} \label{sec:nn-hamiltonians}
We have seen so far that by sampling unitaries uniformly from the symmetric circulant subgroup $CU_{\text{Sym}}(L)$, which only requires free fermionic, translationally invariant evolution, we can recover all 2-point and 4-point correlation functions. However, it is difficult to implement these random unitaries in a physical system due to the complex (long-range) interactions required and the limited control we currently have, as well as possibly short coherence times. A more realistic proposal for fermionic devices would be to consider nearest-neighbour Hamiltonians, with only \emph{real} hopping amplitudes. The simplest Hamiltonian of this form looks like
\begin{equation}
    H_{\rm n.n} = - J \sum^{L-1}\limits_{i=0} \left(a_i^\dagger a_{i+1} + a_{i+1}^\dagger a_i\right)\enspace , 
    \label{Eq.nnAPP}
\end{equation}
where $J\in\mathbb{R}$ is a coupling term that dictates the strength of hopping between them. Time-evolution under these nearest-neighbour Hamiltonians with real hopping amplitudes, gives rise to the unitaries:
\begin{equation}
    U_{\rm n.n}(\alpha) = \exp\left(i\alpha\sum^{L-1}\limits_{i=0} \left(a_i^\dagger a_{i+1} + a_{i+1}^\dagger a_i\right)\right), \ \alpha \in \mathbb{R} \enspace ,
\end{equation}
where $\alpha= Jt$ and $t$ is the quench time. These nearest-neighbour unitaries form the one-parameter subgroup $CU_{\rm n.n}(L) = \{U_{\rm n.n}(\alpha)\,, \ \forall \alpha \in \mathbb{R}\}$ within the symmetric circulant group:
\begin{equation}
    CU_{\rm n.n}(L) \subset CU_{\rm Sym}(L) \enspace .
\end{equation}
As we have seen in Proposition~\ref{circulant_prop}, all unitaries contained in $CU(L)$ are defined by their eigenvalues in the Fourier basis, which must lie on the unit circle. For $CU(L)$, the phases $\varphi_k$ can be any number in $[0, 2\pi)$, and for the symmetric circulant group $\varphi_k = \varphi_{L-k}$. For  the nearest-neighbour circulant unitaries, the phases are constrained as
\begin{equation}
     \varphi_k = 2\alpha\cos(\frac{2\pi}{L} k) \ \text{mod} \ 2\pi \hspace{10mm}\forall U \in CU_{\rm n.n}(L)\enspace .
\end{equation}
This follows from observing that $H_{\rm n.n}$ can be written in the form \eqref{freeHapp} with $h_{i,j}= \delta_{j,i+1}+\delta_{j,i-1}$ and computing its eigenvalues as in Proposition~\ref{circulant_prop}. 

\subsection{Approximating the Haar distribution}\label{sec:nn-hamiltonians_one}
An ensemble of unitaries is said to be a unitary $t$-design for some group of unitaries if all moment operators up to order $t$ (see Definition~\ref{def:momop}) match those of the uniform measure on said group. We relax this notion somewhat, and study whether ensembles of nearest-neighbour unitaries, with support contained in $CU_{\rm n.n}(L)$, can approximate the uniform measure on the ``full group'' $CU_{\rm Sym}(L)$. 

The elements of the subgroup $CU_{\rm Sym}(L)$ are defined by the vector $\boldsymbol{\varphi} = (\varphi_0, \varphi_1, \dots , \varphi_{\lfloor L/2\rfloor})$, which is an element of the $(\lfloor L/2\rfloor + 1)$-torus. Here, the entries of $\boldsymbol{\varphi}$ can be arbitrary phases in $\left[0,2\pi\right)$. Sampling unitaries uniformly according to the Haar measure on $CU_{\rm Sym}(L)$ coincides with sampling $\boldsymbol{\varphi}$ uniformly from the $(\lfloor L/2\rfloor + 1)$-torus, as this gives a right-invariant distribution. In the case of $CU_{\rm n.n}(L)$, we instead have
\begin{equation}
    \boldsymbol{\varphi} = \alpha \boldsymbol{c} \ \text{mod} \ 2\pi \enspace , 
\end{equation}
where $\boldsymbol{c} = (2, 2\cos (\frac{2\pi}{L}), \dots , 2\cos (\frac{2\pi}{L}\lfloor L/2\rfloor))$. If, for suitable odd $L$, the elements of $\boldsymbol{c}$ were all rationally independent, then $\boldsymbol{\varphi}$ would be an irrational winding of the $([L/2] + 1)$-torus, which is known to be dense~\cite{pontrjagin1939topological}. Given this, there should be a way of sampling $\alpha$ such that the corresponding elements of $CU_{\rm n.n}(L)$ cover $CU_{\rm Sym}(L)$ uniformly, approximating even higher order moments accurately. Using well-known results from number theory, it is not hard to show rational independence when $L$ is prime:

\begin{lemma}[Linear independence of roots of unity \cite{johnsen1985lineare}]
\label{lemm:roots_of_unity}
Let $L\in\mathbb{N}$ be a positive integer. Then\\

\hspace{20mm} L is square-free if and only if the primitive L-th roots of unity are linearly independent over $\mathbb{Q}$.
\end{lemma}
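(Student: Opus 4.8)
The plan is to prove the two implications separately, handling the ``easy'' direction (not square-free $\Rightarrow$ dependent) by exhibiting an explicit vanishing sub-sum, and the ``hard'' direction (square-free $\Rightarrow$ independent) by reducing to the prime case via the Chinese Remainder Theorem and a tensor-product argument. Throughout I write $\zeta_L = e^{2\pi i/L}$, so that the primitive $L$-th roots of unity are exactly the $\phi(L)$ numbers $\zeta_L^k$ with $\gcd(k,L)=1$, and I recall the standard fact that $[\mathbb{Q}(\zeta_L):\mathbb{Q}]=\phi(L)$. Hence the primitive roots span a $\mathbb{Q}$-subspace of $\mathbb{Q}(\zeta_L)$ of dimension at most $\phi(L)$, and linear independence is equivalent to these $\phi(L)$ roots forming a $\mathbb{Q}$-basis of $\mathbb{Q}(\zeta_L)$.

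For the forward direction, suppose $L$ is \emph{not} square-free, so that $p^2\mid L$ for some prime $p$. Fix any $k$ with $\gcd(k,L)=1$ and consider the $p$ shifted exponents $k+j\,(L/p)$ for $j=0,\dots,p-1$. I would first check that each is again coprime to $L$: reducing modulo $p$ gives $k$ (because $p\mid L/p$), and modulo any other prime $q\mid L$ it also gives $k$ (because $q\mid L/p$), so coprimality to $L$ is preserved. These exponents are distinct modulo $L$, so the $\zeta_L^{k+j(L/p)}$ are $p$ distinct primitive roots, yet their sum factorises as $\zeta_L^{k}\sum_{j=0}^{p-1}\bigl(\zeta_L^{L/p}\bigr)^{j}$, which vanishes since $\zeta_L^{L/p}$ is a primitive $p$-th root of unity. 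This nontrivial $\mathbb{Q}$-linear relation establishes linear dependence.

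For the converse, assume $L=p_1\cdots p_r$ is a product of distinct primes. I would first dispatch the prime base case: for $L=p$ the primitive roots are $\zeta_p,\dots,\zeta_p^{p-1}$, and substituting $\zeta_p^{p-1}=-(1+\zeta_p+\cdots+\zeta_p^{p-2})$ into a hypothetical vanishing combination, then using that $\{1,\zeta_p,\dots,\zeta_p^{p-2}\}$ is a $\mathbb{Q}$-basis, forces all coefficients to be zero. For general square-free $L$ I would invoke that the fields $\mathbb{Q}(\zeta_{p_i})$ are linearly disjoint over $\mathbb{Q}$ — which follows from $[\mathbb{Q}(\zeta_L):\mathbb{Q}]=\phi(L)=\prod_i(p_i-1)=\prod_i[\mathbb{Q}(\zeta_{p_i}):\mathbb{Q}]$ — so that the multiplication map is a $\mathbb{Q}$-algebra isomorphism $\mathbb{Q}(\zeta_{p_1})\otimes_{\mathbb{Q}}\cdots\otimes_{\mathbb{Q}}\mathbb{Q}(\zeta_{p_r})\xrightarrow{\ \sim\ }\mathbb{Q}(\zeta_L)$ sending a pure tensor $\omega_1\otimes\cdots\otimes\omega_r$ to the product $\omega_1\cdots\omega_r$. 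Under the multiplicative identification of root-of-unity groups coming from $\mathbb{Z}/L\cong\prod_i\mathbb{Z}/p_i$, the primitive $L$-th roots correspond precisely to the products $\omega_1\cdots\omega_r$ with each $\omega_i$ a primitive $p_i$-th root, i.e.\ to the pure tensors built from primitive $p_i$-th roots. Since each such factor-set is linearly independent by the base case, and since a product of linearly independent sets yields a linearly independent set of pure tensors (extend each set to a basis; the pure tensors then sit inside a basis of the tensor product), the primitive $L$-th roots are linearly independent.

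The main obstacle I anticipate is making the correspondence in the converse airtight: one must verify carefully that under the CRT/tensor identification the \emph{primitive} $L$-th roots match exactly the pure tensors built from \emph{primitive} $p_i$-th roots, and no others. This hinges on the linear-disjointness statement and on tracking that the order of a product root of unity is the lcm of the orders of its factors. Once this bookkeeping is in place, both the tensor-independence lemma and the prime base case are routine, so the degree identity $[\mathbb{Q}(\zeta_L):\mathbb{Q}]=\phi(L)$ together with the CRT factorisation carries the entire weight of the argument.
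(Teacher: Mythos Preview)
The paper does not supply its own proof of this lemma; it is quoted as a known result from the literature (Johnsen, 1985). Your proposal therefore cannot be compared against a proof in the paper, but on its own merits it is correct and complete.

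Both directions are sound. In the ``not square-free $\Rightarrow$ dependent'' direction, your check that each shifted exponent $k+j(L/p)$ remains coprime to $L$ is right (the key point being that $p^2\mid L$ forces $p\mid L/p$, so reduction modulo every prime divisor of $L$ returns $k$), and the vanishing of the resulting sub-sum is the standard geometric-series identity for $p$-th roots. In the converse, your reduction to the prime case via linear disjointness of the $\mathbb{Q}(\zeta_{p_i})$ and the tensor factorisation $\mathbb{Q}(\zeta_L)\cong\bigotimes_i\mathbb{Q}(\zeta_{p_i})$ is a clean way to organise the argument; the CRT bookkeeping you flag as the main obstacle does work out exactly as you sketch, since $\zeta_L^k$ is primitive iff each factor $\omega_i^{a_i}$ in the decomposition $\zeta_L^k=\prod_i\omega_i^{a_i}$ is a primitive $p_i$-th root. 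The prime base case is handled correctly by substituting the cyclotomic relation and invoking the power basis $\{1,\zeta_p,\dots,\zeta_p^{p-2}\}$.
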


A positive integer is said to be square-free if its factorization does not contain duplicate primes, and the primitive $L$-th roots of unity are $\{e^{i\frac{2\pi}{L}k}\,\vert \, 1\leq k \leq L,\, \operatorname{gcd}(k,L)=1\}$, i.e., those that are not also roots of unity of lower order. If $L$ is prime, it is in particular square-free, and the entries of $\boldsymbol{c}$ will be linearly independent over $\mathbb{Q}$.

However, rational independence seems to be a much too strong requirement if one is just interested in approximating low order moments. For a more pragmatic analysis, let us bound the difference between the two moment operators of order $t=2$. The bounds below can be straightforwardly generalized to higher $t$.
In the following, $\boldsymbol{p}$ denotes an arbitrary distribution on $\mathbb{R}$ according to which $\alpha$ is sampled, which we will specify later on. 
As both $CU_{\rm Sym}(L)$ and $CU_{\rm n.n}(L)$ are circulant, we only care about a fixed column
\begin{align}
    &\norm{\mathbb{E}_{u\sim CU_{\rm Sym}(L)} u^{\otimes 2}\otimes u^{\dag \otimes 2}\ket{0}^{\otimes 4}-\mathbb{E}_{\alpha\sim\boldsymbol{p}} u(\alpha)^{\otimes 2}\otimes u(\alpha)^{\dag \otimes 2}\ket{0}^{\otimes 4}}_2^2 \leq  \nonumber\\[0.5em]
    &\hspace{20mm} \leq \max_{0\leq k,l,m,n \leq \lfloor\frac{L}{2}\rfloor}\abs{\, \mathbb{E}_{\phi_i \sim \left[0,2\pi\right)}e^{i(\phi_k+\phi_l - \phi_m - \phi_n)}-\mathbb{E}_{\alpha \sim \boldsymbol{p}}e^{i\alpha\left(\cos\frac{2\pi k}{L} + \cos\frac{2\pi l}{L} - \cos\frac{2\pi m}{L} - \cos\frac{2\pi n}{L} \right)}}^2,  \label{eqn:design_bound} \\[0.5em]
    &\hspace{20mm}\leq \max_{\substack{0\leq k,l,m,n\leq \lfloor\frac{L}{2}\rfloor\\ \{k,l\}\neq \{m,n\}}} \abs{\, \mathbb{E}_{\alpha \sim \boldsymbol{p}}e^{i\alpha\left(\cos\frac{2\pi k}{L} + \cos\frac{2\pi l}{L} - \cos\frac{2\pi m}{L} - \cos\frac{2\pi n}{L} \right)}}^2. \label{eqn:design_bound2}
\end{align}
Here, in the first step we used that all unitaries in the two ensembles are diagonal with respect to the Fourier basis, see Proposition \ref{circulant_prop}, and the properties of the corresponding eigenvalues. In the second step, we exploited the fact that for $\{k,l\}=\{m,n\}$ the right-hand side of \eqref{eqn:design_bound} is zero, and the first term vanishes whenever $\{k,l\}\neq\{m,n\}$. 
Let us define \begin{equation}
\kappa^{(2)}(L) = \min_{\substack{0\leq k,l,m,n \leq \lfloor \frac{L}{2} \rfloor\\ \{k,l\}\neq \{m,n\}}}\abs{\cos\frac{2\pi k}{L} + \cos\frac{2\pi l}{L} - \cos\frac{2\pi m}{L} - \cos\frac{2\pi n}{L}},\end{equation}
which will be the relevant quantity to (lower-)bound. The superscript $(2)$ references $t=2$; for $t>2$, one can straightforwardly extend the definition above. Note that $\kappa^{(t)}(L)$ contains a very specific linear combination of cosines, reinforcing that rational independence is indeed too stringent for our purposes.

We will now examine concrete distributions $\alpha\sim \boldsymbol{p}$. If we choose $\boldsymbol{p}$ to be the uniform distribution on $[0,\alpha_{\rm max}]$, one finds
\begin{equation} \eqref{eqn:design_bound2}\leq \frac{2}{\alpha_{\rm max}\kappa^{(2)}(L)}.
\end{equation}

Alternatively, if $\boldsymbol{p}$ is the normal distribution with mean $\mu$ and variance $\sigma$ (in practice, one would of course truncate this distribution suitably),
\begin{equation}
    \eqref{eqn:design_bound2} \leq e^{-\sigma^2 \kappa^{(2)}(L)^2}, \label{eq:normal-dist-bound}
\end{equation}
where we may replace $\kappa^{(2)}(L)$ with $\kappa^{(t)}(L)$ in both bounds for moments $t > 2$.

For both distributions $\boldsymbol{p}$, whenever $\kappa^{(t)}(L)$ is non-zero, we can approximate $t$-th order moments of $CU_{\rm Sym}(L)$ arbitrarily well with just nearest-neighbour evolution, by increasing $\alpha_{\rm max}$ or $\sigma$ accordingly. This would require implementing larger hopping rates and/or evolution times. Due to Lemma \ref{lemm:roots_of_unity}, $\kappa^{(t)}(L)$ will of course be non-zero if $L$ is prime. More interestingly, we observe numerically that $\kappa^{(2)}(L)$ behaves roughly like $\sim \frac{500}{L^4}$, and the decay is not monotonous in $L$.

The inverse-exponential scaling of the approximation error~\eqref{eq:normal-dist-bound} in $\sigma$ gives the normal distribution a very favourable behaviour, at least theoretically. 
On the other hand, sampling $\alpha$ from a uniform distribution supported entirely on a positive or negative interval may be preferable in those cases where it is hard to change the sign of $J$ for the underlying physical system (for instance by exchanging fermions with holes). In these cases, one may, for example, want to fix the nearest-neighbour Hamiltonian and in particular $J$, and just perform measurements for different evolution times. 

These considerations support our intuition that for odd $L_{\text{tot}} = L + L_{\text{anc}}$ there are feasible distributions of coefficients $\alpha$ that give rise to ensembles sufficiently dense in the symmetric circulant group $CU_{\rm Sym}(L)$. It is our hope that further research will deepen our understanding of which distributions are most favourable. Overall, these results are encouraging and indicate that we can expect to find experimentally driven ensembles that our recovery procedure can be applied to. Below, in Sec.~\ref{sec:nn-hamiltonians_two} we determine numerically  that with uniformly distributed $\alpha$ we can recover $2$- and $4$-point correlation functions with our recovery procedure. 

When implementing these recovery procedures in practice there will be two main sources of error. One will be given by the fact that we can repeat the protocol for only a finite number $N$ of unitary evolutions. This is a statistical error, in the sense that by suitably increasing the number of samples $N$ we can get with high probability closer and closer to the expected value of the sampling procedure. The second source of error descends from the fact that our chosen distribution on $CU_{\rm n.n}(L)$ approximates the Haar measure $CU_{\rm Sym}(L)$ only up to a certain finite precision, as computed for example in~\eqref{eqn:design_bound2}. This will give a systematic bias to our recovery procedure, in the sense that this error will persist also in the limit of infinitely large sample number $N$. The only way to reduce this error is to improve the distribution on $CU_{\rm n.n}(L)$, for instance by choosing larger $\alpha_{\rm max}$ or $\sigma$ in the previous examples.

In the following propositions we give more detailed guarantees on the impact of these errors. In Proposition~\ref{prop:sample_compl_bound}, we give a bound on the number of samples needed to achieve a certain sampling accuracy. Then, in Proposition~\ref{prop:bias_from_approx}, we bound how well the chosen ensemble must approximate the Haar measure on $CU_{\rm Sym}$ to achieve a given recovery bias.

\begin{proposition}[Sample complexity upper bound]
\label{prop:sample_compl_bound}
Let $\epsilon, \delta > 0$. Assume that we want to estimate the $k$-point correlation functions $\Tr\left(\rho a_{{\bbb{i}}}^\dagger a_{{\bbb{j}}}\right)\equiv \Tr\left(\rho a_{i_1}^\dagger \cdots a_{i_k}^\dagger a_{j_1}\cdots a_{j_k}\right)$ for a set of $M_{\rm{tot}}$ distinct indices $(\bbb{i},\bbb{j})$. For each $(\bbb{i},\bbb{j})$ we then consider $\hat{X}^{({\bbb{i}},{\bbb{j}})} =\frac{1}{N}\sum^{N}_{m=1}X^{({\bbb{i}},{\bbb{j}})} _{u^{(m)},\boldsymbol{n}^{(m)}}$ , which represents the arithmetic mean of $N$ single-round estimates 
\begin{equation}
X^{({\bbb{i}},{\bbb{j}})} _{u^{(m)},\boldsymbol{n}^{(m)}}=\braket{\boldsymbol{n}^{(m)}|U^{(m)}\mathcal{M}^+(O^{({\bbb{i}},{\bbb{j}})} ){U^{(m)}}^\dag|\boldsymbol{n}^{(m)}}, 
\end{equation}
where $O^{({\bbb{i}},{\bbb{j}})}$ denotes the operator on the enlarged system used to form the unbiased estimator for the correlation function $\Tr\left(\rho a_{{\bbb{i}}}^\dagger a_{{\bbb{j}}}\right)$ and $m$ labels the data from a specific measurement cycle. To achieve the estimation error
\begin{align}
\max_{({\bbb{i}},{\bbb{j}})}\vert \hat{X}^{({\bbb{i}},{\bbb{j}})}  - \Tr\left(\rho a_{{\bbb{i}}}^\dagger a_{{\bbb{j}}}\right) \vert < \epsilon \quad \text{with probability } \geq 1-\delta,
\end{align}
it suffices to choose $N$ such that
\begin{align}
    N\ge \frac{4}{\epsilon^{ 2}}\log\left(\frac{4 M_{\rm tot}}{\delta}\right)\max_{({\bbb{i}},{\bbb{j}})} f(O^{({\bbb{i}},{\bbb{j}})})^2 ,
\end{align}
where $M_{\rm {tot}}$ is the total number of correlation functions to estimate. Here $f(O^{({\bbb{i}},{\bbb{j}})})$ is a function dependent on $O^{({\bbb{i}},{\bbb{j}})}$:
\begin{align}    f\left(O^{({\bbb{i}},{\bbb{j}})}\right):=\sum^{\mathrm{dim}\left(W_{2k}\right)}_{s=1} \lvert \sum^{\mathrm{dim}\left(W_{2k}\right)}_{l=1} M_{s,l}^{+} O^{({\bbb{i}},{\bbb{j}})}_l \rvert,
\end{align} where $O^{({\bbb{i}},{\bbb{j}})}_l$ is defined such that  $O^{({\bbb{i}},{\bbb{j}})}=\sum^{\mathrm{dim}\left(W_{2k}\right)}_{l=1}O^{({\bbb{i}},{\bbb{j}})}_l A_l $, where  $\{A_l\}^{\mathrm{dim}\left(W_{2k}\right)}_{l=1}$ represent the elements of the basis of $W_{2k}$ given by
\begin{equation}
\Big\{a_{i_1}^\dagger\dots a_{i_k}^\dagger a_{j_1}\dots a_{j_k} \, : \, 0 \le i_1 < \dots < i_k\le L-1 \text{ and } 0\le j_1< \dots < j_k\le L-1 \Big\},\end{equation}
and $M_{s,l}^{+}$ are the matrix elements associated to the pseudo-inverse channel $\mathcal{M}^+$ in the basis $\{A_l\}^{\mathrm{dim}\left(W_{2k}\right)}_{l=1}$.
It is worth noting that the function $f\left(O^{({\bbb{i}},{\bbb{j}})}\right)$ can be efficiently estimated during classical pre-processing.
\end{proposition}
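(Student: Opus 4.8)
The plan is to reproduce, for general $k$-point correlators, the Hoeffding-plus-union-bound argument already carried out for the $2$-point sector in Proposition~\ref{prop:sample_complexity_2}, the only genuinely new ingredient being a uniform bound on the single-shot estimator in terms of $f(O^{(\bbb{i},\bbb{j})})$. First I would recall from the construction preceding the statement that $O^{(\bbb{i},\bbb{j})}$ is chosen so that $\Tr(\rho' O^{(\bbb{i},\bbb{j})}) = \Tr(\rho\, a_{\bbb{i}}^\dagger a_{\bbb{j}})$ and $O^{(\bbb{i},\bbb{j})}\in\Im(\mathcal{M}')$. By Proposition~\ref{prop:estimator} this makes each single-round estimate unbiased, $\mathbb{E}\, X^{(\bbb{i},\bbb{j})}_{u,\boldsymbol{n}} = \Tr(\rho\, a_{\bbb{i}}^\dagger a_{\bbb{j}})$, reducing the claim to a concentration statement for empirical means of bounded i.i.d.\ variables (the $N$ rounds for a fixed $(\bbb{i},\bbb{j})$ are i.i.d., and the union bound across different correlators needs no independence).

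The core step is the range bound. Expanding $\mathcal{M}^+(O^{(\bbb{i},\bbb{j})})$ in the correlation-operator basis $\{A_s\}$ gives $\mathcal{M}^+(O^{(\bbb{i},\bbb{j})}) = \sum_s\big(\sum_l M^+_{s,l} O^{(\bbb{i},\bbb{j})}_l\big) A_s$, so that $X^{(\bbb{i},\bbb{j})}_{u,\boldsymbol{n}} = \sum_s\big(\sum_l M^+_{s,l} O^{(\bbb{i},\bbb{j})}_l\big)\braket{\boldsymbol{n}|U A_s U^\dagger|\boldsymbol{n}}$. Each basis operator $A_s$ is a product of $k$ creation and $k$ annihilation operators, each of operator norm at most $1$ (since $a_i^\dagger a_i$ is a projector, $\|a_i\| = \|a_i^\dagger\| = 1$), hence $\|A_s\|\le 1$ and $\|U A_s U^\dagger\| = \|A_s\|\le 1$, giving $|\braket{\boldsymbol{n}|U A_s U^\dagger|\boldsymbol{n}}|\le 1$. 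The triangle inequality then yields $|X^{(\bbb{i},\bbb{j})}_{u,\boldsymbol{n}}| \le \sum_s|\sum_l M^+_{s,l} O^{(\bbb{i},\bbb{j})}_l| = f(O^{(\bbb{i},\bbb{j})})$, i.e.\ the estimator is confined to a disc of radius $f(O^{(\bbb{i},\bbb{j})})$ in the complex plane. I expect this operator-norm bound to be the only point deserving care, though it is elementary once the projector observation is made.

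Since the estimators are complex, I would split them into real and imaginary parts, each confined to a real interval of width $2f(O^{(\bbb{i},\bbb{j})})$. Setting $\epsilon' = \epsilon/\sqrt{2}$, controlling both parts to within $\epsilon'$ forces the complex error below $\epsilon$. Hoeffding's inequality applied to each part gives, for a single $(\bbb{i},\bbb{j})$,
\begin{equation}
P\Big(\big|\mathfrak{Re}\,\hat{X}^{(\bbb{i},\bbb{j})} - \mathfrak{Re}\,\Tr(\rho\, a_{\bbb{i}}^\dagger a_{\bbb{j}})\big|\ge \epsilon'\Big) \le 2\exp\Big(-\tfrac{N\epsilon'^2}{2\, f(O^{(\bbb{i},\bbb{j})})^2}\Big),
\end{equation}
and identically for the imaginary part.

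Finally, a union bound over the $2M_{\rm tot}$ events — real and imaginary parts of all $M_{\rm tot}$ correlators, each bounded using the worst-case radius $\max_{(\bbb{i},\bbb{j})} f(O^{(\bbb{i},\bbb{j})})$ — controls the total failure probability by $4 M_{\rm tot}\exp\big(-N\epsilon^2/(4\max_{(\bbb{i},\bbb{j})} f(O^{(\bbb{i},\bbb{j})})^2)\big)$ after substituting $\epsilon'^2 = \epsilon^2/2$. Requiring this to be at most $\delta$ and solving for $N$ reproduces the stated threshold $N \ge (4/\epsilon^2)\log(4M_{\rm tot}/\delta)\max_{(\bbb{i},\bbb{j})} f(O^{(\bbb{i},\bbb{j})})^2$. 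The efficiency remark then follows immediately: evaluating $f(O^{(\bbb{i},\bbb{j})})$ uses only the already-computed pseudo-inverse elements $M^+_{s,l}$ and the coefficients $O^{(\bbb{i},\bbb{j})}_l$, a purely classical linear-algebra step independent of the measurement data.
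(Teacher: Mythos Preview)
Your proposal is correct and follows essentially the same approach as the paper: bound the single-shot estimator by $f(O^{(\bbb{i},\bbb{j})})$ via the triangle inequality and $\|A_s\|_\infty\le 1$, split into real and imaginary parts with $\epsilon'=\epsilon/\sqrt{2}$, apply Hoeffding, and union-bound over the $2M_{\rm tot}$ events. The only cosmetic difference is that the paper's displayed computation writes $\braket{\boldsymbol{n}|A_s|\boldsymbol{n}}$ rather than your $\braket{\boldsymbol{n}|U A_s U^\dagger|\boldsymbol{n}}$, but since unitary conjugation preserves the operator norm the bound is identical either way.
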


\begin{proof}
Let $\epsilon'=\frac{\epsilon}{\sqrt{2}}$. We aim to show that for $N$ samples, the real and imaginary parts of $\Tr\left(\rho a_{{\boldsymbol{i}}}^\dagger a_{{\boldsymbol{j}}}\right)$ can be recovered within an additive error $\epsilon'$ and a failure probability bounded by $\frac{\delta}{2}$.  By applying a union bound, we obtain the desired result. To do so, we use Hoeffding's inequality, whose concentration bound depends on the range of the random variable $X _{u^{(m)},\boldsymbol{n}^{(m)}}$.
We have
\begin{align}
\Big\vert X^{({\boldsymbol{i}},{\boldsymbol{j}})}_{u^{(m)},\boldsymbol{n}^{(m)}} \Big\vert   &= \Big\vert \sum^{\mathrm{dim}\left(W_{2k}\right)}_{l=1}O^{({\boldsymbol{i}},{\boldsymbol{j}})}_l \braket{\boldsymbol{n}|U\mathcal{M}^+(A_l)U^\dag|\boldsymbol{n}}\Big\vert = \Big\vert \sum^{\mathrm{dim}\left(W_{2k}\right)}_{s,l=1} M_{s,l}^{+} O^{({\boldsymbol{i}},{\boldsymbol{j}})}_l \braket{\boldsymbol{n}|A_s |\boldsymbol{n}}\Big\vert \\ &\le  \sum^{\mathrm{dim}\left(W_{2k}\right)}_{s=1} \Big\vert \sum^{\mathrm{dim}\left(W_{2k}\right)}_{l=1} M_{s,l}^{+} O^{({\boldsymbol{i}},{\boldsymbol{j}})}_l \Big\vert \, \Big\vert \braket{\boldsymbol{n}|A_s |\boldsymbol{n}}\Big\vert  \le  \sum^{\mathrm{dim}\left(W_{2k}\right)}_{s=1} \Big\vert \sum^{\mathrm{dim}\left(W_{2k}\right)}_{l=1} M_{s,l}^{+} O^{({\boldsymbol{i}},{\boldsymbol{j}})}_l \Big\vert  ,
\nonumber
\end{align}
where in the last two steps, we used the triangle inequality and the fact that $\lvert \braket{\boldsymbol{n}|A_l |\boldsymbol{n}}\rvert \le \norm{A_l}_\infty \le 1 $, which follows from the sub-multiplicativity of the $\infty$-norm and the fact that the $\infty$-norm of an annihilation or creation operator is upper bounded by $1$. Therefore, $\lvert X^{({\boldsymbol{i}},{\boldsymbol{j}})}_{u^{(m)},\boldsymbol{n}^{(m)}} \rvert \le f(O^{({\boldsymbol{i}},{\boldsymbol{j}})})$. Also, it holds that
\begin{align}
    &\quad \,\, P\left(\operatorname{max}_{( {\boldsymbol{i}}, {\boldsymbol{j}})}\Big\vert \mathfrak{Re}\, \hat{X}^{({\boldsymbol{i}}, {\boldsymbol{j}})} - \mathfrak{Re}\, \Tr\left(\rho a_{ {\boldsymbol{i}}}^\dagger a_{ {\boldsymbol{j}}}\right) \Big\vert < \epsilon^\prime \right)
    \nonumber
    \\
    &\ge \, 1-\sum_{( {\boldsymbol{i}}, {\boldsymbol{j}})} P\left( \Big\vert \mathfrak{Re}\, \hat{X}^{{( {\boldsymbol{i}}, {\boldsymbol{j}})}} - \mathfrak{Re} \, \Tr\left(\rho a_{ {\boldsymbol{i}}}^\dagger a_{ {\boldsymbol{j}}}\right) \Big\vert \ge \epsilon^\prime \right),
\end{align}
where the sum over $( {\boldsymbol{i}}, {\boldsymbol{j}})$ runs over all the $M_{\rm tot}$ observables of interest.
\noindent
In particular, by Hoeffding's inequality, 
\begin{align}
    P\left( \Big\vert \, \frac{1}{N}\sum^{N}_{m=1} \mathfrak{Re}\, X^{( {\boldsymbol{i}}, {\boldsymbol{j}})}_{u^{(m)},\boldsymbol{n}^{(m)}}  - \mathfrak{Re}\, \Tr\left(\rho a_{ {\boldsymbol{i}}}^\dagger a_{ {\boldsymbol{j}}} \right) \Big\vert \ge \epsilon^\prime \right)\le2 \exp\left(- \frac{2N\epsilon^{\prime 2}}{\left(2f(O^{( {\boldsymbol{i}}, {\boldsymbol{j}})})\right)^2}\right).
\end{align}
Plugging in $N$ gives an upper bound of $\frac{\delta}{2}$. The analysis for the imaginary part is analogous.
\end{proof}

\begin{proposition}
\label{prop:bias_from_approx}
    Let $\tilde{\epsilon}$, $\delta>0$ . Consider $\hat{X}^{( {\boldsymbol{i}}, {\boldsymbol{j}})}$ and $N$ defined as in the previous proposition to give an estimation error $\epsilon$. 
    If each unitary is now sampled from an ensemble $\nu$ (in our case on $CU_{\rm n.n}$), satisfying
    \begin{align}
        \norm{\mathbb{E}_{u\sim \nu} (u^{\otimes 2k } \otimes u^{* \otimes 2k }) - \mathbb{E}_{u\sim CU_{\rm Sym}} (u^{\otimes 2k } \otimes u^{* \otimes 2k }) }_\infty\le \left(k!\, L^{3k}  \max_{ ({\boldsymbol{i}}, {\boldsymbol{j}})}f(O^{( {\boldsymbol{i}}, {\boldsymbol{j}})})\right)^{-1} \tilde{\epsilon},
    \end{align}
    where $f(O^{( {\boldsymbol{i}}, {\boldsymbol{j}})})$, $O^{({\boldsymbol{i}}, {\boldsymbol{j}})}$, $O_l^{( {\boldsymbol{i}}, {\boldsymbol{j}})}$ and $M^+$ are defined as in the previous proposition,
    then we have that
\begin{align}
    \max_{ ({\boldsymbol{i}}, {\boldsymbol{j}})}\vert \hat{X}^{( {\boldsymbol{i}}, {\boldsymbol{j}})}  - \Tr\left(\rho a_{ {\boldsymbol{i}}}^\dagger a_{ {\boldsymbol{j}}}\right) \vert < \epsilon + \tilde{\epsilon} \,,\text{ with probability } \geq 1-\delta.
\end{align}
\end{proposition}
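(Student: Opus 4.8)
The plan is to split the total error via the triangle inequality into a statistical fluctuation term, controlled exactly as in Proposition~\ref{prop:sample_compl_bound}, and a systematic bias term arising from the mismatch between the sampling ensemble $\nu$ and the Haar measure on $CU_{\rm Sym}$. Write $\mathcal{M}_\nu$ for the measurement channel built from $\nu$ and $\mathcal{M}$ for the one built from the Haar measure, whose pseudo-inverse $\mathcal{M}^+$ is the map actually used in post-processing. The key observation is that the single-shot estimates $X^{(\boldsymbol{i},\boldsymbol{j})}_{u,\boldsymbol{n}}$ are bounded \emph{pointwise} by $f(O^{(\boldsymbol{i},\boldsymbol{j})})$ independently of the distribution of $u$, exactly as established in Proposition~\ref{prop:sample_compl_bound}. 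Hence Hoeffding's inequality with the very same $N$ bounds $\vert \hat{X}^{(\boldsymbol{i},\boldsymbol{j})} - \mathbb{E}_{u\sim\nu}\hat{X}^{(\boldsymbol{i},\boldsymbol{j})}\vert < \epsilon$ with probability $\geq 1-\delta$, and it only remains to show that the residual bias $\vert \mathbb{E}_{u\sim\nu}\hat{X}^{(\boldsymbol{i},\boldsymbol{j})} - \Tr(\rho a_{\boldsymbol{i}}^\dagger a_{\boldsymbol{j}})\vert$ is at most $\tilde{\epsilon}$.

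Since $O^{(\boldsymbol{i},\boldsymbol{j})}$ was constructed to lie in $\Im(\mathcal{M})$ with $\Tr(O^{(\boldsymbol{i},\boldsymbol{j})}\rho')=\Tr(\rho a_{\boldsymbol{i}}^\dagger a_{\boldsymbol{j}})$, the Haar-averaged estimator is exactly unbiased by Proposition~\ref{prop:estimator}, i.e.\ $\Tr[\mathcal{M}(\mathcal{M}^+(O^{(\boldsymbol{i},\boldsymbol{j})}))\rho']=\Tr(\rho a_{\boldsymbol{i}}^\dagger a_{\boldsymbol{j}})$. Replacing $\mathcal{M}$ by $\mathcal{M}_\nu$ in the averaging therefore yields
\begin{equation}
\mathbb{E}_{u\sim\nu}\hat{X}^{(\boldsymbol{i},\boldsymbol{j})} - \Tr(\rho a_{\boldsymbol{i}}^\dagger a_{\boldsymbol{j}}) = \Tr\big[(\mathcal{M}_\nu - \mathcal{M})(\mathcal{M}^+(O^{(\boldsymbol{i},\boldsymbol{j})}))\,\rho'\big].
\end{equation}
Both channels are block-diagonal on $W_{2k}$, and their difference is encoded, through formula~\eqref{eq:coefficients_Mk}, by a coefficient matrix $\Delta M^{(2k)}$ whose entries are linear in the entries of the moment-operator difference $\Delta\Phi := \mathbb{E}_{u\sim\nu}(u^{\otimes 2k}\otimes u^{*\otimes 2k}) - \mathbb{E}_{u\sim CU_{\rm Sym}}(u^{\otimes 2k}\otimes u^{*\otimes 2k})$, each value $\Phi_{2k}(\,\cdot\,)$ being a single matrix element of this object and thus bounded in modulus by $\norm{\Delta\Phi}_\infty$.

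Expanding $Y:=\mathcal{M}^+(O^{(\boldsymbol{i},\boldsymbol{j})})=\sum_s Y_s A_s$ in the ordered-index basis $\{A_s\}$ of $W_{2k}$ and using $\vert\Tr(A_t\rho')\vert\leq\norm{A_t}_\infty\leq 1$, I would estimate
\begin{equation}
\big\vert\Tr[(\mathcal{M}_\nu-\mathcal{M})(Y)\rho']\big\vert \leq \sum_s \vert Y_s\vert\, \sum_t \big\vert\Delta M^{(2k)}_{s,t}\big\vert.
\end{equation}
Since \eqref{eq:coefficients_Mk} sums over $L^{k}$ choices of $\boldsymbol{p}$ and $(k!)^3$ permutations with a prefactor $(k!)^{-2}$, every entry obeys $\vert\Delta M^{(2k)}_{s,t}\vert\leq k!\,L^{k}\,\norm{\Delta\Phi}_\infty$; summing over the $\binom{L}{k}^2\leq L^{2k}/(k!)^2$ admissible $t$ gives $\sum_t\vert\Delta M^{(2k)}_{s,t}\vert\leq k!\,L^{3k}\norm{\Delta\Phi}_\infty$, while $\sum_s\vert Y_s\vert = f(O^{(\boldsymbol{i},\boldsymbol{j})})$ by definition. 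Hence the bias is at most $f(O^{(\boldsymbol{i},\boldsymbol{j})})\,k!\,L^{3k}\,\norm{\Delta\Phi}_\infty$, which by the hypothesis on $\norm{\Delta\Phi}_\infty$ is $\leq\tilde{\epsilon}$ after taking the maximum over $(\boldsymbol{i},\boldsymbol{j})$, and the triangle inequality delivers $\max_{(\boldsymbol{i},\boldsymbol{j})}\vert \hat{X}^{(\boldsymbol{i},\boldsymbol{j})} - \Tr(\rho a_{\boldsymbol{i}}^\dagger a_{\boldsymbol{j}})\vert < \epsilon+\tilde{\epsilon}$ with probability $\geq 1-\delta$.

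The main obstacle is the consistent bookkeeping of the combinatorial and dimensional factors when passing from the operator-norm control of $\Delta\Phi$ to the $\ell_1$-type control of $\sum_t\vert\Delta M^{(2k)}_{s,t}\vert$: one must ensure that the anti-symmetrisation in \eqref{eq:coefficients_Mk}, the restriction to the ordered-index basis $\{A_s\}$, and the identification $\sum_s\vert Y_s\vert=f$ all use the same normalisation, so that no spurious factor of $k!$ is lost or double-counted (the stated bound is in fact slightly loose, since the honest count produces $L^{3k}/k!$ rather than $k!\,L^{3k}$). A secondary point, immediate but worth stating, is that the pointwise bound $\vert X^{(\boldsymbol{i},\boldsymbol{j})}_{u,\boldsymbol{n}}\vert\leq f(O^{(\boldsymbol{i},\boldsymbol{j})})$ feeding Hoeffding was derived without reference to the sampling distribution, so it remains valid for every $u\in CU_{\rm n.n}$ and the sample count $N$ from Proposition~\ref{prop:sample_compl_bound} continues to suffice for the statistical term.
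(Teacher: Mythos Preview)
Your argument follows the paper's proof essentially line for line: triangle inequality splitting into statistical and bias parts, the distribution-free pointwise bound $|X_{u,\boldsymbol{n}}|\le f(O)$ carried over from Proposition~\ref{prop:sample_compl_bound} to feed Hoeffding, identification of the bias as $\Tr[(\mathcal{M}_\nu-\mathcal{M})(\mathcal{M}^+(O))\rho']$, expansion in the ordered basis $\{A_s\}$, and entrywise control of $\Delta M^{(2k)}$ via $\norm{\Delta\Phi}_\infty$.

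The one slip is exactly the normalisation issue you flag as ``the main obstacle''. Formula~\eqref{eq:coefficients_Mk} gives the coefficients $M^{(2k)}_{\boldsymbol{i}\boldsymbol{j},\boldsymbol{l}\boldsymbol{m}}$ with \emph{unrestricted} multi-indices. When you pass to the ordered basis $\{A_s\}$ and write $(\mathcal{M}_\nu-\mathcal{M})(A_s)=\sum_t \Delta M^{(2k)}_{s,t} A_t$, each ordered $A_t$ collects $(k!)^2$ equal (in magnitude) contributions from the antisymmetric unrestricted sum over $\boldsymbol{l},\boldsymbol{m}$, so the ordered-basis matrix elements carry an extra factor $(k!)^2$ relative to the bound you read off directly from~\eqref{eq:coefficients_Mk}. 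Hence $|\Delta M^{(2k)}_{s,t}|\le (k!)^3 L^k \norm{\Delta\Phi}_\infty$, not $k!\,L^k\norm{\Delta\Phi}_\infty$, and summing over $\dim(W_{2k})\le L^{2k}/(k!)^2$ values of $t$ yields exactly $k!\,L^{3k}\norm{\Delta\Phi}_\infty$. The constant in the proposition is therefore what this argument actually produces, not a loose overestimate; your ``honest count'' $L^{3k}/k!$ is off by $(k!)^2$ because it mixes the unordered entry bound with the ordered index count. With this correction your proof is the paper's proof.
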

\begin{proof}
    To simplify the notation, let us denote $O:=O^{( {\boldsymbol{i}}, {\boldsymbol{j}})}$ and $\hat{X}:=\hat{X}^{( {\boldsymbol{i}}, {\boldsymbol{j}})}$. The proof strategy is to separate the error arising from a finite number of measurements and the error resulting from using an approximate unitary design. 
    Furthermore, we define $X_\nu$ as the expected value of the random variable $\hat{X}$ when the unitaries are sampled from the unitary distribution $\nu$, i.e.:
    \begin{align}
         X_\nu := \underset{U\sim \mathcal{\nu},\boldsymbol{n}}{\mathbb{E}} \braket{\boldsymbol{n}|U\mathcal{M}^+(O)U^\dag|\boldsymbol{n}}.
    \end{align}
Denoting $\rho'$ as the quantum state $\rho$ coupled with the auxiliary system used to recover the correlation functions, we have the following:  
    \begin{align}
        \vert \hat{X}  - \Tr\left(\rho' \mathcal{M}(\mathcal{M}^+(O)) \right) \vert  \le \epsilon + \vert X_\nu- \Tr\left(\rho' \mathcal{M}(\mathcal{M}^+(O)) \right)\vert,
        \nonumber
    \end{align}
where we have employed the triangle inequality and the results of the previous proposition, which apply analogously since $\hat{X}$ is an unbiased estimator for $X_\nu$.
\noindent
Now, our goal is to bound the quantity $\vert X_\nu- \Tr\left(\rho' \mathcal{M}(\mathcal{M}^+(O)) \right) \vert$. 

\noindent
First, let us observe that $X_\nu$ can be expressed as
    \begin{align}
    \underset{U\sim \mathcal{\nu},\boldsymbol{n}}{\mathbb{E}}  \braket{\boldsymbol{n}|U\mathcal{M}^+(O)U^\dag|\boldsymbol{n}} &= \sum_{\boldsymbol{n} \in \{0,1\}^L}\underset{U \sim \nu }{\mathbb{E}}\,\braket{\boldsymbol{n}| U \rho' U^\dagger |\boldsymbol{n}} \,  \braket{\boldsymbol{n}| U \mathcal{M}^+(O) U^\dag |\boldsymbol{n}}\\[0.5em]
    \nonumber
    &=\Tr \left( \sum_{\boldsymbol{n} \in \{0,1\}^L}\underset{U \sim \nu }{\mathbb{E}}   \,  \braket{\boldsymbol{n}| U \mathcal{M}^+(O) U^\dag |\boldsymbol{n}} \,U^\dagger \ket{\boldsymbol{n}}\bra{\boldsymbol{n}} U \rho' \right) \\[1em]
     \nonumber
    &= \Tr\left(\rho' \mathcal{M}_{(\nu)} (\mathcal{M}^+(O))  \right),
     \nonumber
\end{align}
where we defined the measurement channel $\mathcal{M}_{(\nu)}$ with respect to the distribution $\nu$ as
\begin{align}
    \mathcal{M}_{(\nu)}(\cdot):=\sum_{\boldsymbol{n} \in \{0,1\}^L}\underset{U \sim \nu }{\mathbb{E}}   \,  \braket{\boldsymbol{n}| U (\cdot) U^\dag |\boldsymbol{n}} \,U^\dagger \ket{\boldsymbol{n}}\bra{\boldsymbol{n}} U.
\end{align}
As in the previous proposition, we express $O$ as  $O=\sum^{\mathrm{dim}\left(W_{2k}\right)}_{l=1}O_l A_l$ and $\mathcal{M}^+(A_l)= \sum^{\mathrm{dim}\left(W_{2k}\right)}_{s=1} M_{s,l}^{+} A_s$.
Consequently,\vspace{0.5em}
\begin{align}
    \vert X_\nu- \Tr\left(\rho' \mathcal{M}(\mathcal{M}^+(O)) \right) \vert=\vert \Tr\left(\rho' \mathcal{M}_{(\nu)}(\mathcal{M}^+(O)) \right) - \Tr\left(\rho' \mathcal{M}(\mathcal{M}^+(O)) \right) \vert \\[0.5em]
    \le \sum^{\mathrm{dim}\left(W_{2k}\right)}_{s=1} \Big\vert \sum^{\mathrm{dim}\left(W_{2k}\right)}_{l=1}O_l M_{s,l}^{+} \Big\vert \, \Big\vert  \Tr\left(\rho' \mathcal{M}_{(\nu)}(A_s) \right) - \Tr\left(\rho' \mathcal{M}(A_s) \right) \Big\vert 
\end{align}
Using Proposition~\ref{propo:blocks_M}, we have
\begin{align}
    \mathcal{M}(A_s)=\sum^{\mathrm{dim}\left(W_{2k}\right)}_{h=1} M^{(2k)}_{h,s} A_h
\end{align}
where $M^{(2k)}_{h,s}$ are defined as the coefficients in Proposition~\ref{propo:blocks_M} but with a factor $(k! ) ^2$ to account for the summation over the independent terms $\{A_l\}^{\mathrm{dim}\left(W_{2k}\right)}_{l=1}$ (utilizing the antisymmetry property of the coefficient in Proposition~\ref{propo:blocks_M} and the antisymmetry property of the correlation operators). 
Similarly, we find
\begin{align}
    \mathcal{M}_{(\nu)}(A_s)=\sum^{\mathrm{dim}\left(W_{2k}\right)}_{h=1} M^{(2k)}_{(\nu )\,h,s} A_h,
\end{align}
where $ M^{(2k)}_{(\nu )\,h,s}$ is defined analogously to before, with the average taken with respect to $\nu$ instead of the Haar measure.

\noindent
Hence, we obtain
\begin{align}
    \vert  \Tr\left(\rho' \mathcal{M}_{(\nu)}(A_s) \right) - \Tr\left(\rho' \mathcal{M}(A_s) \right) \vert &\le \sum^{\mathrm{dim}\left(W_{2k}\right)}_{h=1} \vert M^{(2k)}_{(\nu )\,h,s} - M^{(2k)}_{h,s}\vert \, \vert\Tr\left(\rho' A_h \right)\vert\\
    & \le \sum^{\mathrm{dim}\left(W_{2k}\right)}_{h=1} \vert M^{(2k)}_{(\nu )\,h,s} - M^{(2k)}_{h,s}\vert ,
    \nonumber
\end{align}
where in the last step we have used the fact that, by the Hölder inequality, $\vert\Tr\left(\rho' A_h \right)\vert \le \norm{\rho'}_1 \norm{ A_h }_{\infty} \le 1$.
We can then bound the term $\vert M^{(2k)}_{(\nu )\,h,s} - M^{(2k)}_{h,s}\vert$ as 
\begin{align}
    \vert M^{(2k)}_{(\nu )\,h,s} - M^{(2k)}_{h,s}\vert \le  (k!)^3 \, L^{k}  \norm{\mathbb{E}_{u\sim \nu} (u^{\otimes 2k } \otimes u^{* \otimes 2k }) - \mathbb{E}_{u\sim \mathrm{Haar}} (u^{\otimes 2k } \otimes u^{* \otimes 2k }) }_\infty,
\end{align}
where we applied the triangle inequality and the fact that the absolute value of each entry of a matrix is always smaller than its largest singular value. 
We arrive at
\begin{align}
    &\quad \,\, \vert X_\nu- \Tr\left(\rho' \mathcal{M}(\mathcal{M}^+(O)) \right) \vert \le \\&\le  \sum^{\mathrm{dim}\left(W_{2k}\right)}_{s=1} \Big\vert \sum^{\mathrm{dim}\left(W_{2k}\right)}_{l=1}O_l M_{s,l}^{+} \Big\vert \, \Big\vert  \Tr\left(\rho' \mathcal{M}_{\nu}(A_s) \right) - \Tr\left(\rho' \mathcal{M}(A_s) \right) \Big\vert \\
    \nonumber
    &\le \sum^{\mathrm{dim}\left(W_{2k}\right)}_{s,h=1} \Big\vert \sum^{\mathrm{dim}\left(W_{2k}\right)}_{l=1}O_l M_{s,l}^{+} \Big\vert \, \Big\vert M^{(2k)}_{(\nu )\,h,s} - M^{(2k)}_{h,s}\Big\vert \\
    \nonumber
    &\le  (k!)^3 \, L^{k} \mathrm{dim}\left(W_{2k}\right) \left( \sum^{\mathrm{dim}\left(W_{2k}\right)}_{s=1} \Big\vert \sum^{\mathrm{dim}\left(W_{2k}\right)}_{l=1}O_l M_{s,l}^{+} \Big\vert \right) \norm{\mathbb{E}_{u\sim \nu} (u^{\otimes 2k } \otimes u^{* \otimes 2k }) - \mathbb{E}_{u\sim \mathrm{Haar}} (u^{\otimes 2k } \otimes u^{* \otimes 2k }) }_\infty.
    \nonumber
\end{align}
By utilizing our assumption regarding the bound on the infinity norm difference of the vectorized moment operators, we can conclude that the latter quantity is bounded from above by $\tilde{\epsilon}$. This concludes the proof.
\end{proof}
Note that the previous proposition about the bias introduced by an approximate design makes use of only the non-interacting property of the evolution, but not of translational invariance, so it can be applied to any non-interacting protocol involving only approximate designs.

\subsection{Numerical simulations of recovery}\label{sec:nn-hamiltonians_two}
The above analytical treatment supports our intuition that nearest-neighbour interactions with real hoppings are enough to recover $2$- and $4$-point correlation functions for $L_{\text{tot}}$ odd. However, numerical exploration is required in order to get a meaningful understanding of the real numbers involved. Recall that we can approach the Haar distribution over the group $CU_{\text{Sym}}(L)$ by choosing a specific distribution from which to sample $\alpha = Jt$. We expect there to be many ways to do this, but here we focus on the simple case of uniformly distributed variables. 

Since it is currently not experimentally feasible to change the global sign of the Hamiltonian in many state-of-the-art optical lattices, we define the Hamiltonian $H_{n.n.}$ (see Eq. \eqref{Eq.nnAPP} and Fig.~\ref{fig: appendix_fig}) such that it has a global negative sign in front of the hopping coefficient $J$ that we set to be greater than $0$, which is a natural condition for many devices (see e.g.~\cite{Kuhr}). We set a maximum $\alpha_{\text{max}}$ and sample uniformly $\alpha \sim [0, \alpha_{\text{max}}]$.

Our modelled initial state is a pure state, a charge-density wave $|\psi_0\rangle = |0, 1, 0, \dots , 1, 0\rangle$ that is evolved under a Fermi-Hubbard Hamiltonian with a random field term
\begin{equation}
        H_{\text{Hub}} = \sum\limits_{i=0}^{L-2} (a_i^\dagger a_{i+1} + h.c.) + \sum\limits_{i=0}^{L-2} n_i n_{i+1} + \sum\limits_{i=0}^{L-1} g_i n_i \enspace ,
\end{equation}
for a time $t_0 = 1.5$, where $g_i$ are a fixed set of random variables sampled uniformly from $[0.2, 0.7]$. Therefore, the initial state is
\begin{equation}
    |\psi\rangle = e^{(-iH_{\text{Hub}}t_0)}|\psi_0\rangle \enspace .
\end{equation}
The initial state is embedded in an $L_{\text{tot}} = L + L_{\text{anc}}$ system, and the auxiliary modes are initialized in the state vector  $|0\rangle$. We classically simulate experiment (a) in Fig.~\ref{fig: appendix_fig} and store the values $\{\alpha_j, |\boldsymbol{n}_j\rangle\}_{j=1}^{N}$ for $N=150000$ measurement shots. 
\begin{figure}[ht]
    \centering
    \includegraphics[width=0.8\textwidth]{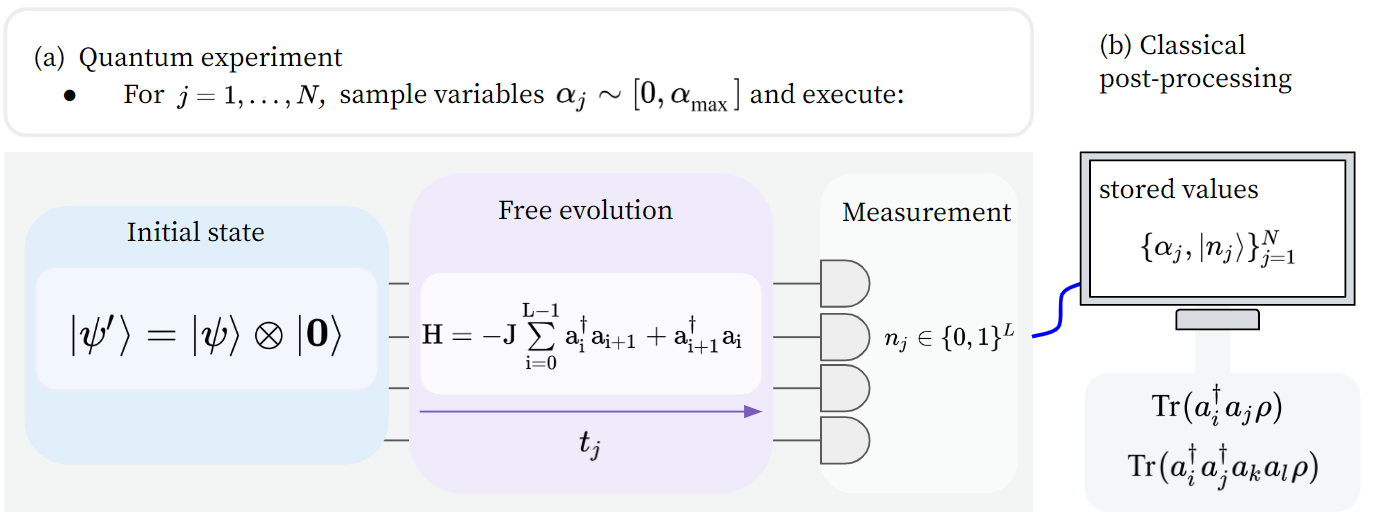}
    \caption{Schematic of the simulated experiment, with $\alpha_j = Jt_j$ ($J$ fixed)}
    \label{fig: appendix_fig}
\end{figure}

Using these stored values, we estimate the $2$-point correlation functions $\text{Tr}(a_i^\dagger a_j)$, for each $\ (i, j)\in \{0,\dots, L-1\}^{2}$ by averaging the unbiased estimator $X_{u, \boldsymbol{n}}^{(i, j)}$ in Eq.~(\ref{eq:2pointest}) for $L \in \{3, 4, 5, 6, 7, 8\}$ and, correspondingly, $L_{\text{tot}} \in \{5, 5, 7, 7, 9, 9\}$. In this equation, $n_k$ represents the $k$-th element of the vector $|\boldsymbol{n}_j\rangle$. The matrix elements of $u_j$ can be efficiently evaluated using an analytical formula that depends on $\alpha_j$, taking advantage of the diagonal nature of $u_j$ in the Fourier basis. In order to estimate all $4$-point correlation functions $\text{Tr}(a_i^\dagger a_j^\dagger a_k a_l)$, for each$ \ (i, j, k, l) \in \{0,\dots, L-1\}^4$, we numerically determine the form of the co-efficient matrix $M_{i_1 i_2, j_1 j_2, l_1 l_2, m_1 m_2}$, find its pseudo-inverse, and then calculate the unbiased estimator using Eq.~\ref{eq:system_for_X} and Eq.~\ref{eq:projection_of_X}. We do this for  $L \in \{3, 5, 7\}$ and, correspondingly, $L_{\text{tot}} \in \{5, 7, 9\}$. 

In this regime of system sizes, we can classically simulate the exact values of the $2$- and $4$-point correlation functions and use them to evaluate the accuracy of our method. In Figure~\ref{fig:appendixAlphaScalefour} we show a similar plot to Figure~\ref{fig:VarScaling} in the main text, but for the $4$-point estimates. This time, however, aligning with the colour plot of Figure~\ref{fig:colourplots}, we compute the average over a subset of all correlation functions, where the first two indices are fixed, \ie $\langle a_0^\dagger a_1^\dagger a_k a_l\rangle$ and therefore $\text{Ave}(\Delta O_{01kl}) = L^{-2} \sum_{k, l} |\langle a_0^\dagger a_1^\dagger a_k a_l\rangle - \langle a_0^\dagger a_1^\dagger a_k a_l\rangle_{\text{est}}|$. We observe, for $L \in \{3,5,7\}$, a similar trend to the $2$-point case, where the larger $\alpha_{\text{max}}$ gets the less steeply the error increases with $L$. In the main text, the inset of Figure~\ref{fig:VarScaling} shows that the average variance (over all 2-point correlation functions) scales only with the number of lattice sites $L (L_{\text{tot}})$ and not with $\alpha_{\text{max}}$. We repeat these plots below in Figures~\ref{fig:appendixAlphaScale} and ~\ref{fig:appendixVarScale} for clarity. In Figure~\ref{fig:VarScalingfour}, we see an equivalent plot for the scaling of the average variance in the $4$-point case, again taking the average over a subset of all correlation functions with $(i,j,k,l) = (0,1,k,l)$, and for $L \in \{3,5,7\}$. The variance in the $2$-point case is calculated as $\text{Var}(\langle a_i^\dagger a_j\rangle_{\rm est}) = \text{Var}(\mathfrak{Re}\,\langle a_i^\dagger a_j\rangle_{\rm est}) + \text{Var}(\mathfrak{Im}\,\langle a_i^\dagger a_j\rangle_{\rm est})$, and in the $4$-point case as $\text{Var}(\langle a_0^\dagger a_1^\dagger a_k a_l\rangle_{\rm est}) = \text{Var}(\mathfrak{Re}\, \langle a_0^\dagger a_1^\dagger a_k a_l\rangle_{\rm est}) + \text{Var}(\mathfrak{Im}\,\langle a_0^\dagger a_1^\dagger a_k a_l\rangle_{\rm est})$. We see that the average variance appears to scale sub-linearly in $L(L_{\text{tot}})$ for the $2$-point correlation function estimates. In the $4$-point case, since we have plotted three data points it is difficult to comment on concrete trends, however it appears that the average variance seems to grow sub-quadratically with $L(L_{\text{tot}})$. As mentioned, this indicates that the number of samples required for a given accuracy, in each case, will scale accordingly.
\begin{figure}[ht]
    \centering
    \subfloat[\centering \label{fig:appendixAlphaScale}]{{\includegraphics[width=0.40\textwidth]{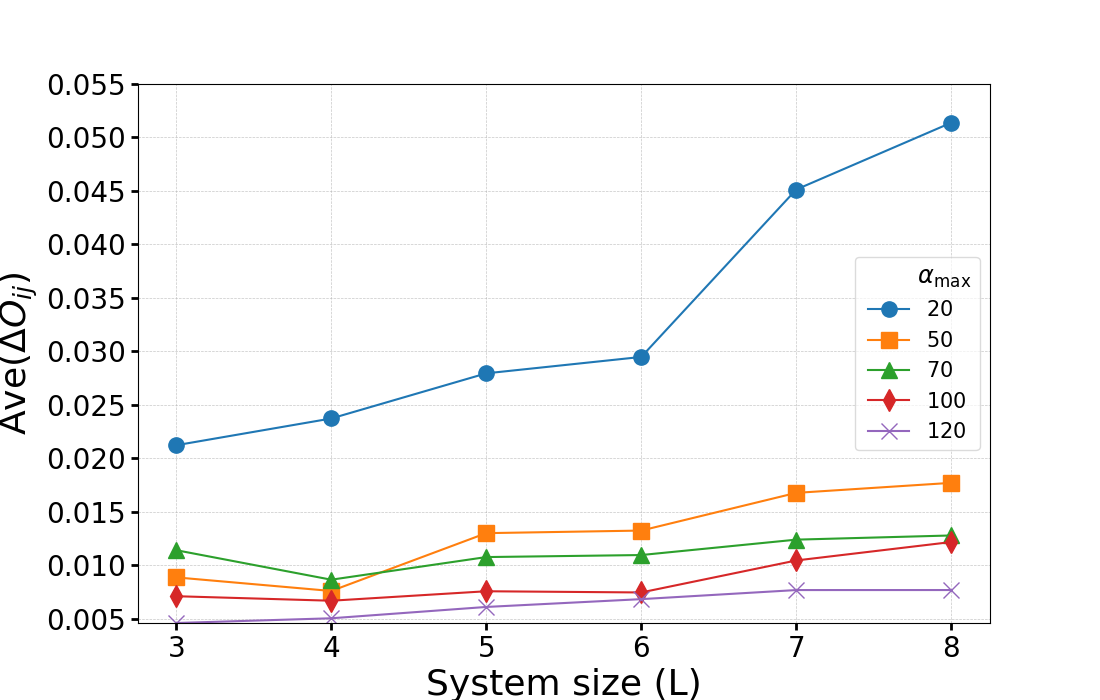} }}
    \qquad 
    \subfloat[\centering \label{fig:appendixVarScale}]{{\includegraphics[width=0.40\textwidth]{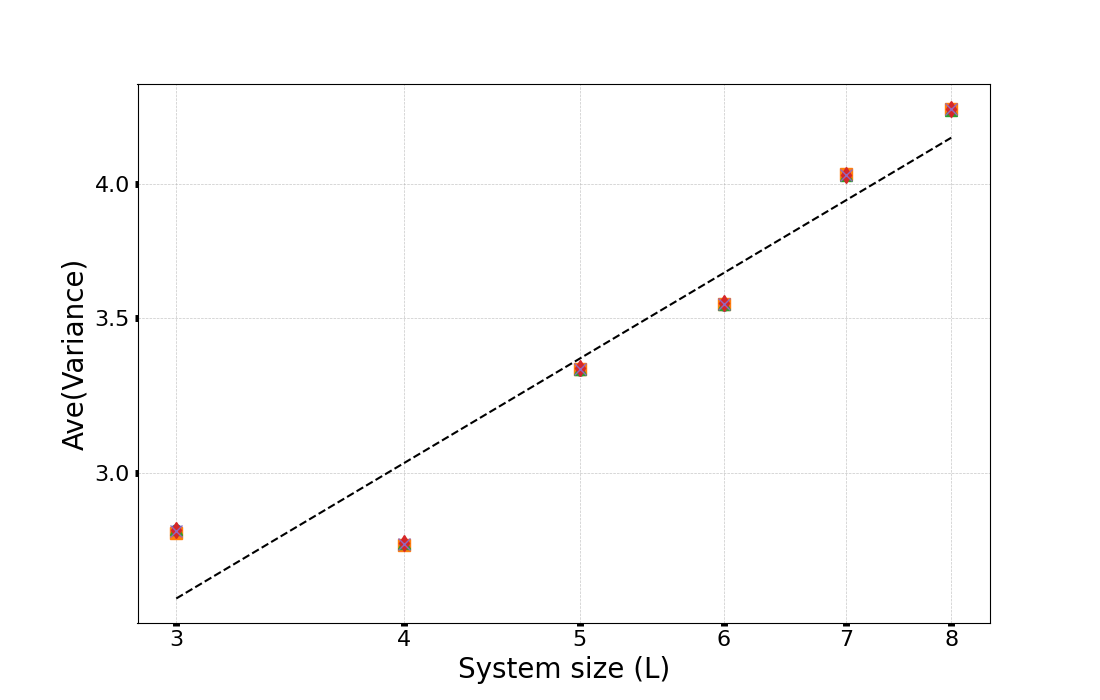}}}
    \caption{These figures show (a) the change in the average recovery error $\text{Ave}(\Delta O_{i,j})$ as a function of $L \leq 8$ for varying $\alpha_{\text{max}}$ and (b) a log-log plot illustrating the average variance of 2-point estimators for the same $L$, displaying a linear fit with a slope of $\sim 0.47$. Estimates are based on $N = 150000$ samples, chosen for high accuracy in estimating the true variance of the protocol. In practice, a significantly smaller sample size suffices for recovery, as we provide evidence for.}
    \label{fig:VarAlphScale2point}
\end{figure}
\begin{figure}[ht]
    \centering
    \subfloat[\centering \label{fig:appendixAlphaScalefour}]{{\includegraphics[width=0.40\textwidth]{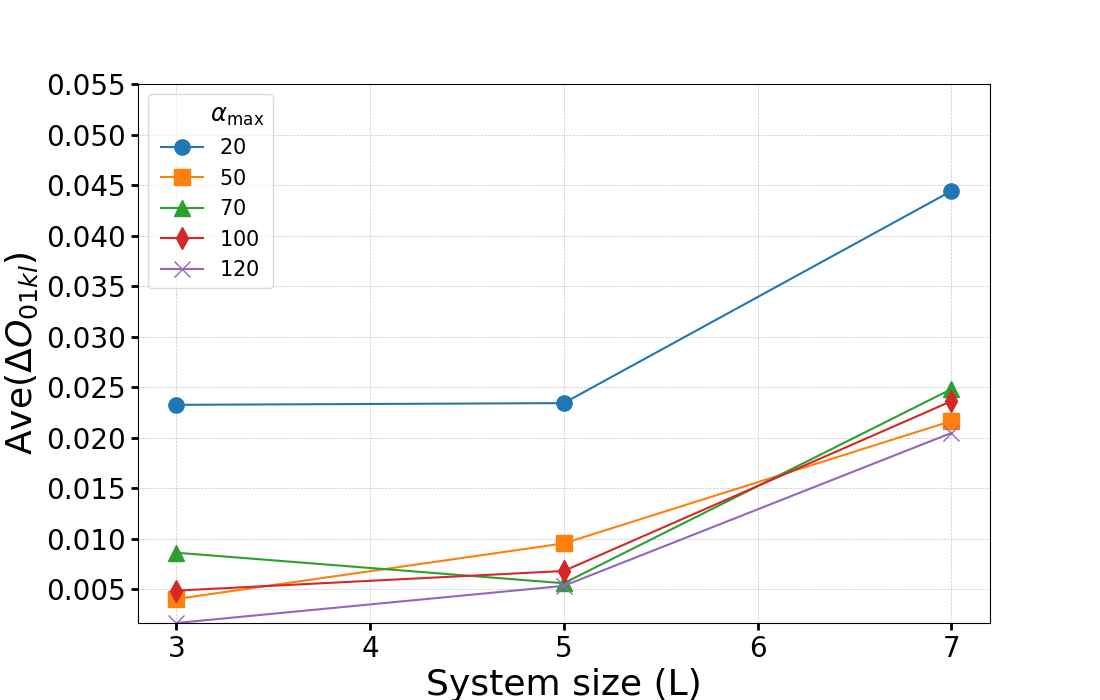} }}
    \qquad 
    \subfloat[\centering \label{fig:VarScalingfour}]{{\includegraphics[width=0.34\textwidth]{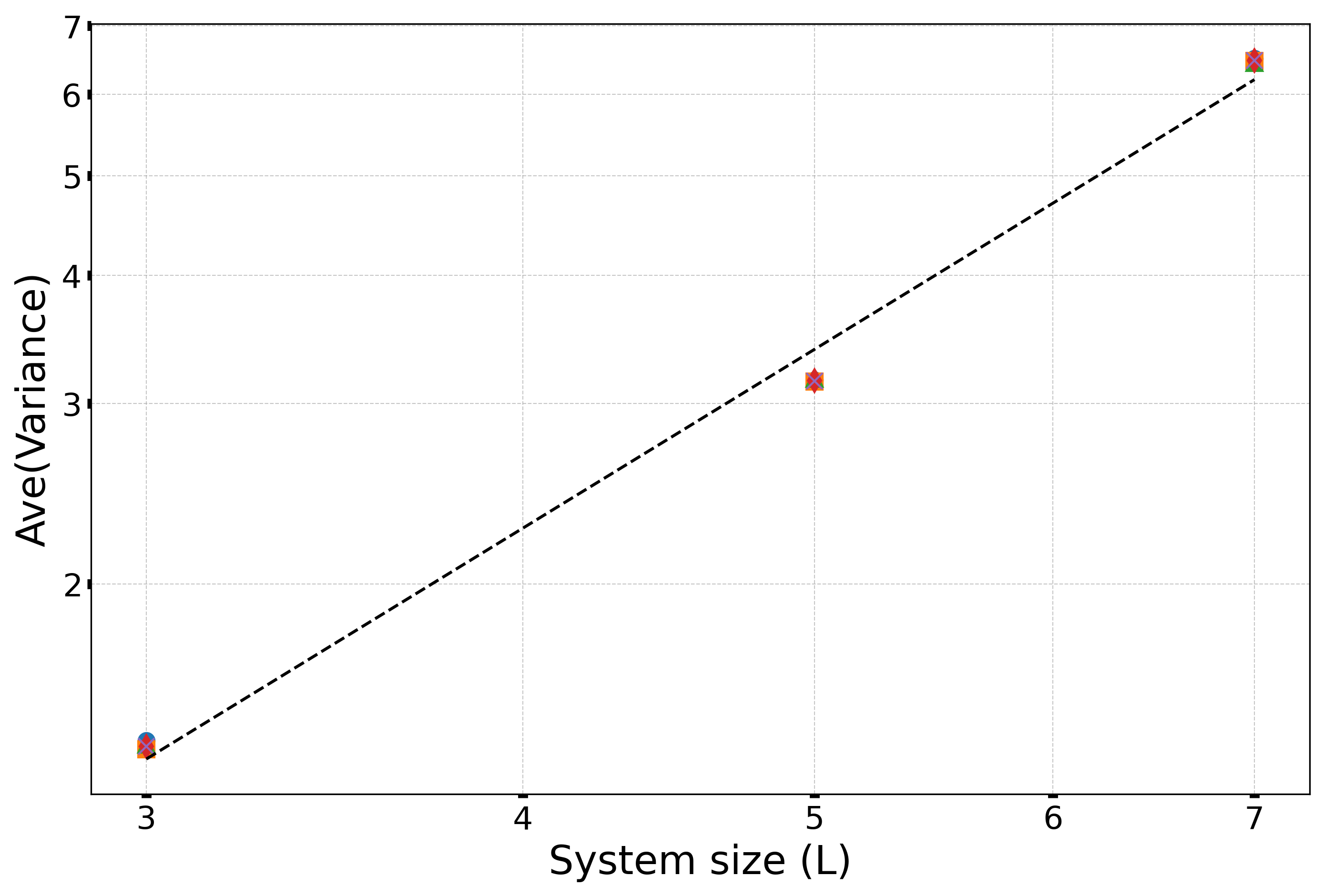}}}
    \caption{These figures demonstrate (a) the change in the average recovery error $\text{Ave}(\Delta O_{0, 1, k, l})$ as a function of $L = 3, 5, 7$ for varying $\alpha_{\text{max}}$ and (b) a log-log plot illustrating the average variance of the same subset of $4$-point estimators for the same $L$, displaying a linear fit with a slope of $\sim 1.79$. Estimates are based on $N = 150000$ samples, chosen for high accuracy in estimating the true variance of the protocol.}
    \label{fig:VarAlphScale4point}
\end{figure}

\begin{figure}[ht]
    \centering
    \subfloat[\centering \label{fig:avediff_alpha70}]{{\includegraphics[width=0.40\textwidth]{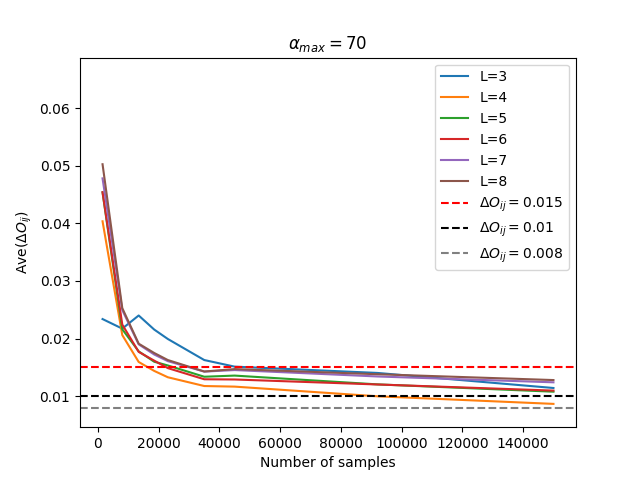} }}
    \qquad 
    \subfloat[\centering \label{fig:avediff_alpha120}]{{\includegraphics[width=0.40\textwidth]{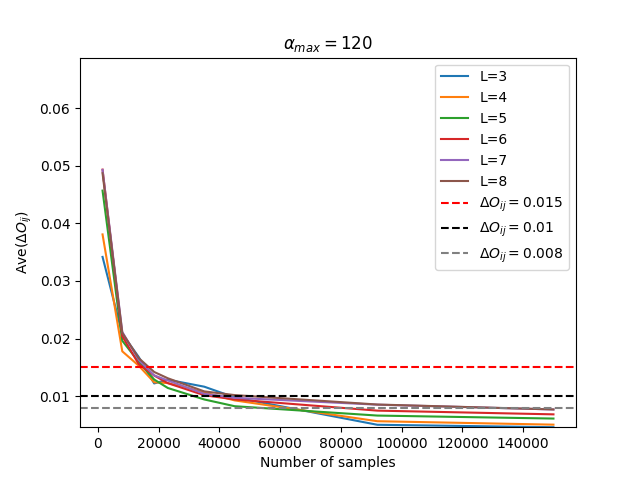}}}
    \caption{These figures demonstrate the change in $\text{Ave}(\Delta O_{i,j}) = L^{-2}\sum_{i,j}|\langle a_i^\dagger a_j \rangle - \langle a_i^\dagger a_j\rangle_{\text{est}}|$ with the number of samples used to recover the estimates $\langle a_i^\dagger a_j\rangle_{\text{est}}$, for $L\leq 8$, and for uniformly distributed $\alpha \sim [0, \alpha_{\text{max}}]$, where (a) $\alpha_{\text{max}} = 70$ and (b) $\alpha_{\text{max}} = 120$.}
    \label{fig:comparison_alpha70_alpha120}
\end{figure}
\begin{figure}[tb]
    \centering
    \subfloat[\centering \label{fig:avediff_alpha20}]{{\includegraphics[width=0.40\textwidth]{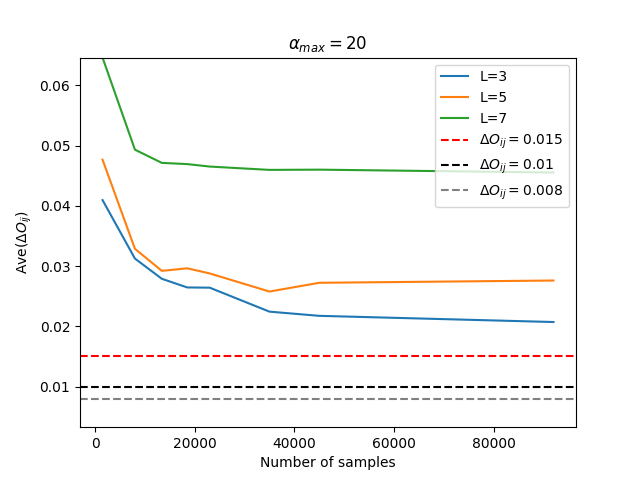} }}
    \qquad 
    \subfloat[\centering \label{fig:avediff_alpha1500}]{{\includegraphics[width=0.40\textwidth]{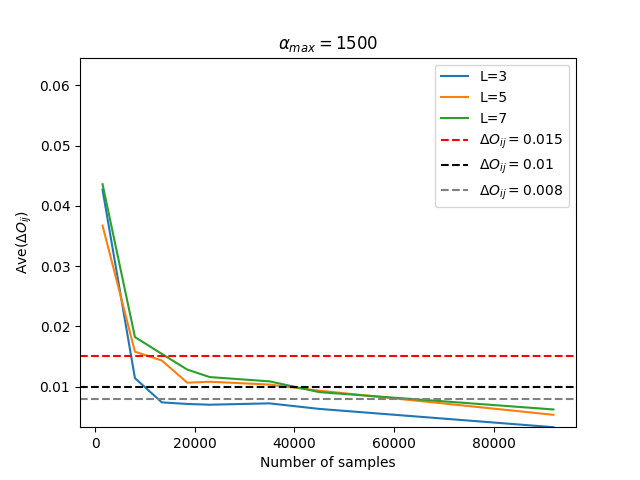}}}
    \caption{These figures demonstrate the change in $\text{Ave}(\Delta O_{i,j}) = L^{-2}\sum_{i,j}|\langle a_i^\dagger a_j \rangle - \langle a_i^\dagger a_j\rangle_{\text{est}}|$ with the number of samples used to recover the estimates $\langle a_i^\dagger a_j\rangle_{\text{est}}$, for $L = 3, 5, 7$, and for uniformly distributed $\alpha \sim [0, \alpha_{\text{max}}]$, where (a) $\alpha_{\text{max}} = 20$ and (b) $\alpha_{\text{max}} = 1500$.}
    \label{fig:comparison_alpha20_alpha1500}
\end{figure}

Below, in Figures~\ref{fig:comparison_alpha70_alpha120} and ~\ref{fig:comparison_alpha20_alpha1500} we look at the actual numbers required to recover all $2$-point correlation functions (on average) for a given error. In Fig.~\ref{fig:comparison_alpha70_alpha120} we compare the required number of samples when we set $\alpha_{\text{max}}=70$ and $\alpha_{\text{max}}=120$. We can see from Fig.~\ref{fig:avediff_alpha70} that the average errors struggle to get below $10^{-2}$ when $\alpha_{\text{max}} = 70$, whilst increasing the maximum by only a small amount to $\alpha_{\text{max}}=120$ gives us a much better chance of reaching lower errors within reasonable sample limits, as we can see in Fig.~\ref{fig:avediff_alpha120}, where for $N\sim 70000$ the averages errors are all below $10^{-2}$, reducing to $\mathcal{O}(10^{-3})$. 

Figure~\ref{fig:comparison_alpha20_alpha1500} then presents the same results, but for a sample of values $L=3, 5, 7$, for two extremal values of $\alpha_{\text{max}} \in \{20, 1500\}$ in order (i) to explore whether low (and therefore easily implemented) values of $\alpha$ can yield good estimates in the $2$-point sector and (ii) to determine whether unreasonably large (and therefore not so easily implemented) values of $\alpha$ show a significant improvement in recovering estimates (in the $2$-point sector) over more reasonable values of $\alpha$ (such as those in Fig.~\ref{fig:comparison_alpha70_alpha120}). Unsurprisingly, in Fig.~\ref{fig:avediff_alpha20} we see that $\alpha_{\text{max}}=20$ is not sufficient to yield accurate estimates, even for $L=3$, which matches our hypothesis in the earlier Sec.~\ref{sec:nn-hamiltonians_one} that $\alpha$ should be sufficiently large to approximate the uniform distribution over $CU_{\rm Sym}(L)$. However, interestingly  Fig.~\ref{fig:avediff_alpha1500} shows that even for $\alpha_{\text{max}} = 1500$ the number of samples required to get an average error below $10^{-2}$ is still $N = \mathcal{O}(10^4)$ which is the same as Fig.~\ref{fig:avediff_alpha120}, there being only a difference of $20000$ samples between them. This suggests that, at the observed system sizes, there is a not too significant trade-off between increasing $\alpha_{\text{max}}$ and the required number of samples $N$. Our conclusion is that smaller values of $\alpha = \mathcal{O}(10^{2})$ are sufficient to reach accurate estimates in times comparable to larger $\alpha$ and therefore there is not much to be gained from ramping up $\alpha$. These figures also support the information gained from the average variance plot in Figure~\ref{fig:appendixVarScale}.

In Figure~\ref{fig:fourpoint_comparison_alpha70_alpha120} below, we look at the samples required to recover the subset of $4$-point correlation functions (on average) for a given error. We compare the required number of samples when we set $\alpha_{\text{max}}=70$ and $\alpha_{\text{max}}=120$. We can see from Fig.~\ref{fig:fourpoint_avediff_alpha70} that the average errors only get below $10^{-2}$ for $L=3, 5$, when $\alpha_{\text{max}} = 70$. Increasing the maximum by only a small amount to $\alpha_{\text{max}}=120$ reduces the errors for these system sizes, and as we can see in Fig.~\ref{fig:avediff_alpha120} for $N\sim 30000$ the average errors for $L = 3, 5$ are below $10^{-2}$ reducing to $\mathcal{O}(10^{-3})$. Unfortunately, for $L=7$, the maximum system size we have explored for the $4$-point case, $\alpha_{\text{max}} = 120$ is not large enough for the average error to be $< 10^{-2}$. It is, however, worth noting that since we are not looking at the canonical average the trends do not reflect the average behaviour of all the $4$-point correlation functions. However, they still demonstrate that it is feasible to get estimates for this subset of $4$-point correlation functions with a reasonable number of samples that is comparable to the $2$-point case. Since all samples collected in the shadow estimation scheme can be re-used, this is a positive result. 

All of the figures so far have shown averages over all $2$-point correlation functions, or our chosen subset ($ijkl = 01kl$) of $4$-point correlation functions. Our final Figure~\ref{fig:individualobservables} below indicates that the behaviour of individual observable estimates follows a similar pattern to the average case. Interestingly, this is even evident for the $4$-point case when comparing to the behaviour of individual observables outside the subset that we averaged over. However, we still require more data for the $4$-point case to be confident in the behavioural trends we currently observe.
\begin{figure}[ht]
    \centering
    \subfloat[\centering \label{fig:fourpoint_avediff_alpha70}]{{\includegraphics[width=0.40\textwidth]{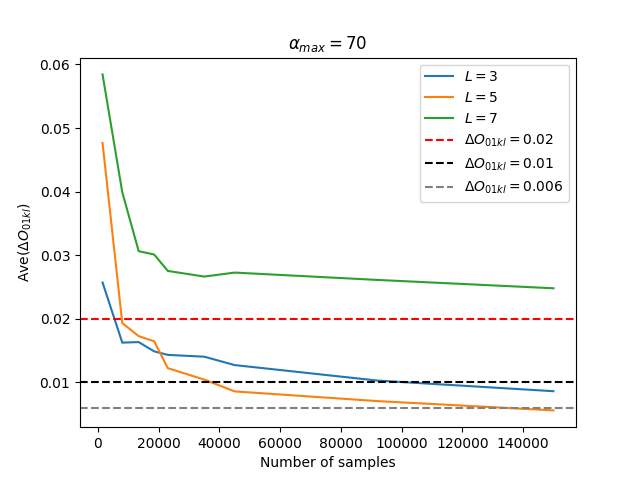} }}
    \qquad 
    \subfloat[\centering \label{fig:fourpoint_avediff_alpha120}]{{\includegraphics[width=0.40\textwidth]{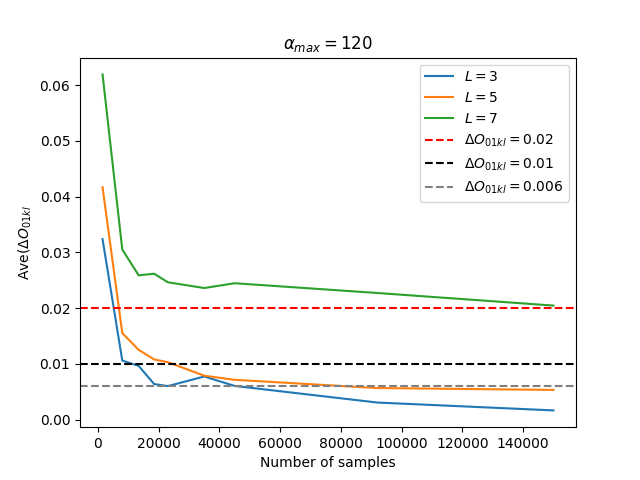}}}
    \caption{These figures demonstrate the change in $\text{Ave}(\Delta O_{0, 1, k, l}) = L^{-2}\sum_{k,l}|\langle a_0^\dagger a_1^\dagger a_k a_l\rangle - \langle a_0^\dagger a_1^\dagger a_k a_l \rangle_{\text{est}}|$ with the number of samples used to recover the estimate $\langle a_0^\dagger a_1^\dagger a_k a_l\rangle_{\text{est}}$, for $L = 3, 5, 7$, and for uniformly distributed $\alpha \sim [0, \alpha_{\text{max}}]$, where (a) $\alpha_{\text{max}} = 70$ and (b) $\alpha_{\text{max}} = 120$.}
    \label{fig:fourpoint_comparison_alpha70_alpha120}
\end{figure}

\begin{figure}[ht]
    \centering
    \subfloat[\centering]{{\includegraphics[width=0.40\textwidth]{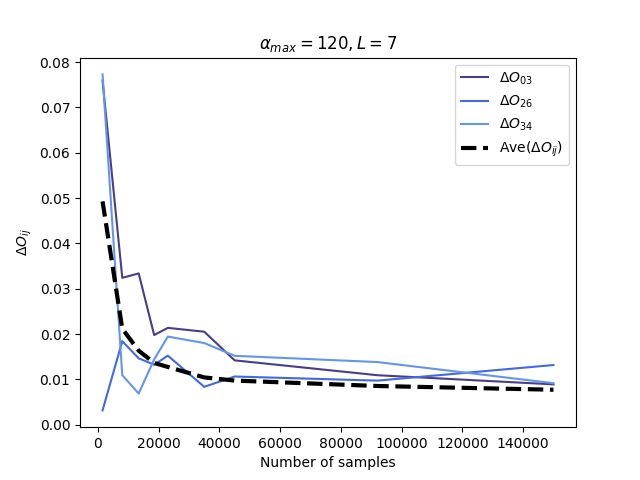} }}
    \qquad 
    \subfloat[\centering]{{\includegraphics[width=0.40\textwidth]{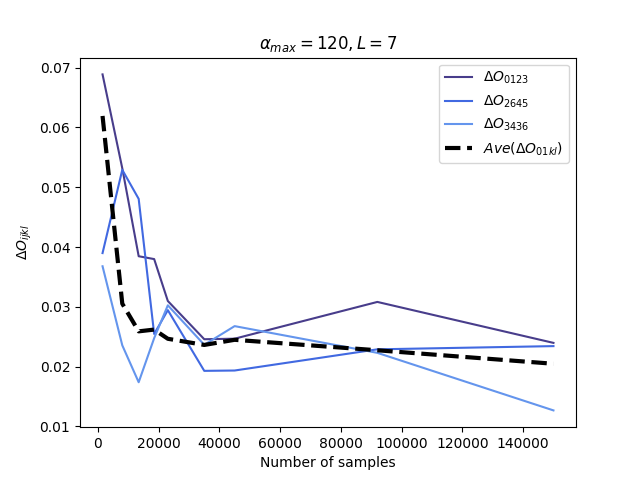}}}
    \caption{These figures show the change in the recovery error for the (a) $2$-point, $\Delta O_{i,j} = |\langle a_i^\dagger a_j\rangle - \langle a_i^\dagger a_j\rangle_{\rm est}|$, and (b) $4$-point case, $\Delta O_{i,j,k,l} = |\langle a_i^\dagger a_j^\dagger a_k a_l\rangle - \langle a_i^\dagger a_j^\dagger a_k a_l\rangle_{\rm est}|$, with the number of samples used to recover the estimates $\langle a_i^\dagger a_j\rangle_{\rm est}$ and $\langle a_i^\dagger a_j^\dagger a_k a_l\rangle_{\rm est}$ for $L=7$, for uniformly distributed $\alpha \sim [0, \alpha_{\text{max}}]$ where $\alpha_{\text{max}} = 120$. The average recovery error (a) over all $2$-point correlation functions and (b) over the subset of $4$-point correlation functions where $ijkl = 01kl$, is also plotted. }
    \label{fig:individualobservables}
\end{figure}
\end{document}